\pgfplotsset{compat=newest}
\providecommand{\email}[1]{\href{mailto:#1}{\nolinkurl{#1}\xspace}}
\newcommand{\TV}{\text{\rm TV}}
\newcommand{\Kol}{\text{\rm Kol}}
\newcommand{\E}{\mathbf{E}}
\renewcommand{\Pr}{\mathbf{Pr}}
\DeclareMathOperator*{\argmax}{arg\,max}
\newcommand{\OPT}{\text{\rm OPT}}
\DeclareMathOperator*{\lin}{LIN}
\newcommand{\principalUtility}{u_{\mathrm{p}}}
\newcommand{\principalUtilityE}{\tilde{u}_{\mathrm{p}}}
\newcommand{\agentUtility}{u_{\mathrm{a}}}
\newcommand{\calA}{\mathcal{A}}
\newcommand{\calB}{\mathcal{B}}
\newtheorem{proposition}{Proposition}[section]
\newtheorem{lemma}[proposition]{Lemma}
\newtheorem{theorem}[proposition]{Theorem}
\newtheorem*{theorem*}{Theorem}
\newtheorem{corollary}[proposition]{Corollary}
\newtheorem{definition}{Definition}
\newtheorem{assumption}{Assumption}
\Crefname{equation}{Eq.}{Eqs.}
\newmdenv[skipabove=2mm, skipbelow=2mm, backgroundcolor=black!10]{boxedtext*}
\newmdenv[skipabove=2mm, skipbelow=2mm, rightline=true, leftline=true]{boxedtext}
\title{Are Bounded Contracts Learnable and Approximately Optimal?}
\author{Yurong Chen \\Peking University \\\email{chenyurong@pku.edu.cn} \and  Zhaohua Chen \\Peking University \\ \email{chenzhaohua@pku.edu.cn} \and Xiaotie Deng\\ Peking University \\\email{xiaotie@pku.edu.cn} \and Zhiyi Huang \\The University of Hong Kong\\ \email{zhiyi@cs.hku.hk}}
\date{}
\begin{document}

\begin{titlepage}
\thispagestyle{empty}
\maketitle
\begin{abstract}
    \thispagestyle{empty}
    This paper considers the hidden-action model of the principal-agent problem, in which a principal incentivizes an agent to work on a project using a contract.
    We investigate whether contracts with bounded payments are learnable and approximately optimal.
    Our main results are two learning algorithms that can find a nearly optimal bounded contract using a polynomial number of queries, under two standard assumptions in the literature:
    a costlier action for the agent leads to a better outcome distribution for the principal, and the agent's cost/effort has diminishing returns.
    Our polynomial query complexity upper bound shows that standard assumptions are sufficient for achieving an exponential improvement upon the known lower bound for general instances.
    Unlike the existing algorithms which relied on discretizing the contract space, our algorithms directly learn the underlying outcome distributions. 
    As for the approximate optimality of bounded contracts, we find that they could be far from optimal in terms of multiplicative or additive approximation, but satisfy a notion of mixed approximation.
\end{abstract}
 \end{titlepage}

\thispagestyle{empty}
\tableofcontents
\newpage
\setcounter{page}{1}

\section{Introduction}
\label{sec:introduction}

Contract theory is considered one of the pillars of economic theory~\cite{dutting2019simple}.
A fundamental model of contract design is the \emph{hidden-action} or \emph{moral hazard} model~\cite{mas1995microeconomic}.
Here, we consider a principal (e.g., a firm) and an agent (e.g., a worker).
The principal delegates a project to the agent, who takes an action that determines the distribution of the project's outcome.
The principal can only observe the outcome but not the agent's action.
Hence, it incentivizes the agent through a \emph{contract}, which specifies the payment to the agent for each outcome.
The principal aims to maximize the expectation of its value for the outcome minus the payment to the agent.
This classical model has been extensively studied~\cite{innes1990limited, grossman1992analysis, baker2000use} and enjoys a wide range of applications, including health insurance~\cite{lakdawalla2013health}, crowdsourcing~\cite{ho2015incentivizing}, and smart contracts~\cite{sheth2020blockchain}.

In the past few years, contract theory has been re-examined through the algorithmic lens in the algorithmic game theory literature (e.g., \cite{dutting2019simple, guruganesh2021contracts, dutting2021the}).
This paper focuses on learning a nearly optimal contract from incomplete information on a contract design instance.
The above classical model assumes that we have full information on the agent's set of actions and the corresponding outcome distributions.
In reality, these distributions are merely a modeling ingredient for capturing all the (incomplete) information that we have about the relationship between the agent's action and the project's outcome;
hence, we would never have exact knowledge of them.

Several recent works investigated learning the optimal contract in the bandit model~\cite{ho2016adaptive, cohen2022learning, zhu2023the}.\footnote{The model in \cite{ho2016adaptive} and \cite{zhu2023the} further lets the agent draw a type from an unknown distribution. Nonetheless, the hardness by \cite{zhu2023the} arises from the fixed-type case, which we focus on in this paper. It is an interesting future direction to extend our results to consider a distribution of the agent's types.}
In this model, the principal repeatedly chooses a contract for the same contract design instance for $T$ times, initially with no information about the outcome distributions.
In each round, the principal selects a contract based on the history of all previous rounds. Then, the agent takes an action that maximizes its utility under this contract, but the principal cannot observe the action.
Finally, a project outcome, observable to the principal, is drawn from the distribution that corresponds to the agent's action.
The goal is to minimize the \emph{regret}, which is the gap between the principal's cumulative utility and what could have been achieved if the optimal contract had been used from the beginning.
A \emph{no-regret} learning algorithm ensures that the regret is at most $o(T)$.

\cite{ho2016adaptive} studied \emph{bounded} and \emph{monotone} contracts and presented a no-regret algorithm for this class of contracts. 
Normalizing the principal's values for the project outcomes to be at most $1$, boundedness refers to the condition that the payment to the agent is also at most $1$.
Monotonicity says that an outcome with a higher principal's value will have a higher payment.
\cite{cohen2022learning} extended the no-regret learning result to risk-averse agents, focusing on bounded and \emph{smooth-monotone} contracts.
Smooth-monotonicity is stronger than monotonicity,
as it further 
requires
that an outcome with a higher principal's value 
also has
 a higher principal's utility, which equals the difference between the value and the payment.
Recently, \cite{zhu2023the} gave a no-regret learning algorithm that does not require monotonicity or smooth-monotonicity.
However, they still assume boundedness.
They also provided a lower bound for the best achievable regret.

\subsection{Our Results}

This work is motivated by two research gaps that have not been addressed in existing studies on learning optimal bounded contracts. 
First, the regret upper bounds by existing no-regret algorithms are asymptotically very close to $T$.
In fact, they are larger than $T$ unless $T$ is exponential in $m$, the number of possible project outcomes.
For example, the regret upper bound by \cite{zhu2023the} was $\tilde{O}(\sqrt{m}\cdot T^{\nicefrac{2m}{2m+1}})$.
If this is smaller than $T$, then $T$ must be exponentially large so that $T^{\nicefrac{1}{2m+1}}$ dominates the logarithmic factors in the $\tilde{O}$ notation.
Although their lower bound showed that having an exponentially large $T$ is unavoidable for an arbitrary contract design problem, it points to an obvious open question:
\begin{quote}
    \emph{Could we get $O(T^c)$ regret for some constant $c < 1$ under mild assumptions that are already considered in the literature?}
\end{quote}

Second, all previous studies have focused on bounded contracts due to the difficulty in learning unbounded contracts.
However, they have not analyzed the approximate optimality of this specific class of contracts.
Therefore, it is natural to ask:

\begin{quote}
    \emph{How much is left on the table when we restrict ourselves to bounded contracts?}
\end{quote}

\subsubsection*{Learnability of Bounded Contracts}

We consider two assumptions in the literature, known as First-Order Stochastic Dominance (FOSD) and Concavity of Distribution Function Property (CDFP) (e.g., \cite{laffont2009theory, szalay2009contracts, chade2014wealth}).
The former assumes that a costlier action 
leads to a better outcome distribution, in terms of first-order stochastic dominance.
The latter corresponds to having diminishing returns from the agent's costs, i.e., its efforts.

Under these two assumptions, we study the query complexity of learning an approximately optimal bounded contract in two models.
The \emph{action query model} (\Cref{sec:action-query}) allows the principal to sample from any action's outcome distribution.
This corresponds to the historical data of previous projects for which the agent's actions, and thus, the action-outcome pairs are observable.
The \emph{contract query model} (\Cref{sec:contract-query}) lets the principal specify a contract, and then provides an outcome sampled from the distribution of the agent's best-response action.
This is closely related to learning in the bandit model.
Indeed, a learning algorithm with polynomial query complexity in the contract query model implies an $O(T^c)$ no-regret learning algorithm in the bandit model.\footnote{Suppose that a learning algorithm $\calA$ uses $n(\varepsilon) = O(\varepsilon^{-c})$ queries to learn a contract that is optimal up to an $\varepsilon$ additive factor. Consider a corresponding bandit algorithm $\calB$ that divides the rounds into stages $i \ge 1$. In each stage $i$, it first runs algorithm $\calA$ for $n(2^{-i})$ rounds to learn a contract; 
then, it runs the learned contract for $n(2^{-i-1}) \cdot 2^i = O(2^{i(c+1)})$ rounds. This is a no-regret algorithm with regret bound $O(T^{\nicefrac{c}{c+1}})$.}

Our main results are two algorithms for learning an approximately optimal bounded contract using \emph{a polynomial number of queries} (\Cref{thm: action query,thm: contract query}) in these two models.
The query complexity results build on several technical ingredients, which we outline below.

\subsubsection*{Summary of Techniques}

The first ingredient is the representation of the outcome distributions by their complementary cumulative distribution functions (CDF).
It provides a more intuitive interpretation of the assumptions since FOSD and CDFP are equivalent to the monotonicity and concavity of the complementary CDFs.
Monotone and concave functions are structurally simpler than arbitrary functions and thus
easier to learn.
Indeed, we will exploit these structures to learn an approximation of the complementary CDFs and thus an approximation of the outcome distributions.

Our approach of learning an approximately optimal contract by learning an approximation of the underlying outcome distributions is in spirit related to the approach of learning optimal auctions through learning the bidders' value distributions~\cite{chen2023strong, guo2019settling, guo2021generalizing, jin2023learning}.
By contrast, existing learning algorithms~\cite{ho2016adaptive, cohen2022learning, zhu2023the} for contracts adopted the statistical learning theory approach and relied on discretizing the contract space into as few representative contracts as possible. 

Our second ingredient is a \emph{robustification procedure} for converting the optimal bounded contract of an instance into a robustified version that works well on nearby instances, i.e., those that only differ in the outcome distributions within a small total variation distance or Kolmogorov distance.
This ingredient is sufficient for getting polynomial query complexity in the action query model.

The contract query model is more challenging, because we cannot specify an action to sample an outcome from, and do not even know the set of possibly infinitely many actions.
Our next ingredient is a class of \emph{threshold contracts} that we use for extracting information about the complementary CDFs in the contract query model.
A threshold contract is defined by a threshold outcome $\omega$ and a payment amount $r$:
the agent gets paid by $r$ if the project outcome is at least as good as $\omega$ in terms of the principal's value.
We observe that querying a threshold contract naturally corresponds to querying a subgradient oracle of the complementary CDF's inverse for the threshold outcome.
We further design an algorithm for learning monotone and concave functions using polynomially many subgradient queries.
These queries provide enough information for approximating the complementary CDFs, except for those actions with costs close to $0$.

Our final ingredient, given the approximate complementary CDFs from the previous step, is a query strategy that either finds a good enough contract, or learns extra information about the instance because the agent's best-response action is a new action whose outcome distribution is bounded away from all previously observed ones.
In other words, we are in a win-win situation.
We further prove that there could be at most polynomially many such new actions under FOSD and CDFP.
Putting the ingredients together shows the polynomial query complexity bound.

\subsubsection*{Approximate Optimality of Bounded Contracts}

Our second set of results addresses the second question.
Concretely, we show that bounded contracts cannot provide a constant multiplicative approximation compared to the optimal (general) contract (\Cref{thm:hardness-multiplicative}), nor can they offer an additive approximation for an additive factor smaller than $\nicefrac{1}{4}$ (\Cref{thm:hardness-additive}).
These inapproximability results hold even if the instance satisfies the assumptions we will make for our learning results.
Further, we complement these negative findings by observing that the bounded contracts do provide a mixed approximation, which loses both a multiplicative factor and an additive factor when compared to the optimal (general) contract.
The mixed approximation is asymptotically tight, in terms of the tradeoff between the multiplicative and additive factors.

\subsection{Further Related Work}

\paragraph{Learning Optimal Contract.}
\cite{duetting2023optimal} considered linear contracts, a well-studied subclass of bounded contracts, in a noise-free model where the principal can noiselessly obtain its expected utility under a linear contract through a single query.
They gave an algorithm with $O(\log\log T)$ regret in this model.
On the empirical side, \cite{wang2023deep} considered automated contract design using deep learning. 
Finally, concurrent to this paper, \cite{bacchiocchi2023learning} presents an algorithm with $O(T^{\nicefrac{4}{5}})$ regret when the agent's action space is small.
Their results and ours are incomparable.
They do not assume FOSD and CDFP but only allow a constant number of actions.
In contrast, our results hold even if there are infinitely many actions, under the FOSD and CDFP assumptions. \cite{dutting2021the} also discussed the query complexity under the action query model, when the outcome space is exponentially large. Their algorithm outputs a near-optimal contract that is approximately incentive-compatible. Compared to theirs, our algorithms return a near-optimal contract with a strict incentive compatibility guarantee.

\paragraph{Simple and Approximately Optimal Contract.}
\cite{dutting2019simple} considered linear contracts, and showed multiplicative approximation ratios that depend on the number of actions, the max-to-min ratio of principal values, or the max-to-min ratio of the agent's costs.
\cite{castiglioni2021bayesian, guruganesh2021contracts} studied contract design when the agent's action and type are \emph{both} hidden.
They showed that finding the optimal contract is computationally intractable, and analyzed the approximation guarantees of linear contracts. 
Since linear contracts are a subclass of bounded contracts, the former's approximation guarantees also hold for the latter.
Indeed, we will adopt the mixed approximation guarantee by \cite{castiglioni2021bayesian}.
However, the negative results for linear contracts cannot directly apply to bounded contracts.
In \cite{alon2021contracts, alon2023bayesian}, the agent has single-dimensional type space. \cite{alon2021contracts} show that the optimal contract can be computed with a constant number of actions. \cite{alon2023bayesian} show the near optimality of linear contracts when there is sufficient uncertainty in the setting. In our work, we focus on instances with fixed types. We consider general bounded contracts and allow the agent's action space to be infinite.

\paragraph{Combinatorial Contracts}
\cite{babaioff2012combinatorial} studies the structural properties of optimal contracts in multi-agent settings. \cite{dutting2023multi} studies the tractability of computing optimal contracts when the principal's reward function on the set of participating agents satisfies some combinatorial properties. \cite{dutting2022combinatorial, dutting2023combinatorial} study single-principal single-agent cases, where the agent's action space is exponentially large. In our work, under the FOSD and CDFP assumptions, we allow the agent even to have a continuum of actions. \cite{vuong2023supermodular, ezra2023on} adding to the above two settings with further complexity results. \cite{dutting2021the} also considers the single-agent setting, where the outcome space is exponentially large in the input size.

 \section{Preliminaries} \label{sec:prelim}

\paragraph{Notations.} We use $\log$ and $\ln$ for logarithms in base $2$ and natural base, respectively.
Given any two distributions $D$ and $D'$ and any $0 \le \lambda \le 1$, we write $\lambda D + (1-\lambda) D'$ for the distribution that samples from $D$ with probability $\lambda$, and from $D'$ otherwise.

\subsection{Basic Model}

Consider the principal-agent problem with one principal and one agent.
The principal has a project which has $m$ possible outcomes $\Omega = \{0, 1, \dots, m-1\}$, and has a nonnegative value $v_\omega$ for each outcome $\omega \in \Omega$.
We assume that the values are normalized to be between $0$ and $1$ and sorted in ascending order, i.e., $0 \le v_0 \le v_1 \le \dots \le v_{m-1} \le 1$. 

The agent chooses from a set of actions $A$, which we will also refer to as its action space.
Each action $a \in A$ costs $c_a \ge 0$ for the agent and leads to a distribution of outcomes $D_a$.
Let $f(\omega | a)$ denote the probability of having outcome $\omega$ when the agent takes action $a$.
We assume that there is a null action, denoted as $0$, for which $c_0 = 0$ and $D_0$ is a point mass at outcome $0$, i.e., $f(0|0) = 1$.
Finally, let $c_{\max} = \sup_{a \in A} c_a$ denote the maximum cost.
 
We focus on the \emph{hidden-action} model (a.k.a., the \emph{moral hazard} model) of contract design. 
In this model, the principal cannot observe the agent's action, but only the induced outcome.
The principal first commits to a contract, represented by a payment vector $p = (p_\omega)_{\omega \in \Omega}$ that specifies the payment to the agent for each outcome.
We consider the \emph{limited liability} model, which means that $p_\omega \ge 0$ for every $\omega \in \Omega$.

Given a contract $p$, the agent chooses an action $a^*(p)$ that maximizes its expected utility.
The agent's utility equals the expected payment it receives minus the cost of its action.
Hence:
\[
    a^*(p) \in \argmax_{a \in A} ~ \left\{ ~ \sum_{\omega \in \Omega} f(\omega|a) \cdot p_\omega - c_a ~ \right\}
    ~.
\]
We assume that the action space and outcome distributions are such that the above best-response action is well-defined.
We say that contract $p$ incentivizes action $a^*(p)$.
Following the standard practice in the literature, we assume that the agent breaks ties \emph{in favor of the principal} when there is more than one action that maximizes its utility.

Correspondingly, the agent's utility under a contract $p$, given the above best-response, is:
\[
    \agentUtility(p) \coloneqq \max_{a \in A} ~ \left\{ ~ \sum_{\omega \in \Omega} f(\omega|a) \cdot p_\omega - c_a ~ \right\} = \sum_{\omega \in \Omega} f(\omega|a^*(p)) \cdot p_\omega - c_{a^*(p)}
    ~.
\]
In other words, the model guarantees \emph{Incentive Compatibility} (IC) for the agent.
Further, by our assumption of the existence of a null action $0$, IC implies \emph{Individual Rationality} (IR) for the agent, i.e., $\agentUtility(p) \ge 0$, because taking the null action leads to a non-negative utility for the agent. 

The principal's expected utility for a contract $p$ equals its expected value for the outcome minus the expected payment to the agent, when the agent takes the best-response action, i.e.:
\[
    \principalUtility(p) \coloneqq \sum_{\omega \in \Omega} f( \omega|a^*(p) ) \left( v_\omega - p_\omega \right)
    ~.
\]
The principal wants to design a contract to maximize its utility.
Let $\OPT$ denote the principal's optimal utility, i.e.:
\[
    \OPT \coloneqq \max_p ~ \principalUtility(p)
    ~.
\]

Recall that the principal's values for the outcomes are normalized to be between $0$ and $1$.
As a result, we may assume without loss of generality that the costs of the agent's actions are upper bounded by $1$, i.e., $c_{\max} \le 1$.
This is because the principal would not choose a contract $p$ that incentivizes an action $a^*(p)$ with cost $c_{a^*(p)} > 1$.
More concretely, the principal would need to make an expected payment strictly greater than $1$ in order to incentivize the action, due to the agent's IR property.
However, doing so would yield a negative utility for the principal since its expected value for the outcome is at most $1$.
This is dominated by choosing a null contract with $p_\omega = 0$ for all outcomes $\omega \in \Omega$.

\subsection{Bounded Contracts}
We say a contract $p = (p_\omega)_{\omega \in \Omega}$ is \emph{$H$-bounded} for some $H \ge 1$ if the payments satisfy $p_\omega \le H$ for all outcomes $\omega \in \Omega$.
When $H = 1$, we simply say that the contract $p$ is \emph{bounded}.
Although we normalize the values, and thus, may assume without loss of generality that the costs of the agents' actions are bounded, we stress that the optimal contract may not be bounded in general, which we shall see later in \Cref{sec:bounded-approximation}. 
This is precisely the starting point of this paper.
Let $\OPT_H$ denote the principal's maximum utility achievable by $H$-bounded contracts, i.e.:
\[
    \OPT_H \coloneqq \max_{p : ~ \forall \omega \in \Omega, ~ p_\omega \le H} ~ \principalUtility(p)
    ~.
\]

We say that $H$-bounded contracts give an  $\alpha$-multiplicative-approximation if:
\[
    \OPT_H \ge \frac{1}{\alpha} \cdot \OPT
    ~.
\]

Similarly, we say that $H$-bounded contracts give a $\beta$-additive-approximation if:
\[
    \OPT_H \ge \OPT - \beta
    ~.
\]

We remark that the optimal ($H$-bounded) contract can be computed in polynomial time if the set of actions is finite, and we have full information on the contract design instance, including the set of outcomes, the principal's values for the outcomes, the agent's set of actions, their costs, and the outcome distribution of each action.
We sketch the algorithm in \Cref{app:computation} for completeness.

\subsection{Query Complexity Models}

Next, we consider models in which the outcome distributions associated with the actions and even the set of actions are unknown.
The principal can only obtain information about them through historical data and/or repeated trial-and-error, modeled as access to the following two oracles.

\paragraph{Action Query Model.}
This model assumes that the set of actions $A$ is \emph{finite} and their costs are known, but the outcome distributions $D_a$'s are not.
The principal may acquire information about the outcome distributions through an \emph{action oracle}.  
In each query, the principal specifies an action $a \in A$, and then receives an I.I.D. sample $\omega \sim D_a$ from the oracle.

\paragraph{Contract Query Model.}
In this model, the set of  (potentially infinitely many) actions $A$ and the outcome distributions $D_a$'s are both unknown.
The principal may acquire information through a \emph{contract oracle}. 
More precisely, the principal specifies a contract $p$, and receives an I.I.D. sample $\omega \sim D_{a^*(p)}$, where recall that $a^*(p)$ is the agent's best-response action that maximizes the agent's utility under contract $p$.
Note that, however, the action $a^*(p)$ is unobservable to the principal.

\paragraph{Learning Optimal (Bounded) Contracts.}
A learning algorithm in one of the above models interacts with the corresponding oracle for a finite number of rounds.
In each round, it adaptively chooses a query based on the observed outcomes from all previous queries.
At the end, the learning algorithm outputs an ($H$-bounded) contract $p$.

The \emph{query complexity} for learning the optimal ($H$-bounded) contract in one of the above models is the minimum number of queries to the corresponding oracle that is sufficient for ensuring that with probability at least $1-\delta$, the principal's expected utility $\principalUtility(p)$ for the output contract $p$ is at least $\OPT - \varepsilon$ (respectively, $\OPT_H - \varepsilon$) for $0 < \varepsilon < 1/2$.
Here, the probability is over the randomness of both the learning algorithm and the oracle.

\subsection{Assumptions and Alternative Representation of an Instance}

Finally, we introduce two assumptions from the existing literature on (algorithmic) contract design. 
In \Cref{sec:learnability}, we prove polynomial query complexity bounds under these assumptions, which are necessary because of an exponential lower bound by \cite{zhu2023the}.
Subsequently, in \Cref{sec:bounded-approximation}, our inapproximability results hold even when the instance satisfies these assumptions.
The (mixed) approximation guarantee, by contrast, does not rely on them.

\begin{definition}
    An outcome distribution $D$ \emph{first-order stochastically dominates} another outcome distribution $D'$, denoted as $D \succeq D'$, if for any outcome $\omega \in \Omega$, the probability of sampling an outcome from $D$ that is at least as good as $\omega$ is greater than or equal to the counterpart for $D'$, i.e.:
    \[
        \Pr_{\omega' \sim D} \left[ \omega' \ge \omega \right] \ge \Pr_{\omega' \sim D'} \left[ \omega' \ge \omega \right]
        ~.
    \]
\end{definition}

\begin{assumption}[First-Order Stochastic Dominance (FOSD)]
    \label{assumption:fosd}
    For any actions $a, a' \in A$ such that $c_a \ge c_{a'}$, we have $D_a \succeq D_{a'}$.
\end{assumption}

In other words, a costlier action would lead to a better outcome distribution under FOSD.
We remark that FOSD is weaker than another assumption called \emph{Monotone Likelihood Ratio Property (MLRP)}\footnote{MLRP assumes that for any actions $a, a' \in A$ with $c_a \le c_{a'}$, the likelihood ratio $\frac{f(\omega | a')}{f(\omega | a)}$ is non-decreasing in $\omega$.}
in the literature.

Further, two actions $a$ and $a'$ with the same costs $c_a = c_{a'}$ have the same outcome distributions under FOSD, because the inequality in the definition of FOSD holds in both directions for $a$ and $a'$ and thus must hold with equality.
Hence, we may assume without loss of generality that the actions are a set of distinct costs, and will use $f(\omega|a)$ and $f(\omega|c_a)$ interchangeably under FOSD.

\begin{assumption}[Concavity of Distribution Function Property (CDFP)]
    For any actions $a, a', a'' \in A$ such that $c_{a''} = \lambda c_a + (1-\lambda) c_{a'}$ for some $0 \le \lambda \le 1$, we have $D_{a''} \succeq \lambda D_a + (1-\lambda) D_{a'}$.
\end{assumption}

That is, interpolating between two actions would not give a better outcome distribution if there was already an action with the same (expected) cost for the agent.
It also corresponds to having diminishing returns per unit of the agent's cost.

\paragraph{Representing Outcome Distributions by the Complementary CDFs.}
For any cost $c_a$ of an action $a \in A$ and any outcome $\omega \in \Omega$, consider the complementary CDF of the outcome distribution $D_a$, i.e., $c_a \mapsto \sum_{\omega' \ge \omega} f(\omega' | c_a)$.
We further consider its concave closure $F(\omega | c)$, which is a function defined on $0 \le c \le c_{\max}$.
Finally, we extend it to have value $F(\omega|c) = F(\omega|c_{\max})$ for any $c_{\max} \le c \le 1$, which is equivalent to adding a set of dummy actions dominated by the existing actions, and thus, does not affect the instance.
We can interpret the two assumptions as natural properties of the complementary CDFs.
CDFP corresponds to having $c_a \mapsto \sum_{\omega' \ge \omega} f(\omega' | c_a)$ on the concave closure for any outcome $\omega \in \Omega$ and any action $a \in A$.
FOSD corresponds to the monotonicity of the complementary CDFs and thus the concave closures $F(\omega | c)$.

This representation of the outcome distributions will be useful for our results, especially those regarding the query complexity.
Under our assumptions, for any cost $c \in [0, 1]$, there is a distribution of actions with expected cost $c$ and the outcome distribution defined by $F(\cdot|c)$.
Intuitively, it is sufficient to learn these monotone and concave functions $F(1|c) \ge  F(2|c) \ge \dots \ge F(m-1|c)$, which map the agent's cost to the probability of getting an outcome that is at least as good as some threshold outcome $1 \le \omega \le m-1$.
See \Cref{fig:CCDF-representation} for an illustration.

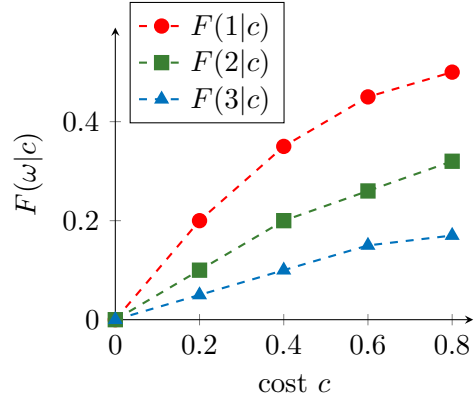
\begin{figure}
    \centering
    \begin{subfigure}{.4\textwidth}
        \centering
        \renewcommand{\arraystretch}{1.25}
        \vspace{24pt}
        \begin{tabular}{cccc}
            \toprule
            Cost & $f(1 | c)$ & $f(2 | c)$ & $f(3 | c)$ \\
            \midrule 
            $0.2$ & $0.1$ & $0.05$ & $0.05$ \\
            $0.4$ & $0.15$ & $0.1$ & $0.1$ \\
            $0.6$ & $0.19$ & $0.11$ & $0.15$ \\
            $0.8$ & $0.18$ & $0.15$ & $0.17$ \\
            \bottomrule
        \end{tabular}
        \vspace{24pt}
        \caption{Table representation}
    \end{subfigure}
    \begin{subfigure}{.48\textwidth}
        \centering
        \begin{tikzpicture}
        \begin{axis}[
            width=.8\textwidth,
            axis lines = left,
            xmax=0.85,
            ymax=0.59,
            xlabel = {cost $c$},
            ylabel = {$F(\omega | c)$},
            legend style={
                at={(0.04,0.88)},
                anchor=west
            }
        ]
        \addplot[red, thick, dashed, mark=*, mark options={solid}, 
        every mark/.append style={scale=1.25}
        ] coordinates {
            (0,0) (0.2,0.2) (0.4,0.35) (0.6,0.45) (0.8,0.5)
        };
        \addlegendentry{$F(1|c)$}
        \addplot[OliveGreen, thick, dashed, mark=square*, mark options={solid}, 
        every mark/.append style={scale=1.25}
        ] coordinates {
            (0,0) (0.2,0.1) (0.4,0.2) (0.6,0.26) (0.8,0.32)
        };
        \addlegendentry{$F(2|c)$}
        \addplot[RoyalBlue, thick, dashed, mark=triangle*, mark options={solid}, 
        every mark/.append style={scale=1.25}
        ] coordinates {
            (0,0) (0.2,0.05) (0.4,0.1) (0.6,0.15) (0.8,0.17)
        };
        \addlegendentry{$F(3|c)$}
        \end{axis}
        \end{tikzpicture}
        \caption{Complementary CDF representation}
    \end{subfigure}
    \caption{Representation of a problem instance under FOSD and CDFP assumptions in the table form and in the complementary CDF form.}
    \label{fig:CCDF-representation}
\end{figure}
 
 \section{Learnability of Bounded Contracts} \label{sec:learnability}

\paragraph{Further Notations.}
Throughout this section, we will constantly need to consider two instances $I$ and $\tilde{I}$, where usually $I$ is the original instance, and $\tilde{I}$ is an empirical instance learned from the queries. 
They share the same set of outcomes and principal's values.
However, they may have different action spaces, action costs, and outcome distributions.
In instance $I$, the action space is $A$, the action costs are $c_a$, and the outcome distributions are $D_a$.
In instance $\tilde{I}$, the action space is $\tilde{A}$.
We will use $\tilde{a}$ to denote an action from $\tilde{A}$, and denote its action cost and outcome distribution as $\tilde{c}_{\tilde{a}}$ and $\tilde{D}_{\tilde{a}}$.
Further, we write $\OPT_H(I)$ and $\OPT_H(\tilde{I})$ for the optimal principal's expected utilities achievable by $H$-bounded contracts under instance $I$ and $\tilde{I}$. 

\subsection{Metrics for Distributions}
\label{sec:distribution metrics}
We first introduce two metrics for distributions and some standard facts.
Consider any distributions $D$ and $\tilde{D}$ over the same support $\Omega = \{0, 1, \dots, m-1\}$.
Let $f$ and $\tilde{f}$ denote their probability mass functions.
Let $F(\omega) = \sum_{\omega' \ge \omega} f(\omega')$ and similarly $\tilde{F}(\omega) = \sum_{\omega' \ge \omega} \tilde{f}(\omega')$.

\begin{definition}[Total Variation Distance]
    The \emph{total variation distance} between $D$ and $\tilde{D}$ is
    \[
        \TV (D, \tilde{D}) \coloneqq \frac{1}{2} \sum_{\omega \in \Omega} \left| \, f(\omega) - \tilde{f}(\omega) \, \right|
        ~.
    \]
\end{definition}

\begin{definition}[Kolmogorov Distance]
    The \emph{Kolmogorov distance} between $D$ and $\tilde{D}$ is
    \[
        \Kol (D, \tilde{D}) \coloneqq \max_{\omega \in \Omega} ~ \left| \, F(\omega) - \tilde{F}(\omega) \, \right|
        ~.
    \]
\end{definition}

The following convergence bound of the empirical distribution to the true distribution in terms of the total variation distance is folklore.

\begin{lemma}[e.g., \cite{canonne2020short}]
    \label{lem:tv-convergence}
    Suppose that $\tilde{D}$ is the uniform distribution over $N = \Omega \left( \frac{m+\log\nicefrac{1}{\delta}}{\varepsilon^2} \right)$ IID samples from $D$.
    With probability at least $1-\delta$, we have
    \[
        \TV(D, \tilde{D}) \le \varepsilon
        ~.
    \]
\end{lemma}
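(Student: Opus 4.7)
The plan is to combine a bound on the expected total variation distance with a concentration inequality. Concretely, I would prove $\mathbb{E}[\TV(D,\tilde D)] \le \tfrac{1}{2}\sqrt{m/N}$ and then show that $\TV(D,\tilde D)$ concentrates sharply around its expectation using McDiarmid's inequality. Setting $N = \Omega\bigl((m+\log(1/\delta))/\varepsilon^2\bigr)$ then makes the expectation $\le \varepsilon/2$ and the deviation $\le \varepsilon/2$ except with probability $\delta$.

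For the expectation bound, I would write $\mathbb{E}[\TV(D,\tilde D)] = \tfrac{1}{2}\sum_{\omega\in\Omega}\mathbb{E}\bigl|\tilde f(\omega) - f(\omega)\bigr|$. For each $\omega$, the count $N\tilde f(\omega)$ is a $\mathrm{Binomial}(N,f(\omega))$ random variable with variance $Nf(\omega)(1-f(\omega))$, so by Jensen $\mathbb{E}\bigl|\tilde f(\omega) - f(\omega)\bigr| \le \sqrt{f(\omega)(1-f(\omega))/N} \le \sqrt{f(\omega)/N}$. Summing over $\omega$ and using Cauchy--Schwarz, $\sum_{\omega}\sqrt{f(\omega)/N} \le \sqrt{m\cdot\sum_{\omega}f(\omega)/N} = \sqrt{m/N}$. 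This gives $\mathbb{E}[\TV(D,\tilde D)] \le \tfrac{1}{2}\sqrt{m/N}$, which is at most $\varepsilon/2$ whenever $N \ge m/\varepsilon^2$.

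For concentration, I would view $\TV(D,\tilde D)$ as a function of the $N$ IID samples $\omega_1,\dots,\omega_N$. Replacing any single $\omega_i$ alters at most two coordinates of $\tilde f$, each by exactly $1/N$, so the function changes by at most $1/N$. McDiarmid's bounded-differences inequality then yields $\Pr\bigl[\TV(D,\tilde D) - \mathbb{E}[\TV(D,\tilde D)] > t\bigr] \le \exp(-2Nt^2)$. Choosing $t = \varepsilon/2$ makes this at most $\delta$ as soon as $N \ge 2\ln(1/\delta)/\varepsilon^2$. Combining the two regimes gives the required $N = \Omega\bigl((m+\log(1/\delta))/\varepsilon^2\bigr)$.

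The only nontrivial step is the expectation bound: a naive termwise application of a Chernoff/Hoeffding bound followed by a union bound over the $m$ outcomes would yield an extra $\sqrt{\log m}$ factor, while a union bound over the $2^m$ subsets (to turn $\TV$ into a maximum of mean deviations) would not improve on the Cauchy--Schwarz route. The clean $\sqrt{m/N}$ rate really does require exploiting that the $f(\omega)$ sum to $1$, which is exactly what Cauchy--Schwarz extracts. Everything else is a mechanical application of bounded-differences concentration.
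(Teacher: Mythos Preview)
Your proof is correct. The paper does not give its own proof of this lemma; it simply states it as folklore and cites Canonne's survey. Your argument---bounding $\mathbb{E}[\TV(D,\tilde D)]$ by $\tfrac{1}{2}\sqrt{m/N}$ via Jensen and Cauchy--Schwarz, then applying McDiarmid with bounded differences $1/N$---is exactly the standard proof appearing in that reference, so there is nothing further to compare.
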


It is also well-known that the difference between the expected values of a bounded function on two distributions is upper bounded by their total variation distance.

\begin{lemma}
    \label{lem:tv-expectation}
    For any function $h : \Omega \to [-H, H]$, we have
    \[
        \left| \, \E_{\omega \sim D} \, h(\omega) - \E_{\omega \sim \tilde{D}} \, h(\omega) \,\right| \le 2H \cdot \TV(D, \tilde{D})
        ~.
    \]
\end{lemma}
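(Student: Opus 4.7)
The plan is a direct calculation from the definitions. First, I would rewrite both expectations as sums over the common support $\Omega$, subtract them term by term, and factor out $h(\omega)$ to obtain
\[
    \E_{\omega \sim D} h(\omega) - \E_{\omega \sim \tilde{D}} h(\omega) = \sum_{\omega \in \Omega} \bigl( f(\omega) - \tilde{f}(\omega) \bigr) \cdot h(\omega).
\]

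Next, I would apply the triangle inequality in absolute value and then bound $|h(\omega)| \le H$ uniformly, which gives
\[
    \left| \, \E_{\omega \sim D} h(\omega) - \E_{\omega \sim \tilde{D}} h(\omega) \, \right| \le H \cdot \sum_{\omega \in \Omega} \bigl| f(\omega) - \tilde{f}(\omega) \bigr|.
\]

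Finally, I would recognize that by definition of the total variation distance, $\sum_{\omega \in \Omega} |f(\omega) - \tilde{f}(\omega)| = 2 \cdot \TV(D, \tilde{D})$, yielding the stated bound. There is no real obstacle here; this is the standard one-line proof, and the only small thing to be careful about is the factor of $2$ coming from the $\tfrac{1}{2}$ in the TV definition. The range $[-H, H]$ (rather than $[0, H]$) is handled automatically since only $|h(\omega)| \le H$ is used.
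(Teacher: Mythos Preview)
Your proof is correct and is exactly the standard argument. The paper itself treats this lemma as well-known and does not give a proof, so there is nothing to compare against; your direct calculation is the expected one.
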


We also have a similar bound for the Kolmogorov distance, losing an additional factor $m$.

\begin{lemma}
    \label{lem:kol-expectation}
    For any function $h : \Omega \to [-H, H]$, we have
    \[
        \left| \, \E_{\omega \sim D} \, h(\omega) - \E_{\omega \sim \tilde{D}} \, h(\omega) \,\right| \le (2m-1) H \cdot \Kol(D, \tilde{D})
        ~.
    \]
\end{lemma}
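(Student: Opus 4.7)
The plan is to apply a summation-by-parts (Abel) identity to rewrite the expectation difference in terms of the complementary CDFs $F$ and $\tilde{F}$, whose pointwise gap is directly controlled by the Kolmogorov distance.

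First I would use the telescoping identity $h(\omega)=h(0)+\sum_{k=1}^{\omega}\bigl(h(k)-h(k-1)\bigr)$ to expand
\[
    \E_{\omega \sim D}\,h(\omega) = h(0) + \sum_{\omega \in \Omega} f(\omega) \sum_{k=1}^{\omega}\bigl(h(k)-h(k-1)\bigr).
\]
Swapping the order of summation over $\omega$ and $k$, the inner factor $\sum_{\omega \ge k} f(\omega)$ is exactly $F(k)$. Hence
\[
    \E_{\omega \sim D}\,h(\omega) = h(0) + \sum_{k=1}^{m-1}\bigl(h(k)-h(k-1)\bigr)\,F(k),
\]
and the analogous identity holds for $\tilde{D}$ with $F$ replaced by $\tilde{F}$. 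Subtracting the two, the $h(0)$ terms cancel and I obtain
\[
    \E_{\omega \sim D}\,h(\omega) - \E_{\omega \sim \tilde{D}}\,h(\omega) = \sum_{k=1}^{m-1}\bigl(h(k)-h(k-1)\bigr)\bigl(F(k)-\tilde{F}(k)\bigr).
\]

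Next, by the triangle inequality together with $|h(k)-h(k-1)| \le 2H$ (since $h$ is $[-H,H]$-valued) and $|F(k)-\tilde{F}(k)| \le \Kol(D,\tilde{D})$ by definition, the right-hand side is bounded in absolute value by $2(m-1)H \cdot \Kol(D,\tilde{D})$, which is at most $(2m-1)H \cdot \Kol(D,\tilde{D})$, yielding the claim.

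There is no real obstacle here; the only substantive step is recognising the Abel summation that converts the pmf difference into a complementary CDF difference, and the factor $m$ in the bound comes straightforwardly from having $m-1$ nonzero terms, each of which can independently achieve magnitude $2H \cdot \Kol$. A symmetric argument using $h(\omega)=h(m-1)-\sum_{k=\omega+1}^{m-1}(h(k)-h(k-1))$ gives the same bound, so the proof is robust to the choice of anchor point.
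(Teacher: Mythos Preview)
Your proof is correct and follows essentially the same Abel summation argument as the paper's own proof. In fact, by observing that the $h(0)$ terms cancel (since $F(0)=\tilde{F}(0)=1$), you obtain the slightly sharper intermediate bound $2(m-1)H\cdot\Kol(D,\tilde{D})$ before relaxing to $(2m-1)H\cdot\Kol(D,\tilde{D})$, whereas the paper retains the $(F(0)-\tilde{F}(0))h(0)$ term and bounds it by $H\cdot\Kol(D,\tilde{D})$ to arrive directly at $(2m-1)H$.
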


We defer the proof of \Cref{lem:kol-expectation} to \Cref{app:distribution metrics}.

\subsection{Action Query Model} 

\label{sec:action-query}

We start by considering the query complexity of learning optimal contracts in the action query model. 
Recall that this model assumes a known and finite action space, and known action costs.
We can query any action $a \in A$ and receive a sample outcome $\omega$ from the outcome distribution $D_a$ of action $a$.
We denote the size of the action space by $n$.
\Cref{fig:action-query} presents the algorithm.

\begin{figure}
\begin{boxedtext}
\begin{enumerate}[leftmargin=*,itemsep=1mm]
    \item \emph{Sampling:~} Query each action $a \in A$ (for a sufficiently large constant $C$)
    \[
        C \cdot \frac{H^2 \left(m + \log\nicefrac{n}{\delta}\right)}{\varepsilon^4}
    \]
    times.
    Let $\tilde{D}_a$ be the uniform distribution 
    over the outcomes.
    \item \emph{Computing the Empirically Optimal Contract:~} Compute the optimal $H$-bounded contract $\tilde{p}^*$ for an instance whose outcome distributions are $\tilde{D}_a$, $a \in A$.
    \item \emph{Robustification:~} Return the \emph{robustified contract} $p$ such that for any $\omega \in \Omega$, 
    \begin{equation}
        \label{eqn: robustified contract}
        p_\omega = \tilde{p}^*_\omega + \frac{\varepsilon}{2} \left(v_\omega - \tilde{p}^*_\omega\right)
        ~. 
    \end{equation}
\end{enumerate}
\end{boxedtext}
    \caption{Psuedo-code for the learning algorithm in the action query model.}
    \label{fig:action-query}
\end{figure}

The robustification step is inspired by a price rounding technique that was commonly attributed to Nisan (see, e.g., \cite{chawla2007algorithmic, cai2013simple}).
On top of the original contract's payment, the robustified contract further shares an $\nicefrac{\varepsilon}{2}$ fraction of the principal's utility to the agent, which intuitively incentivizes the agent to take an action with a higher principal's 
utility 
in the original contract. 
For $H \geq 1$, it is clear that the robustified contract is also $H$-bounded.

To analyze the algorithm, we first define a notion of approximate action in another instance.

\begin{definition}
    For any instances $I$ and $\tilde{I}$ with the same outcome space $\Omega$, we say that an action $\tilde{a} \in \tilde{A}$ in instance $\tilde{I}$ has an \emph{$\varepsilon$-approximation} in instance $I$,
    if there is a mixed action $\sigma$, i.e., a distribution over $A$, such that
    \begin{enumerate}
        \item $\left| \, \tilde{c}_{\tilde{a}} - c_\sigma \, \right| \le \nicefrac{\varepsilon^2}{16}$ where $c_\sigma = \E_{a \sim \sigma}\, c_a$ is the expected action cost of sampling an action $a \sim \sigma$;
        \item either $\TV(\tilde{D}_{\tilde{a}}, D_\sigma) \le \nicefrac{\varepsilon^2}{32H}$ or $\Kol(\tilde{D}_{\tilde{a}}, D_\sigma) \le \nicefrac{\varepsilon^2}{32mH}$, where $D_\sigma$ is the outcome distribution obtained by first sampling an action $a \sim \sigma$, and then sampling an outcome from $D_a$.
    \end{enumerate}
\end{definition}

Next, we give some sufficient conditions under which the robustified contract works well on a different instance.
The lemma is more general than what is needed in this subsection because we will reuse this lemma in the contract query model later.

\begin{lemma}[Robustification]
    \label{lem:robustification}
    Consider two instances $I$ and $\tilde{I}$.
    Further consider the optimal $H$-bounded contract $\tilde{p}^*$ for instance $\tilde{I}$ and the robustified contract $p$ in \Cref{eqn: robustified contract}.
    Suppose for some $0 < \varepsilon < \nicefrac{1}{2}$, the action $\tilde{a}^* \in \tilde{A}$ incentivized by $\tilde{p}^*$ in instance $\tilde{I}$ has an $\varepsilon$-approximation in $I$.
    Then, either the principal's expected utility for contract $p$ in instance $I$ is at least
    \[
        \OPT_H(\tilde{I}) - \varepsilon
        ~,
    \]
    or the agent's best-response action $a^*(p)$ in instance $I$ does \emph{not} have an $\varepsilon$-approximation in $\tilde{I}$.
\end{lemma}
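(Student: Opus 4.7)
I would dichotomize on whether the agent's best-response $a^*(p)$ to the robustified contract $p$ in instance $I$ admits an $\varepsilon$-approximation in $\tilde{I}$. If it does not, the second alternative of the conclusion holds immediately; the substantive case is when some mixed action $\tilde{\sigma}$ over $\tilde{A}$ does $\varepsilon$-approximate $a^*(p)$. Let $\sigma$ denote the mixed action over $A$ promised by the hypothesis to $\varepsilon$-approximate $\tilde{a}^*$ in $I$. These two mixed actions serve as ``bridges'' for transferring best-response inequalities between the two instances.

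\textbf{Key identity from the robustification.} The first step is to rewrite the robustified contract as $p = (1-\tfrac{\varepsilon}{2})\tilde{p}^* + \tfrac{\varepsilon}{2} v$. For any fixed action, this yields two identities: the principal's utility under $p$ equals $(1-\tfrac{\varepsilon}{2})$ times its utility under $\tilde{p}^*$; and the agent's utility under $p$ equals its utility under $\tilde{p}^*$ plus $\tfrac{\varepsilon}{2}$ times the principal's utility under $\tilde{p}^*$. In other words, the robustification rebates an $\tfrac{\varepsilon}{2}$ fraction of the principal's utility to the agent, biasing the best-response toward actions that are good for the principal under the base contract $\tilde{p}^*$.

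\textbf{Chaining two best-response inequalities.} I would then combine two best-response inequalities. First, since $a^*(p)$ best-responds to $p$ in $I$, its agent-utility is at least that of the mixed alternative $\sigma$; the use of a mixed alternative is valid because the best pure response weakly dominates the expected utility of any mixture. Expanding both sides via the robustification identity and transferring the right-hand side from $I$ to $\tilde{I}$ using \Cref{lem:tv-expectation} (or \Cref{lem:kol-expectation} when the $\varepsilon$-approximation guarantees Kolmogorov closeness) together with the $\tfrac{\varepsilon^2}{16}$ cost bound, the right-hand side becomes the agent's utility at $\tilde{a}^*$ under $\tilde{p}^*$ in $\tilde{I}$ plus $\tfrac{\varepsilon}{2}\cdot \OPT_H(\tilde{I})$, up to an $O(\varepsilon^2)$ error. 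Second, since $\tilde{a}^*$ best-responds to $\tilde{p}^*$ in $\tilde{I}$, it dominates $\tilde{\sigma}$; transferring that right-hand side back to $I$ similarly produces the agent's utility at $a^*(p)$ under $\tilde{p}^*$, up to $O(\varepsilon^2)$. Chaining the two inequalities cancels the agent-utility terms, leaving that the principal's utility in $I$ at action $a^*(p)$ under the base contract $\tilde{p}^*$ is at least $\OPT_H(\tilde{I}) - O(\varepsilon)$. Multiplying by $(1-\tfrac{\varepsilon}{2})$ via the first identity and using $\OPT_H(\tilde{I}) \le 1$ then delivers the first alternative $\principalUtility(p) \ge \OPT_H(\tilde{I}) - \varepsilon$.

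\textbf{Main obstacle.} The delicate step is the error bookkeeping. By definition, an $\varepsilon$-approximation only guarantees $\tfrac{\varepsilon^2}{16}$-close costs and distributions that are close at the $\Theta(\tfrac{\varepsilon^2}{H})$ or $\Theta(\tfrac{\varepsilon^2}{mH})$ scale, so \Cref{lem:tv-expectation}/\Cref{lem:kol-expectation} each yield only an $O(\varepsilon^2)$ expectation error for the $H$-bounded functions in sight. But in the chain above, the principal-utility conclusion is extracted by dividing by the $\tfrac{\varepsilon}{2}$ prefactor inserted by the robustification---the very prefactor that makes the agent-utility terms cancel---so the accumulated $O(\varepsilon^2)$ errors blow up to an $O(\varepsilon)$ additive loss. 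The particular thresholds $\tfrac{\varepsilon^2}{16}$, $\tfrac{\varepsilon^2}{32H}$, and $\tfrac{\varepsilon^2}{32mH}$ built into the definition of $\varepsilon$-approximation must therefore be tight enough that this $O(\varepsilon)$ loss, combined with the $\tfrac{\varepsilon}{2}\cdot\OPT_H(\tilde{I})$ loss from multiplying by $(1-\tfrac{\varepsilon}{2})$, still fits inside the $\varepsilon$ budget in the conclusion.
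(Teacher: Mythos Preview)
Your proposal is correct and follows essentially the same approach as the paper: dichotomize on whether $a^*(p)$ has an $\varepsilon$-approximation in $\tilde{I}$, use the two best-response inequalities (for $p$ in $I$ and for $\tilde{p}^*$ in $\tilde{I}$) with the mixed actions $\sigma,\tilde{\sigma}$ as bridges, transfer expectations via \Cref{lem:tv-expectation}/\Cref{lem:kol-expectation}, and exploit the identity $p_\omega-\tilde{p}^*_\omega=\tfrac{\varepsilon}{2}(v_\omega-\tilde{p}^*_\omega)$ to extract the principal's utility at the cost of dividing the accumulated $O(\varepsilon^2)$ error by $\tfrac{\varepsilon}{2}$. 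Your framing via the additive decomposition of the agent's utility under $p$ is exactly the paper's algebra reorganized, and your error bookkeeping matches the paper's constants.
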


\begin{proof}
    Suppose that contract $p$ incentivizes action $a = a^*(p)$ in instance $I$, and contract $\tilde{p}^*$ incentivizes action $\tilde{a}^*$ in instance $\tilde{I}$.
    If $a$ has no $\varepsilon$-approximation in $\tilde{I}$, the lemma holds.
    
    Otherwise, let $\tilde{\sigma}$ be the mixed action in instance $\tilde{I}$ that $\varepsilon$-approximates $a$, and let $\sigma$ be the mixed action in instance $I$ that $\varepsilon$-approximates $\tilde{a}^*$.
    
    By the agent's IC property for the two instances, we have
    \begin{align}
        \label{eqn:ic-of-p}
        \E_{\omega \sim D_a} \, p_\omega - c_a
        &
        \geq
        \E_{\omega \sim D_\sigma} \, p_\omega - c_\sigma
        ~, \\[1ex]
        \label{eqn:ic-of-pstar}
        \E_{\omega \sim \tilde{D}_{\tilde{a}^*}} \, \tilde{p}^*_\omega - \tilde{c}_{\tilde{a}^*} 
        &
        \geq
        \E_{\omega \sim \tilde{D}_{\tilde{\sigma}}} \, \tilde{p}^*_\omega - \tilde{c}_{\tilde{\sigma}}
        ~.
    \end{align}

    Further, by the closeness of outcome distributions in the definition of $\varepsilon$-approximation, by the boundedness of contracts of $\tilde{p}^*$, and by \Cref{lem:tv-expectation,lem:kol-expectation}, we have
    \[
        \E_{\omega \sim \tilde{D}_{\tilde{a}^*}} \, \tilde{p}^*_\omega 
        \le
        \E_{\omega \sim D_\sigma} \, \tilde{p}^*_\omega + \frac{\varepsilon^2}{16}
        ~,\qquad
        \E_{\omega \sim \tilde{D}_{\tilde{\sigma}}} \, \tilde{p}^*_\omega
        \ge
        \E_{\omega \sim D_a} \, \tilde{p}^*_\omega - \frac{\varepsilon^2}{16}
        ~.
    \]
    and:
    \[
        \tilde{c}_{\tilde{a}^*} \ge c_\sigma - \frac{\varepsilon^2}{16}
        ~,\qquad
        \tilde{c}_{\tilde{\sigma}} \le c_a + \frac{\varepsilon^2}{16}
        ~.
    \]
    
    Combining with \Cref{eqn:ic-of-pstar}, we get
    \[
        \E_{\omega \sim D_\sigma} \, \tilde{p}^*_\omega - c_\sigma
        \geq
        \E_{\omega \sim D_a} \, \tilde{p}^*_\omega - c_a - \frac{\varepsilon^2}{4}
        ~.
    \]
    
    Further combining this with \Cref{eqn:ic-of-p} implies that
    \begin{equation}
        \label{eqn:robustification-last-step}
        \E_{\omega \sim D_a} \, (p_\omega - \tilde{p}^*_\omega) 
        \geq
        \E_{\omega \sim D_\sigma} \, (p_\omega - \tilde{p}^*_\omega) - \frac{\varepsilon^2}{4}
        ~.
    \end{equation}

    By \Cref{eqn: robustified contract}, we achieve that
    \[
        p_w - \tilde{p}^*_\omega = \frac{\varepsilon}{2} \left( v_\omega - \tilde{p}^*_\omega \right) = \frac{\varepsilon}{2-\varepsilon} \left( v_\omega - p_\omega \right)
        ~.
    \]

    Putting it back to \Cref{eqn:robustification-last-step}, we get that
    \begin{equation}
    \label{eqn: a and tilde-a under original instance}
        \E_{\omega \sim D_a} \, (v_\omega - p_\omega) 
        \geq
        \left(1-\frac{\varepsilon}{2}\right) \left( \E_{\omega \sim D_\sigma} \, (v_\omega - \tilde{p}^*_\omega) - \frac{\varepsilon}{2} \right)
        ~.
    \end{equation}

    Since $v_\omega \in [0,1]$, and $\tilde{p}^*_\omega \in [0,H]$, we have $v_\omega - \tilde{p}^*_\omega \in [-H,1] \subseteq [-H, H]$.
    By \Cref{lem:tv-expectation,lem:kol-expectation}, and that the mixed action $\sigma$ is an $\varepsilon$-approximation for $\tilde{a}^*$, we have
    \[
        \E_{\omega \sim D_\sigma} \, (v_\omega - \tilde{p}^*_\omega) \geq \E_{\omega \sim \tilde{D}_{\tilde{a}^*}} \, (v_\omega - \tilde{p}^*_\omega) - \frac{\varepsilon^2}{16}
        ~.
    \]

    Combining with \Cref{eqn: a and tilde-a under original instance}, we get
    \[
        \E_{\omega \sim D_a} \, (v_\omega - p_\omega) 
        \geq
        \left(1-\frac{\varepsilon}{2}\right) \left( \E_{\omega \sim \tilde{D}_{\tilde{a}^*}} \, (v_\omega - \tilde{p}^*_\omega) - \frac{\varepsilon}{2} - \frac{\varepsilon^2}{16} \right)
        ~.
    \]
    
    On the one hand, the left-hand side is the principal's expected utility for contact $p$ in instance $I$.
    On the other hand, the expectation on the right-hand side equals $\OPT_H(\tilde{I})$.
    Therefore, the lemma now follows by $(1-\nicefrac{\varepsilon}{2})(\OPT_H(\tilde{I})-\nicefrac{\varepsilon}{2}-\nicefrac{\varepsilon^2}{16}) \ge \OPT_H(\tilde{I})-\varepsilon$.
\end{proof}

With \Cref{lem:tv-convergence}, we can compute a query complexity upper bound through the above algorithm. The detailed proof of \Cref{thm: action query} is in \Cref{app:action query}.

\begin{theorem}
    \label{thm: action query}
    There is a polynomial-time algorithm for learning the optimal $H$-bounded contract in the action query model with query complexity at most
    \[
        O \left( \frac{nH^2 \left(m + \log \nicefrac{n}{\delta}\right)}{\varepsilon^4} \right)
        ~.
    \]
\end{theorem}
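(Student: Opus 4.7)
The plan is to combine the sample-complexity bound for the total variation distance (Lemma~\ref{lem:tv-convergence}) with the robustification lemma (Lemma~\ref{lem:robustification}), the latter applied in both directions between $I$ and $\tilde{I}$. Setting the sample size per action to $N = C \cdot H^{2}(m + \log\nicefrac{n}{\delta})/\varepsilon^{4}$ for a suitably large constant, Lemma~\ref{lem:tv-convergence} together with a union bound over the $n$ actions yields, with probability at least $1-\delta$, that $\TV(D_a, \tilde{D}_a) \le \varepsilon^{2}/(32H)$ simultaneously for every $a \in A$. I condition on this good event below.

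Under this event, every action $\tilde{a} \in \tilde{A}$ is trivially $\varepsilon$-approximated in $I$ by the deterministic mixed action $\sigma$ that places all mass on the same action $\tilde{a}$ (viewed as an element of $A$): the cost difference is exactly zero since $I$ and $\tilde{I}$ share the same action set with the same costs, and the TV condition holds by the previous paragraph. The analogous statement also holds with the roles of $I$ and $\tilde{I}$ swapped. I now apply Lemma~\ref{lem:robustification} to the empirically optimal contract $\tilde{p}^{*}$ and its robustification $p$. Since $\tilde{a}^{*}$ has an $\varepsilon$-approximation in $I$ and, symmetrically, $a^{*}(p)$ has an $\varepsilon$-approximation in $\tilde{I}$, the second alternative of the lemma is ruled out, so
\[
    \principalUtility(p) \ge \OPT_H(\tilde{I}) - \varepsilon.
\]

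To relate $\OPT_H(\tilde{I})$ back to $\OPT_H(I)$, I reapply Lemma~\ref{lem:robustification} with the two instances swapped, starting from the exact optimal $H$-bounded contract $p^{*}$ for $I$. Its robustified version $p'_\omega = p^{*}_\omega + \tfrac{\varepsilon}{2}(v_\omega - p^{*}_\omega)$ is a convex combination of $p^{*}_\omega \in [0,H]$ and $v_\omega \in [0,1] \subseteq [0,H]$, so $p'$ is again $H$-bounded. The same $\varepsilon$-approximation argument as above yields $\principalUtility_{\tilde{I}}(p') \ge \OPT_H(I) - \varepsilon$, and hence $\OPT_H(\tilde{I}) \ge \OPT_H(I) - \varepsilon$. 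Chaining the two inequalities gives $\principalUtility(p) \ge \OPT_H(I) - 2\varepsilon$; rescaling the input accuracy parameter by a factor of two absorbs this into the stated bound without changing the asymptotic query complexity.

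The main delicate point is the second application of Lemma~\ref{lem:robustification}: it is tempting to compare $\OPT_H(I)$ and $\OPT_H(\tilde{I})$ by plugging the optimum of one instance directly into the other, but because incentive compatibility breaks under small perturbations of the distributions, this naive comparison fails without robustification. Using the lemma in the reverse direction is exactly what sidesteps that pitfall. The remaining ingredients -- computing the empirically optimal $H$-bounded contract $\tilde{p}^{*}$ in polynomial time (sketched in the appendix on computation) and verifying the sample count -- are straightforward, and together with the above argument establish the theorem.
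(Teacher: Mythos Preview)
Your proposal is correct and follows essentially the same approach as the paper's proof: use Lemma~\ref{lem:tv-convergence} with a union bound to get uniform TV closeness, then apply Lemma~\ref{lem:robustification} once in each direction to chain $\principalUtility(p) \ge \OPT_H(\tilde{I}) - O(\varepsilon)$ and $\OPT_H(\tilde{I}) \ge \OPT_H(I) - O(\varepsilon)$. The only cosmetic difference is that the paper sets the TV bound to $\varepsilon^2/(128H)$ so that each application of the robustification lemma loses $\varepsilon/2$ (avoiding your final rescaling), and your argument is slightly more explicit about ruling out the second alternative of the lemma and about the robustified contract remaining $H$-bounded.
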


Now, let $\eta$ be the smallest non-zero probability of the outcome distributions in instance $I$:
\[
    \eta \coloneqq \min ~ \left\{ ~ f(\omega | a)\mid \omega \in \Omega, a\in A, f(\omega | a)>0 ~ \right\} ~.
\]
For general contracts, we make a simple observation that if $H$ is larger than $\nicefrac{1}{\eta}$, then $H$-bounded contracts are already optimal. The proof of \Cref{thm:opt-equal-opth} is in \Cref{app:opt-equal-opth}. 

\begin{theorem}
\label{thm:opt-equal-opth}
    For any $H \ge \nicefrac{1}{\eta}$, we have $\OPT = \OPT_H$.
\end{theorem}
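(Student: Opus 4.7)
The plan is to start with an optimal (possibly unbounded) contract $p^*$, which by definition incentivizes some action $a^*$ and attains $\principalUtility(p^*) = \OPT$. Since the principal can always deploy the zero contract (which incentivizes the null action and yields utility $v_0 \ge 0$), we may assume $\OPT \ge 0$. I will then produce an $H$-bounded contract $p'$ that also incentivizes $a^*$ with the same principal's utility, establishing $\OPT_H \ge \OPT$; the reverse inequality is immediate from the definition.

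The first step is to observe that $p^*$ is already $\nicefrac{1}{\eta}$-bounded on the support of $D_{a^*}$. Since $\OPT \ge 0$, the expected payment satisfies
\[
    \sum_{\omega \in \Omega} f(\omega \mid a^*) \, p^*_\omega \;\le\; \sum_{\omega \in \Omega} f(\omega \mid a^*) \, v_\omega \;\le\; 1 ~.
\]
Because all payments are non-negative under limited liability, for every $\omega$ with $f(\omega \mid a^*) > 0$ we get $f(\omega \mid a^*) \cdot p^*_\omega \le 1$, and since $f(\omega \mid a^*) \ge \eta$ by the definition of $\eta$, this forces $p^*_\omega \le \nicefrac{1}{\eta} \le H$.

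The remaining issue is outcomes $\omega$ outside the support of $D_{a^*}$. Here, the key idea is that payments on such outcomes have no effect on the agent's expected payoff under $a^*$, so we can freely lower them. Define the truncated contract
\[
    p'_\omega \coloneqq \min \{p^*_\omega, H\} \quad \text{for all } \omega \in \Omega ~,
\]
which is clearly $H$-bounded. By the previous paragraph, $p'_\omega = p^*_\omega$ whenever $f(\omega \mid a^*) > 0$, so the agent's utility from $a^*$ is unchanged, while the utility from every other action $a'$ can only weakly decrease (since $p'_\omega \le p^*_\omega$). Thus $a^*$ remains a best response to $p'$. The principal's utility, conditional on $a^*$ being played, is also unchanged: $\sum_\omega f(\omega \mid a^*)(v_\omega - p'_\omega) = \sum_\omega f(\omega \mid a^*)(v_\omega - p^*_\omega) = \OPT$.

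The main subtlety, and what I would flag as the one place to be careful, is the tie-breaking rule: the agent's best response $a^*(p')$ might differ from $a^*$ if some other action now ties with $a^*$. However, tie-breaking favors the principal, and for any action $a'$ we have $\sum_\omega f(\omega \mid a')(v_\omega - p'_\omega) \ge \sum_\omega f(\omega \mid a')(v_\omega - p^*_\omega)$, so the principal's utility under $p'$ is at least $\sum_\omega f(\omega \mid a^*)(v_\omega - p'_\omega) = \OPT$. Combined with the trivial bound $\OPT_H \le \OPT$, we conclude $\OPT = \OPT_H$.
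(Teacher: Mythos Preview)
Your proof is correct and follows essentially the same idea as the paper's: both hinge on the observation that any outcome with payment exceeding $\nicefrac{1}{\eta}$ must lie outside the support of the incentivized action's distribution (else the expected payment would exceed $1$ and the principal's utility would be negative), so such payments can be lowered without harm. Your version is constructive (truncate the optimal contract) rather than by contradiction, and it handles the tie-breaking detail more explicitly than the paper does, but the core argument is the same.
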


With the above two results, we have the following corollary for learning the optimal (general) contract using action queries.

\begin{corollary}
    There is a polynomial-time algorithm for learning the optimal contract in the action query model with query complexity at most
    \[
        O \left( \frac{n \left(m + \log \nicefrac{n}{\delta}\right)}{\eta^2 \varepsilon^4} \right)
        ~.
    \]
\end{corollary}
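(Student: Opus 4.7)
The plan is to reduce the task of learning the optimal general contract to the task of learning an optimal $H$-bounded contract with an appropriately chosen $H$, and then invoke \Cref{thm: action query}. Concretely, the idea is to set $H = \nicefrac{1}{\eta}$ so that \Cref{thm:opt-equal-opth} guarantees $\OPT = \OPT_H$, and then run the algorithm of \Cref{fig:action-query} with this value of $H$.

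In more detail, I would proceed as follows. First, I would observe that if a learning algorithm outputs a contract $p$ satisfying $\principalUtility(p) \ge \OPT_H - \varepsilon$ for $H \ge \nicefrac{1}{\eta}$, then by \Cref{thm:opt-equal-opth} we immediately have $\principalUtility(p) \ge \OPT - \varepsilon$, so such a contract is also a near-optimal general contract. Second, I would apply \Cref{thm: action query} with $H := \nicefrac{1}{\eta}$: this yields a polynomial-time algorithm that learns a contract meeting the above guarantee with probability at least $1-\delta$, using at most
\[
    O\!\left( \frac{n H^2 \left( m + \log \nicefrac{n}{\delta} \right)}{\varepsilon^4} \right)
    = O\!\left( \frac{n \left( m + \log \nicefrac{n}{\delta} \right)}{\eta^2 \varepsilon^4} \right)
\]
queries, which matches the claimed bound.

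There is essentially no technical obstacle here, since everything we need is already proved: \Cref{thm: action query} provides the learning algorithm and its query complexity as a function of $H$, and \Cref{thm:opt-equal-opth} provides the identification $\OPT = \OPT_H$ at $H = \nicefrac{1}{\eta}$. The only thing worth checking carefully is that the parameter $\eta$ can be treated as known to the algorithm, or can be estimated from the action queries (it is a property of the outcome distributions themselves); since the statement of the corollary merely bounds the query complexity in terms of $\eta$, a sufficient treatment is to assume $\eta$ is given. With these pieces in place, the corollary follows by direct substitution.
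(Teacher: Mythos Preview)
Your proposal is correct and matches the paper's approach exactly: the corollary is stated immediately after \Cref{thm: action query} and \Cref{thm:opt-equal-opth} with the remark ``With the above two results, we have the following corollary,'' i.e., one simply plugs $H = \nicefrac{1}{\eta}$ into \Cref{thm: action query} and uses \Cref{thm:opt-equal-opth} to conclude $\OPT = \OPT_H$.
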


\subsection{Contract Query Model} \label{sec:contract-query}

We next consider the query complexity of learning optimal contracts in the contract query model. 
Recall that in this model, we do not even know the action space, which may contain an infinite number of actions.
We can query any contract $p$ and receive a sample outcome $\omega$ from the outcome distribution $D_{a^*(p)}$ of the agent's best-response action $a^*(p)$.
Our main result is the next theorem.

\begin{theorem}
    \label{thm: contract query}
    Assuming FOSD and CDFP, there is a polynomial-time algorithm for learning the optimal $H$-bounded contract in the contract query model with query complexity at most
    \[
        \tilde{O} \left( \frac{m^{11}H^{10}}{\varepsilon^{20}} \right)
        ~.
    \]    
\end{theorem}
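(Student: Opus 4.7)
The plan is to build an empirical instance $\tilde{I}$ whose complementary CDFs $\tilde{F}(\omega\mid\cdot)$ approximate $F(\omega\mid\cdot)$ for every threshold outcome $\omega\in\Omega$, compute the optimal $H$-bounded contract $\tilde{p}^*$ on $\tilde I$, deploy its robustified version $p$ from \Cref{eqn: robustified contract}, and iterate in a win--win fashion using \Cref{lem:robustification}. The two distinctive pieces are (i) extracting information about the concave monotone functions $F(\omega\mid\cdot)$ from contract queries alone, and (ii) bounding the number of iterations by polynomially many under FOSD and CDFP.

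For (i), I would use the class of threshold contracts $p^{(\omega,r)}$ that pays $r$ exactly when the outcome is at least $\omega$, and $0$ otherwise. Under FOSD and CDFP, the agent's problem reduces to $\max_{c\in[0,1]}\{rF(\omega\mid c)-c\}$, so the best-response cost $c^*(\omega,r)$ is exactly a point where the subgradient of the concave function $F(\omega\mid\cdot)$ equals $1/r$; equivalently, $r$ is a supergradient of $F(\omega\mid\cdot)$ at $c^*(\omega,r)$. Drawing $\mathrm{poly}(m,H,\varepsilon^{-1})$ i.i.d.\ outcome samples from this query then estimates the \emph{entire} column $(F(\omega'\mid c^*(\omega,r)))_{\omega'\in\Omega}$ to total variation accuracy $\mathrm{poly}(\varepsilon/mH)$ by \Cref{lem:tv-convergence}. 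Sweeping $r$ over a geometric grid in $[0,H]$, for each threshold $\omega$, and knowing that each returned column lies on the graph of a monotone concave function with slope $1/r$, I would fit a piecewise-linear concave upper envelope $\tilde F(\omega\mid\cdot)$ matching the observed values and slopes. Standard convex analysis bounds the uniform approximation error of such a fit on the region where $c^*(\omega,r)$ is bounded away from $0$; the total query budget for this phase is $\mathrm{poly}(m,H,\varepsilon^{-1})\cdot\log(1/\delta)$.

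The empirical instance $\tilde I$ is defined by discretizing costs over the learned grid and using the $\tilde F$s as its CCDFs. I then compute $\tilde p^*=\arg\max\principalUtility_{\tilde I}$ restricted to $H$-bounded payments, and return its robustification $p$. To close the argument, I iterate: run $p$ as a contract query, and if the empirical principal utility is close to $\OPT_H(\tilde I)$, output $p$; otherwise, by \Cref{lem:robustification}, the true best-response $a^*(p)$ in $I$ has no $\varepsilon$-approximation in $\tilde I$, so I add a fresh grid point (and its column, estimated from repeated queries at $p$) to $\tilde I$ and re-solve. The key combinatorial claim is that under FOSD and CDFP there can be at most $\mathrm{poly}(m,H,\varepsilon^{-1})$ such ``new'' actions, because each newly discovered action forces one of the $m-1$ monotone concave functions $F(\omega\mid\cdot)$ to change by more than a $\mathrm{poly}(\varepsilon/mH)$ amount at some point of $[0,1]$, and a monotone concave function on $[0,1]$ with values in $[0,1]$ admits only polynomially many such $\varepsilon'$-separated refinements.

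The hardest step will be the small-cost regime. Near $c=0$ the concave functions $F(\omega\mid\cdot)$ may have arbitrarily large slope, so no bounded payment $r\le H$ elicits a best-response in that region and the subgradient approach cannot directly probe it. I would handle this by showing that actions with $c\le c_0$ for some $c_0=\mathrm{poly}(\varepsilon/mH)$ are ``dominated by the null action up to $\varepsilon$'': under FOSD and concavity, the principal's expected value at such actions exceeds $v_0$ by at most $O(mHc_0/\varepsilon)$, so ignoring them loses only $\varepsilon$ additive utility compared to $\OPT_H$, which is absorbed into the robustification slack. Assembling the bounds—$\mathrm{poly}(m,H,\varepsilon^{-1})$ threshold queries per phase, $\mathrm{poly}(m,H,\varepsilon^{-1})$ phases, $\mathrm{poly}(m,H,\varepsilon^{-1})\log(1/\delta)$ samples per query to control TV errors through \Cref{lem:tv-expectation}—yields the claimed $\tilde O(m^{11}H^{10}/\varepsilon^{20})$ query complexity, where the large exponents come from compounding the TV tolerance $\varepsilon^2/mH$ required by the definition of $\varepsilon$-approximation across the three nested $\mathrm{poly}$-factor levels.
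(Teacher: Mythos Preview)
Your overall architecture matches the paper's: threshold contracts to probe the concave CCDFs, a piecewise-linear empirical instance $\tilde I$, the robustified optimal contract, and an iterative refinement loop whose length is bounded by a packing argument under FOSD. The iteration bound you sketch is essentially the paper's \Cref{lem: maximum number of iteration}. One detail you gloss over is that from a threshold query you observe $F(\omega\mid c^*)$ and the slope $1/r$ but \emph{not} $c^*$, so you cannot directly place tangent lines on the graph of $F(\omega\mid\cdot)$; the paper handles this by passing to the inverse $G_\omega=F(\omega\mid\cdot)^{-1}$ and reconstructing it by integrating the sequence of subgradients from $0$ (\Cref{lmm: approximate convex}), then inverting back. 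Your ``fit an envelope to values and slopes'' is morally the same operation, but you should make the inverse-integration explicit.

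The genuine gap is your treatment of the small-cost regime. Your claim that for $c\le c_0$ the principal's expected value exceeds $v_0$ by at most $O(mHc_0/\varepsilon)$ is false: under FOSD and CDFP the slope of $F(\omega\mid\cdot)$ at $0$ can be arbitrarily large (e.g.\ $F(1\mid c)=\sqrt{c}$), so an action with tiny cost can have expected value close to $1$. Low-cost actions are \emph{not} dominated by the null action, and ignoring them can lose $\Omega(1)$ utility. The paper's fix is different and crucial: it inserts into $\tilde I$ a \emph{zero-cost} action whose outcome distribution is the learned $\tilde F(\cdot\mid \varepsilon^2/144)$, which by FOSD first-order dominates every true action with cost below $\varepsilon^2/144$ (this is invariant (I3)). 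Then, in the proof that $\OPT_H(\tilde I)\ge \OPT_H(I)-\varepsilon/3$ (\Cref{lem:contract-query-empirical-approximation}), if the true optimum uses a low-cost action $a^*$, the \emph{null} contract in $\tilde I$ already achieves at least $\E_{\tilde D_{\tilde a}}v_\omega\ge \E_{D_{a^*}}v_\omega\ge \OPT_H(I)$. Likewise, every new action discovered during refinement is shown to have cost below $\varepsilon^2/144$ and is added to $\tilde I$ with cost $0$, preserving (I3). Replace your ``dominated by null up to $\varepsilon$'' step with this zero-cost proxy construction and the argument goes through.
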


Similarly, 
we also have the following corollary for learning the optimal (general) contract.

\begin{corollary}
    Assuming FOSD and CDFP, there is a polynomial-time algorithm for learning the optimal contract in the 
    contract
    query model with query complexity at most
    \[
        \tilde{O} \left( \frac{m^{11}}{\eta^{10} \varepsilon^{20}} \right)
        ~.
    \]        
\end{corollary}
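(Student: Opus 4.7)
The plan is to execute the four-ingredient program sketched in Section 1, working throughout with the complementary CDF representation $F(\omega|\cdot)$, which under FOSD and CDFP is a monotone concave function $[0,1]\to[0,1]$ with $F(\omega|0)=0$ for each threshold $\omega \in \{1,\dots,m-1\}$.

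First I would introduce \emph{threshold contracts}: for a threshold outcome $\omega$ and a payment level $r \in [0,H]$, define $p^{(\omega,r)}$ by $p^{(\omega,r)}_{\omega'} = r$ if $\omega' \ge \omega$ and $0$ otherwise. Under FOSD+CDFP, the agent's expected utility as a function of its cost equals $r\cdot F(\omega|c) - c$, which is concave in $c$; its maximizer $c^\star(r)$ satisfies $1/r \in \partial F(\omega|c^\star(r))$. Thus a single query of $p^{(\omega,r)}$, followed by $\tilde{O}(m^2H^2/\varepsilon^4)$ resamples to estimate the probability of outcome $\ge \omega$ via Chernoff, yields a noisy subgradient oracle for the monotone concave function $F(\omega|\cdot)$ at an unknown but learnable point. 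Sweeping $r$ over a geometric grid in $[0,H]$ of size $\tilde{O}(H/\varepsilon^{O(1)})$ for each $\omega$, I would use the concave upper envelope of the collected $(c^\star(r),F(\omega|c^\star(r)))$ points, combined with the slope information, to build a piecewise-linear approximation $\tilde{F}(\omega|\cdot)$ that is uniformly within $\varepsilon^2/(32mH)$ of $F(\omega|\cdot)$ on a fine grid of costs.

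Second, these reconstructed complementary CDFs define an empirical instance $\tilde{I}$ whose action set $\tilde{A}$ is the discovered cost grid, with outcome distribution at cost $c$ derived from $\tilde{F}(\cdot|c)$. I compute the optimal $H$-bounded contract $\tilde{p}^\star$ on $\tilde{I}$ (polynomial time on a finite instance, per the paper's appendix) and output the robustified contract $p$ given by \Cref{eqn: robustified contract}. The Kolmogorov-closeness of $\tilde{D}$ to the corresponding mixed actions in the true instance, together with \Cref{lem:kol-expectation}, ensures that every action in $\tilde{A}$ that we care about is $\varepsilon$-approximated in $I$ and vice versa; then \Cref{lem:robustification} gives that either $\principalUtility(p) \ge \OPT_H(\tilde{I}) - \varepsilon \ge \OPT_H(I) - \varepsilon$, or the agent's true best-response $a^\star(p)$ has no $\varepsilon$-approximation in $\tilde{I}$. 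In the latter case, resampling $p$ reveals a new action whose outcome distribution is $\Omega(\varepsilon^2/(mH))$-far in Kolmogorov distance from everything in $\tilde{A}$, which we then add to $\tilde{A}$ and iterate.

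The main obstacle is the fourth step: bounding the number of such win-win iterations. Here I would exploit FOSD+CDFP decisively. Every possible best-response distribution in $I$ lies on the graph of some $F(\omega|\cdot)$, and these $m-1$ monotone concave functions from $[0,1]\to[0,1]$ admit only polynomially many pairwise $\eta$-separated points in Kolmogorov distance; concretely, a $1$D monotone concave function can be approximated in $L^\infty$ by a piecewise-linear interpolant on $O(1/\eta)$ breakpoints, so at most $O(m/\eta) = \mathrm{poly}(m,H,1/\varepsilon)$ mutually separated mixed-action distributions can exist. Hence the loop terminates after $\mathrm{poly}(m,H,1/\varepsilon)$ rounds, and the final query bound is obtained by multiplying the per-iteration sample cost (driven by Chernoff to accuracy $\varepsilon^2/(32mH)$, the number of thresholds $m-1$, and the grid size in $r$) by the iteration count. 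The subtle points in tuning the exponents are (i) the extra factor $m$ coming from \Cref{lem:kol-expectation} when converting Kolmogorov distance into expectation differences against $H$-bounded payments, and (ii) keeping the subgradient-based reconstruction robust to the Chernoff noise on each individual query; pushing these through yields the stated $\tilde{O}(m^{11}H^{10}/\varepsilon^{20})$ bound.
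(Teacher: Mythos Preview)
You have not proved the corollary; you have re-sketched the proof of Theorem~\ref{thm: contract query}. The corollary concerns learning the optimal \emph{general} (unbounded) contract, and its bound is stated in terms of $\eta$, the smallest non-zero probability appearing in the instance. Your proposal ends by claiming the $\tilde{O}(m^{11}H^{10}/\varepsilon^{20})$ bound, which is exactly the statement of Theorem~\ref{thm: contract query} for $H$-bounded contracts, and you never explain why an $H$-bounded contract suffices to attain the unconstrained optimum, nor how $\eta$ enters.

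In the paper the corollary is immediate: by Theorem~\ref{thm:opt-equal-opth}, $\OPT = \OPT_H$ whenever $H \ge 1/\eta$, so one simply applies Theorem~\ref{thm: contract query} with $H = 1/\eta$ and substitutes. Your write-up omits this reduction entirely. It also overloads the symbol $\eta$ in the iteration-count paragraph to mean the Kolmogorov separation threshold (which in the paper is $\Theta(\varepsilon^2/(mH))$, giving the $O(m^2H/\varepsilon^2)$ iteration bound of Lemma~\ref{lem: maximum number of iteration}); this collision with the paper's $\eta$ makes it look as though you are invoking the right parameter when you are not. As written, the proposal is a plausible outline of the preceding theorem but leaves the actual corollary unproved.
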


\paragraph{Algorithm Outline.}
The algorithm consists of two parts.
The first part is an \emph{Initialization} step for constructing an initial empirical instance $\tilde{I}$.
Recall that we may assume without loss of generality that the actions have distinct costs under FOSD and CDFP.
Hence, we will refer to an action by its action cost in the following discussion.
Ideally, we would like to learn the outcome distribution of each action cost $c$, up to the error bounds allowed by our \emph{Robustification} procedure (\Cref{lem:robustification}), in terms of either the total variation distance or the Kolmogorov distance.
If we could do that, then we would be able to apply the same approach as in the action query model to obtain a polynomial query complexity upper bound.
We identify a family of contracts that we call \emph{threshold contracts}, which allow us to effectively acquire information about the complementary CDF functions $F(\omega|c)$, viewing them as functions over the cost space.
See \Cref{sec:initialization} for details.
However, the acquired information is insufficient for pinning down the outcome distributions for action costs close to zero.
This is why we need the second part.

The second part is an \emph{Iterative Refinement} step. 
As discussed above, the robustified version of the optimal contract for the empirical instance $\tilde{I}$ may not provide a good enough principal's expected utility, due to insufficient information about the actions with costs close to zero and their outcome distributions.
Fortunately, we can easily verify whether the principal's expected utility is sufficiently large, since we can simply query the robustified contract through a polynomial number of contract queries.
Further, if the principal's expected utility is too low, then we are in the second case of the conclusion of \Cref{lem:robustification}, i.e., the agent's best-response action is sufficiently different from all actions that we currently have in the empirical instance.
Intuitively, this is a new piece of information using which we can refine our empirical instance to bring it one step closer to the true instance.
Indeed, we will iteratively refine our empirical instance using this approach, until the robustified optimal contract for the empirical instance gives a sufficiently large principal's expected utility.
Further, we show that there could be at most a polynomial number of iterations, utilizing the FOSD and CDFP assumptions. 
See \Cref{sec:refinement} for details.

We present the pseudo-code in \Cref{fig:contract-query}, with $C$ being a sufficiently large constant.
The rest of the subsection will detail the design and analysis of each component. The proof of \Cref{thm: contract query} is a conclusion utilizing all the lemmas with detailed computation. 

\begin{figure}
\begin{boxedtext}
    \emph{Step I: Initialization} 
    \begin{enumerate}
        \item Learn piece-wise linear functions $\tilde{F}(\omega|c)$ on $c \in [0, 1]$ such that for any 
        $c \ge \nicefrac{\varepsilon^2}{144}$
        \[
            F(\omega|c) \le \tilde{F}(\omega|c) \le F(\omega|c) + \frac{\varepsilon^2}{288mH}
            ~,
        \] 
        where the number of contract queries used is at most
        \[
            C \cdot \frac{m^{11}H^{10} \log \nicefrac{mH}{\delta \varepsilon} \log \nicefrac{mH}{\varepsilon}}{\varepsilon^{20}}
            ~.
        \]
        \item Define the initial empirical instance $\tilde{I}$ such that
        \begin{itemize}
            \item The action space is $\tilde{A} = \{0\} \cup [\nicefrac{\varepsilon^2}{144}, 1]$;
            \item Action $\tilde{a} = 0$ has cost $0$ and outcome distribution defined by $\tilde{F}(\omega|\nicefrac{\varepsilon^2}{144})$;
            \item Non-zero action $\tilde{a} \in \tilde{A}$ has cost $\tilde{a}$ and outcome distribution defined by $\tilde{F}(\omega|\tilde{a})$.
        \end{itemize}
    \end{enumerate}    
    \emph{Step II: Iterative Refinement} 
    \begin{enumerate}
        \item Compute the optimal contract $\tilde{p}^*$ for instance $\tilde{I}$.
        \item Construct a robustified contract $p$ according to \Cref{eqn: robustified contract}.
        \item Query contract $p$ for 
        \[
            C \cdot \frac{m^3 H^2 \log\nicefrac{mH}{\delta\varepsilon}}{\varepsilon^4}
        \]
        times to estimate the outcome distribution up to a total variation distance of 
        $\nicefrac{\varepsilon^2}{576mH}$.
        \item If the principal's estimated utility is at least $\OPT_H(\tilde{I}) - \nicefrac{\varepsilon}{2}$, return contract $p$.
        \item Otherwise, add a new action to the empirical instance $\tilde{I}$ with cost $0$ and the estimated outcome distribution.
        Then, repeat the \emph{Iterative Refinement} procedure.
    \end{enumerate}
\end{boxedtext}
    \caption{Psuedo-code for the learning algorithm in the contract query model.}
    \label{fig:contract-query}
\end{figure}

\subsubsection{Design and Analysis of the Initialization Step}
\label{sec:initialization}

In the \emph{Initialization} step, our goal is to find good approximations for the monotone and concave functions $F(\omega|c)$ on $0 \le c \le 1$, where $\omega \in \Omega$.
This can be further divided into two sub-steps. 
The first sub-step considers a subgradient query oracle and shows how to use it to approximate monotone convex/concave functions. 
The second sub-step further implements a subgradient query oracle for the complementary CDF functions under the contract query model. 

\paragraph{Learning Convex/Concave Functions Using Subgradient Queries.}
We first formally define the subgradient query oracle. 

\begin{definition}[Subgradient Query Oracle]
A $\varepsilon$-subgradient query oracle of a 
non-decreasing
convex function $G$ takes a positive $p$ as input and outputs a point $\tilde{x}$ such that $p$ is a subgradient of $G$ 
at some $z$ satisfying $\tilde{x} - \varepsilon \le z \le \tilde{x} $. 
\end{definition}

Next, we show how to learn convex/concave functions with a subgradient query oracle as defined above. We begin by presenting the first result of learning a convex function. 
The convex function we aim to learn will be the complementary CDF's inverse function.

\begin{lemma}[Learning Convex Functions]
    \label{lmm: approximate convex}
    For any $\varepsilon > 0$, given an $\nicefrac{\varepsilon^2}{16}$-subgradient oracle to an unknown non-decreasing convex function $G: [0,s] \rightarrow [0,1]$, where $0 < s \leq 1$ is also unknown,
    we can learn an approximate function $\tilde{G}$ with $O ( \nicefrac{1}{\varepsilon} \log \nicefrac{1}{\varepsilon} )$ queries, such that 
    for any $x \in [0, s]$, $G(x) \ge \tilde{G}(x)$, and further either $\tilde{G}(x) \ge G(x) - \varepsilon$, 
    or $\tilde{G}$ has a subgradient of at least $\nicefrac{1}{\varepsilon}$ at $x + \nicefrac{\varepsilon^2}{16}$. 
\end{lemma}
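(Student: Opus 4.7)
The plan is to construct $\tilde{G}$ as a piecewise linear convex function whose breakpoints come from querying the subgradient oracle at a geometric sequence of slopes, then exploit the monotonicity of the subgradient of a convex function to verify both halves of the conclusion.

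\textbf{Construction.} I would query at slopes $p_k = \nicefrac{\varepsilon}{2} \cdot (1 + \nicefrac{\varepsilon}{2})^{k-1}$ for $k = 1, 2, \ldots, K$, where $K$ is the smallest index with $p_K \geq \nicefrac{1}{\varepsilon}$; using $\log(1 + \nicefrac{\varepsilon}{2}) = \Theta(\varepsilon)$, this gives $K = O(\nicefrac{1}{\varepsilon} \log \nicefrac{1}{\varepsilon})$. Let $\tilde{x}_k$ be the oracle's reply to $p_k$, set $\tilde{x}_0 = 0$ and $p_0 = 0$, and define $\tilde{G}$ to be the continuous piecewise linear function with $\tilde{G}(0) = 0$, slope $p_{k-1}$ on $[\tilde{x}_{k-1}, \tilde{x}_k]$ for $k = 1, \ldots, K$, and slope $p_K$ on $[\tilde{x}_K, 1]$.

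\textbf{Underestimation.} The oracle guarantees some $z_k \in [\tilde{x}_k - \nicefrac{\varepsilon^2}{16}, \tilde{x}_k]$ at which $p_k$ is a subgradient of $G$. Since the subgradient of a convex function is non-decreasing and $z_{k-1} \leq \tilde{x}_{k-1}$, $G$ has slope at least $p_{k-1}$ throughout $[\tilde{x}_{k-1}, \tilde{x}_k]$. Hence its increment there is at least $p_{k-1}(\tilde{x}_k - \tilde{x}_{k-1})$, matching the increment of $\tilde{G}$. Induction from $\tilde{G}(0) = 0 \leq G(0)$ gives $\tilde{G}(\tilde{x}_k) \leq G(\tilde{x}_k)$ at every breakpoint, and the per-interval slope comparison extends the inequality to all $x$.

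\textbf{Approximation and the high-subgradient alternative.} Inside each interval $[\tilde{x}_{k-1}, \tilde{x}_k]$, $G$'s slope is bounded above by $p_k$ except on a sub-interval of length at most $\nicefrac{\varepsilon^2}{16}$ near $\tilde{x}_k$, where it can instead be bounded by $p_{k+1}$. The main per-interval excess of $G$'s increment over $\tilde{G}$'s is thus $(p_k - p_{k-1})(\tilde{x}_k - \tilde{x}_{k-1}) = \nicefrac{\varepsilon}{2} \cdot p_{k-1}(\tilde{x}_k - \tilde{x}_{k-1})$, i.e.\ $\nicefrac{\varepsilon}{2}$ times the increment of $\tilde{G}$ there, so the telescoping sum is at most $\nicefrac{\varepsilon}{2} \cdot \tilde{G}(\tilde{x}_K) \leq \nicefrac{\varepsilon}{2}$. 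Adding the trivial $p_1 \leq \nicefrac{\varepsilon}{2}$ bound on the initial piece $[0, \tilde{x}_1]$ and a lower-order contribution from the oracle slack (each slack subinterval contributes at most $(p_{k+1} - p_k) \cdot \nicefrac{\varepsilon^2}{16}$, and $\sum_k (p_{k+1} - p_k) \leq p_K \leq \nicefrac{1}{\varepsilon}$, yielding only $\nicefrac{\varepsilon}{16}$ total), the total error is at most $\varepsilon$ for $x \in [0, \tilde{x}_K]$. For $x > \tilde{x}_K - \nicefrac{\varepsilon^2}{16}$, the shifted point $x + \nicefrac{\varepsilon^2}{16}$ lies in $[\tilde{x}_K, 1]$, where $\tilde{G}$ has slope $p_K \geq \nicefrac{1}{\varepsilon}$, so the second alternative in the conclusion holds.

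\textbf{Main obstacle.} The delicate balance lies in choosing the slope ratio: it must be $1 + \Theta(\varepsilon)$ to keep the query count at $O(\nicefrac{1}{\varepsilon} \log \nicefrac{1}{\varepsilon})$, yet the $\nicefrac{\varepsilon^2}{16}$ oracle imprecision, when multiplied by the maximum slope $\nicefrac{1}{\varepsilon}$ and aggregated across $K$ intervals, must still remain $o(\varepsilon)$. Equally, since the oracle never reveals a \emph{value} of $G$, pointwise underestimation has to be established purely through slope monotonicity, which is why $\tilde{G}$ uses the lower slope $p_{k-1}$ rather than $p_k$ on each piece. Getting both ingredients to play together cleanly, with small enough constants that the final error fits strictly under $\varepsilon$, is the main bookkeeping challenge.
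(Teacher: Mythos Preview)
Your approach is essentially the paper's: query a geometric grid of slopes, build a piecewise-linear underestimator using the left slope on each piece, and argue that the $\varepsilon$-approximation holds up to the point where the constructed subgradient exceeds $1/\varepsilon$. The paper uses ratio $e^{\varepsilon/4}$ on the range $[\varepsilon/4,\,4/\varepsilon]$ rather than your $1+\varepsilon/2$ on $[\varepsilon/2,\,1/\varepsilon]$, but that is cosmetic.

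Two points need tightening. First, your budget overflows: by your own accounting the three contributions are $\varepsilon/2$ (initial interval), $\varepsilon/2$ (geometric telescoping), and $\varepsilon/16$ (oracle slack), which already exceed $\varepsilon$. This is trivially fixed by shrinking the ratio to $1+\varepsilon/4$ or the starting slope to $\varepsilon/4$, as the paper does.

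Second, and more substantively, your oracle-slack step asserts that on $[z_k,\tilde{x}_k]$ the slope of $G$ is at most $p_{k+1}$; this would require $\tilde{x}_k\le z_{k+1}$, which need not hold. If $G$ has a kink at which several consecutive $p_j$ are all subgradients, the corresponding $z_j$ coincide and $\tilde{x}_k$ can overshoot every one of them, so no finite $p_{k+1}$ caps the slope on that slack segment. The paper sidesteps this by introducing an auxiliary function $\hat{G}$ anchored at the (unobservable) points $z_k$ and bounding $G-\hat{G}$ and $\hat{G}-\tilde{G}$ separately. The second gap is then handled by Abel summation, which converts the slack contribution into $\sum_k (p_k-p_{k-1})(\tilde{x}_k-z_k)\le (\varepsilon^2/16)\cdot p_K$, recovering your $O(\varepsilon)$ bound without any per-segment slope cap beyond $p_k$.
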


Now, to learn a concave function, intuitively, we can first learn its inverse according to \Cref{lmm: approximate convex}, and then take the inverse of the approximation to obtain an approximation on the original concave function. The following lemma formalizes such intuition. 

\begin{lemma}[Learning Concave Functions]
\label{lmm: approximate concave}
    For any $\varepsilon > 0$, given an $\nicefrac{\varepsilon^4}{64}$-subgradient oracle to the inverse of an unknown non-decreasing concave function $F:[0,1] \rightarrow [0,1]$,
    we can learn an approximate function $\tilde{F}$ with $O(\nicefrac{1}{\varepsilon^2} \log \nicefrac{1}{\varepsilon})$ queries, such that $\tilde{F}(y) \geq F(y)$ on $[0, 1]$, and for any $y \in [\varepsilon, 1]$, $\tilde{F}(y) \leq F(y) + \varepsilon$. 
\end{lemma}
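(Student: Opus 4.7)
The plan is to reduce this lemma to \Cref{lmm: approximate convex} by passing to the inverse. The inverse $G=F^{-1}$ of the non-decreasing concave $F$ is itself non-decreasing and convex; after a harmless translation so that $G$ is defined on $[0,s]\subseteq[0,1]$, it fits the hypothesis of \Cref{lmm: approximate convex}. Setting the internal error parameter $\varepsilon'=\Theta(\varepsilon^2)$ makes the required $\nicefrac{\varepsilon'^2}{16}$-subgradient oracle implied by the given $\nicefrac{\varepsilon^4}{64}$-oracle, and the query count $O(\nicefrac{1}{\varepsilon'}\log\nicefrac{1}{\varepsilon'})=O(\nicefrac{1}{\varepsilon^2}\log\nicefrac{1}{\varepsilon})$ matches the target. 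Let $\tilde{G}\le G$ be the piecewise-linear underestimator returned by \Cref{lmm: approximate convex}, and define $\tilde{F}(c)=\sup\{p:\tilde{G}(p)\le c\}$, extended by continuity at the boundary of the domain.

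The lower bound $\tilde{F}(c)\ge F(c)$ is immediate: $\tilde{G}\le G$ implies the sup set defining $\tilde{F}(c)$ contains the one defining $F(c)=G^{-1}(c)$. For the upper bound $\tilde{F}(c)\le F(c)+\varepsilon$ on $c\in[\varepsilon,1]$, I would fix $p=\tilde{F}(c)$ (so $\tilde{G}(p)=c$ in the typical case) and apply \Cref{lmm: approximate convex} at $p$. The clean case is (a): $\tilde{G}(p)\ge G(p)-\varepsilon'$ gives $G(p)\le c+\varepsilon'$ and hence $p\le F(c+\varepsilon')$. The key observation is that concavity combined with $F(0)=0$ (which holds in the intended application and can be arranged by a shift in general) bounds the slope of $F$ at any $c\ge\varepsilon$ by $\nicefrac{F(c)}{c}\le\nicefrac{1}{\varepsilon}$, so $F(c+\varepsilon')-F(c)\le\nicefrac{\varepsilon'}{\varepsilon}=\Theta(\varepsilon)$. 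This is precisely why we restrict to $c\ge\varepsilon$ and why $\varepsilon'=\Theta(\varepsilon^2)$ is the right scaling.

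The main obstacle is case (b), where \Cref{lmm: approximate convex} only guarantees that $\tilde{G}$ has subgradient at least $\nicefrac{1}{\varepsilon'}$ at $p+\nicefrac{\varepsilon'^2}{16}$. Since subgradients of the convex $\tilde{G}$ are non-decreasing, case (b) at $p$ propagates to every larger point, so it can hold only on a suffix $[x_*,s]$ of the domain; and because $\tilde{G}$ is bounded above by $1$ while growing at rate $\ge\nicefrac{1}{\varepsilon'}$ past $x_*+\nicefrac{\varepsilon'^2}{16}$, this suffix has length $O(\varepsilon')$. At the threshold $x_*$, case (a) effectively holds by continuity of $\tilde{G}$ and $G$, so the same argument yields $x_*\le F(c+\varepsilon')$; combining with $p\le x_*+O(\varepsilon')$ gives $p\le F(c)+\Theta(\varepsilon)+O(\varepsilon')\le F(c)+\varepsilon$ once the internal parameters are tuned with a small enough constant. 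The most delicate work will be this case-(b) bookkeeping, fitting together the $\nicefrac{\varepsilon^4}{64}$ oracle precision, the $\nicefrac{\varepsilon'^2}{16}$ slack of \Cref{lmm: approximate convex}, and the boundary behaviour at $p=s$.
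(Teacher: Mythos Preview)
Your proposal is correct and follows essentially the same approach as the paper: reduce to \Cref{lmm: approximate convex} with parameter $\varepsilon'=\nicefrac{\varepsilon^2}{2}$, take $\tilde{F}$ as the inverse of the returned $\tilde{G}$, and separate the analysis into the region where the additive guarantee of \Cref{lmm: approximate convex} applies versus the short high-gradient suffix. Your case-(a) slope bound via concavity of $F$ (using $F(0)=0$) is exactly the dual of the paper's subgradient bound via convexity of $G$; your case-(b) argument that the bad region is a suffix of length $O(\varepsilon')$ is a cleaner packaging of what the paper achieves through its more granular breakpoint analysis ($\tilde{x}_{\max}$, $z_{\max}$, $y_{\max}$, $\tilde{y}_{\max}$) and the three boundary cases.
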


We defer the proofs of \Cref{lmm: approximate convex,lmm: approximate concave} to \Cref{sec:approximate-convex,app: approximate concave}, respectively.

\paragraph{Implementing a Subgradient Oracle via Threshold Contract Queries.}

We now proceed to 
realize a subgradient oracle under the contract query model, and subsequently learn a piece-wise linear approximation of concave functions $F(\omega|c)$ as instructed by \Cref{lmm: approximate concave}. We define a family of \emph{threshold contracts}. 

\begin{definition}[Threshold Contracts]
    An \emph{$(\omega, r)$-threshold contract} satisfies $p_{\omega'} = 0$ for any $\omega' < \omega$ and $p_{\omega'} = r$ for any $\omega' \geq \omega$. 
\end{definition}

We can relate threshold contract queries with a subgradient query oracle for the complementary CDFs' inverse functions, with the following important property. 

\begin{lemma}\label{lem:threshold-query-subgradient}
    Let $G_\omega$ be the inverse function of $F(\omega|c)$. Then, the optimal action cost choice $c^*$ responding to an $(\omega, r)$-threshold contract satisfies that $r$ is a subgradient of $G_\omega$ at $z = F(\omega | c^*)$. 
\end{lemma}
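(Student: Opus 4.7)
The plan is to realize the claim as a direct consequence of convex duality between $F(\omega|\cdot)$ and its inverse $G_\omega$. First, I will write the agent's expected utility under the $(\omega, r)$-threshold contract as $r \cdot F(\omega|c) - c$, using that the contract pays $r$ exactly on the upper tail $\{\omega' \ge \omega\}$, so the expected payment under an action with cost $c$ equals $r$ times the complementary CDF value $F(\omega|c)$. Under FOSD and CDFP, the map $c \mapsto F(\omega|c)$ on $[0, c_{\max}]$ is non-decreasing and concave, so its inverse $G_\omega$ is well-defined, non-decreasing, and convex on the image $[0, F(\omega|c_{\max})]$.

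Next, I will substitute $z = F(\omega|c)$, equivalently $c = G_\omega(z)$, into the agent's problem, recasting it as $\max_z \{r z - G_\omega(z)\}$ over $z$ in the domain of $G_\omega$. Since $c^*$ is the agent's optimal cost, the inequality $r \cdot F(\omega|c) - c \le r \cdot F(\omega|c^*) - c^*$ holds for every feasible $c$. Setting $z^* = F(\omega|c^*)$ and rewriting via $c = G_\omega(z)$, $c^* = G_\omega(z^*)$ turns this into
\[
    G_\omega(z) - G_\omega(z^*) \ge r (z - z^*)
\]
for every $z$ in the domain of $G_\omega$. This is precisely the defining subgradient inequality, so $r \in \partial G_\omega(z^*)$, which is what the lemma asserts.

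The one subtlety I expect to be the main obstacle is that the underlying action set $A$ need not literally realize every cost $c \in [0, c_{\max}]$, so a priori the agent's supremum might not be attained, and $G_\omega$ might be defined only on a sparse subset of $[0, F(\omega|c_{\max})]$. The preliminaries already resolve this: the paper works with the concave closure of the complementary CDF (equivalently, mixed actions), under which every cost in $[0, c_{\max}]$ is attainable and $F(\omega|\cdot)$ is a bona fide concave function on this interval. This makes the agent's maximum attained by some (mixed) action, so the supremum in $c^*$ is realized and the subgradient inequality above holds on the full domain of $G_\omega$, closing the argument.
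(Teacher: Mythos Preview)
Your proposal is correct and follows essentially the same approach as the paper. The paper's proof also writes the agent's utility as $r \cdot F(\omega|c) - c$, then notes that optimality of $c^*$ means $\nicefrac{1}{r}$ is a supergradient of $F(\omega|\cdot)$ at $c^*$, which is equivalent to $r$ being a subgradient of $G_\omega$ at $F(\omega|c^*)$; you simply spell out this equivalence via the explicit substitution $z = F(\omega|c)$ and the variational inequality, and you additionally flag the concave-closure subtlety that the paper handles in the preliminaries.
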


\begin{proof}
    Under the contract, the agent's utility with action cost $c$ is:
    \[
        \agentUtility (p,c) = r \cdot F(\omega | c) - c ~. 
    \]
    Thus, to maximize its utility, the optimal action cost 
    $c^*$ of the agent should satisfy that $\nicefrac{1}{r}$ is a supergradient of $F(\omega | c)$ at $c^*$. This is equivalent to 
    that $r$ is a subgradient of $G_\omega$ at $z = F(\omega | c^*)$. 
\end{proof}

Notice that we only observe a single outcome rather than the outcome distribution under the contract query model. Thus, we can boost \Cref{lem:threshold-query-subgradient} by conducting many identical threshold contract queries, which leads to a close approximation on $F(\omega | c_a)$, therefore realizing the subgradient oracle. As a result, the following lemma is a standard application of Hoeffding's inequality. 

\begin{lemma}\label{lem:constructing-subgradient-query}
    Suppose we make $O(\nicefrac{1}{\varepsilon^2} \log \nicefrac{1}{\delta})$ identical queries with the $(\omega,r)$-threshold contract, and let $\tilde{x}$ be the fraction of queries such that the resulting outcome is no worse than $\omega$ plus the semi-confidence bound $\nicefrac{\varepsilon}{2}$. Then with probability $1 - \delta$, there exists a point $z$ lying in $[\tilde{x} - \varepsilon, \tilde{x}]$ such that the subgradient of $G_\omega$ on $z$ is $r$. Here, $G_\omega$ is the inverse function of $F(\omega|c)$. 
\end{lemma}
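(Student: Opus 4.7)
The plan is to apply Hoeffding's inequality to the indicator of the sampled outcome being at least $\omega$, and then combine the resulting concentration bound with \Cref{lem:threshold-query-subgradient} to exhibit the desired subgradient point.

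First, I would observe that each of the $N$ identical $(\omega, r)$-threshold contract queries triggers the same agent best-response action with some cost $c^*$ (ties broken deterministically by the standard convention), so every query produces an IID outcome drawn from the same distribution $D_{a^*(p)}$. Define $X_i := \mathbf{1}[\omega_i \ge \omega]$ for $i = 1, \ldots, N$; these are IID Bernoulli random variables whose mean equals $F(\omega \mid c^*)$ by the definition of the complementary CDF. Let $\hat{x} := \frac{1}{N} \sum_{i=1}^N X_i$ denote their empirical average.

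Second, I would apply Hoeffding's inequality to get
\[
    \Pr\left[\, \left| \hat{x} - F(\omega \mid c^*) \right| > \tfrac{\varepsilon}{2} \,\right] \le 2 \exp\!\left( - N \varepsilon^2 / 2 \right),
\]
and note that choosing $N = \Theta(\varepsilon^{-2} \log \delta^{-1})$ with a suitable constant renders the right-hand side at most $\delta$. Setting $\tilde{x} := \hat{x} + \nicefrac{\varepsilon}{2}$ (the empirical fraction shifted up by the semi-confidence bound, as stipulated in the statement), the concentration inequality gives, with probability at least $1 - \delta$,
\[
    \tilde{x} - \varepsilon \;=\; \hat{x} - \tfrac{\varepsilon}{2} \;\le\; F(\omega \mid c^*) \;\le\; \hat{x} + \tfrac{\varepsilon}{2} \;=\; \tilde{x}.
\]

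Third, I would invoke \Cref{lem:threshold-query-subgradient}: since $c^*$ is the agent's best-response cost to the $(\omega, r)$-threshold contract, $r$ is a subgradient of $G_\omega$ at $z := F(\omega \mid c^*)$. Combining with the preceding display shows that on the good event of probability $\ge 1 - \delta$, the point $z$ lies in $[\tilde{x} - \varepsilon, \tilde{x}]$, which is exactly the conclusion.

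There is no substantive obstacle: the argument is a direct concentration-plus-invocation proof. The only small piece of care is the one-sided shift that defines $\tilde{x}$, which turns the symmetric $\pm \varepsilon/2$ Hoeffding window into the asymmetric interval $[\tilde{x} - \varepsilon, \tilde{x}]$ demanded by the conclusion; this is purely bookkeeping and carries no mathematical difficulty.
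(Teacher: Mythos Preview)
Your proposal is correct and matches the paper's approach exactly: the paper states that the lemma is ``a standard application of Hoeffding's inequality'' and gives no further proof, and your argument---IID Bernoulli indicators, Hoeffding concentration with window $\pm\varepsilon/2$, the $\varepsilon/2$ upward shift to produce the asymmetric interval, and invocation of \Cref{lem:threshold-query-subgradient}---is precisely that standard application spelled out in full.
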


Combining \Cref{lmm: approximate concave,lem:constructing-subgradient-query} together, we now conclude the design of the initialization step with the following result. We defer its proof to \Cref{app:conclude-initialization}.

\begin{lemma}
\label{lem:conclude-initialization}
    We can take 
    \[
        O\left(\frac{m^{11}H^{10} \log \nicefrac{mH}{\delta \varepsilon} \log \nicefrac{mH}{\varepsilon}}{\varepsilon^{20}}\right)
    \]
    contract queries to learn piece-wise linear functions $\tilde{F}(\omega|c)$ on $c \in [0, 1]$ such that with probability $1 - \delta$, for any $c \ge \nicefrac{\varepsilon^2}{144}$,
    \[
        F(\omega|c) \le \tilde{F}(\omega|c) \le F(\omega|c) + \frac{\varepsilon^2}{288mH}
        ~.
    \]
\end{lemma}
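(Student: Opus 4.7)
The plan is to invoke \Cref{lmm: approximate concave} separately for each threshold outcome $\omega \in \{1, 2, \ldots, m-1\}$; the case $\omega = 0$ is trivial since $F(0|c) = 1$ identically. For each such $\omega$, the function $c \mapsto F(\omega|c)$ is non-decreasing (by FOSD) and concave (by CDFP), maps $[0,1]$ into $[0,1]$, and vanishes at $c = 0$ (because the null action puts all mass on outcome $0$). Applying \Cref{lmm: approximate concave} with target accuracy $\varepsilon_0 = \nicefrac{\varepsilon^2}{288mH}$ yields a piecewise linear upper envelope $\tilde{F}(\omega|\cdot) \ge F(\omega|\cdot)$ satisfying $\tilde{F}(\omega|c) \le F(\omega|c) + \varepsilon_0$ whenever $c \ge \varepsilon_0$. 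Since $\varepsilon_0 \le \nicefrac{\varepsilon^2}{144}$ for any $m, H \ge 1$, this upper bound is valid throughout the cost range $[\nicefrac{\varepsilon^2}{144}, 1]$ required by the conclusion.

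Next I would realize the $\nicefrac{\varepsilon_0^4}{64}$-subgradient oracle that \Cref{lmm: approximate concave} expects by appealing to \Cref{lem:threshold-query-subgradient,lem:constructing-subgradient-query}. A single call of such an oracle at slope $r$ is implemented by issuing
\[
    O\!\left(\frac{1}{(\varepsilon_0^4/64)^2}\log\frac{1}{\delta'}\right) = O\!\left(\frac{1}{\varepsilon_0^8}\log\frac{1}{\delta'}\right)
\]
identical $(\omega,r)$-threshold contract queries; with probability at least $1 - \delta'$, \Cref{lem:constructing-subgradient-query} returns an estimate $\tilde{x}$ together with a witness point $z \in [\tilde{x} - \nicefrac{\varepsilon_0^4}{64}, \tilde{x}]$ where the subgradient of the inverse function $G_\omega$ equals $r$, which is exactly the semantics of an $\nicefrac{\varepsilon_0^4}{64}$-subgradient oracle call.

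To finish, I would multiply out and union-bound. \Cref{lmm: approximate concave} requires $O(\nicefrac{1}{\varepsilon_0^2} \log \nicefrac{1}{\varepsilon_0})$ oracle calls per $\omega$, so the per-$\omega$ query count is $O\bigl(\nicefrac{1}{\varepsilon_0^{10}} \log \nicefrac{1}{\varepsilon_0} \log \nicefrac{1}{\delta'}\bigr)$. Setting $\delta' = \Theta\!\left(\nicefrac{\delta \varepsilon_0^2}{m \log(1/\varepsilon_0)}\right)$ and union-bounding over the at most $m-1$ values of $\omega$ and over all simulated subgradient queries shows that all oracle responses are correct simultaneously with probability at least $1-\delta$. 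Substituting $\varepsilon_0 = \nicefrac{\varepsilon^2}{288mH}$ yields the claimed bound
\[
    O\!\left(\frac{m^{11} H^{10} \log(mH/\delta\varepsilon)\, \log(mH/\varepsilon)}{\varepsilon^{20}}\right).
\]

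I do not anticipate a conceptual obstacle, since every ingredient is in place; the work is purely in bookkeeping. The two points to watch are: (i) ensuring the target accuracy $\varepsilon_0$ is chosen small enough that the valid-range condition $y \ge \varepsilon_0$ of \Cref{lmm: approximate concave} is implied by the cost threshold $c \ge \nicefrac{\varepsilon^2}{144}$ stated in the conclusion; and (ii) confirming that the inflated per-query failure parameter $\delta'$ only contributes to the logarithms, so that the stated $\log(mH/\delta\varepsilon)$ factor absorbs it without polynomial loss. Everything else is a direct composition of \Cref{lmm: approximate concave,lem:threshold-query-subgradient,lem:constructing-subgradient-query}.
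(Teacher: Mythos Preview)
Your proposal is correct and follows essentially the same route as the paper's proof: both combine \Cref{lmm: approximate concave} (with accuracy parameter $\varepsilon_0 = \Theta(\nicefrac{\varepsilon^2}{mH})$) with the simulated subgradient oracle from \Cref{lem:threshold-query-subgradient,lem:constructing-subgradient-query}, then multiply out and union-bound over the $m-1$ outcomes and the polynomially many oracle calls. Your bookkeeping of the two delicate points (matching the valid-range threshold of \Cref{lmm: approximate concave} against the cost cutoff $\nicefrac{\varepsilon^2}{144}$, and absorbing the inflated $\delta'$ into the logarithm) is exactly what the paper does, only stated more explicitly.
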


\subsubsection{Analysis of the Iterative Refinement Step}
\label{sec:refinement}

The analysis consists of two parts.
We will first state three invariants and show that the instances $I$ and $\tilde{I}$ satisfy the invariants at all times in the \emph{Iterative Refinement} step.
Then, we will bound the running time by proving that each iteration runs in polynomial time, and there can be at most polynomially many iterations.

\paragraph{Invariants.}

\begin{itemize}
    \item[(I1)] Every action $\tilde{a} \in \tilde{I}$ has an $\nicefrac{\varepsilon}{3}$-approximation in $I$.
    \item[(I2)] Every action $a \in I$ with $c_a \ge \nicefrac{\varepsilon^2}{144}$ has an $\nicefrac{\varepsilon}{3}$-approximation in $\tilde{I}$.
    \item[(I3)] For every action $a \in I$ with $0 \le c_a < \nicefrac{\varepsilon^2}{144}$, its outcome distribution $D_a$ is first-order stochastically dominated by $\tilde{D}_{\tilde{a}}$ for some $\tilde{a} \in \tilde{A}$ with cost $\tilde{c}_{\tilde{a}} = 0$.
\end{itemize}

\begin{lemma}\label{lmm: invariants}
    The instances $I$ and $\tilde{I}$ satisfy invariants (I1), (I2) and (I3) at all times in the \emph{Iterative Refinement} step.
\end{lemma}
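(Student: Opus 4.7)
My plan is to prove all three invariants by induction on the number of iterations of the Iterative Refinement loop, first establishing them right after the Initialization step and then showing that appending a new zero-cost action in the inductive step does not break them.

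\textbf{Base case.} The key structural fact I will invoke is that, under FOSD and CDFP, every action $a \in A$ already lies on the concave closure, so $\sum_{\omega' \ge \omega} f(\omega'|a) = F(\omega\,|\,c_a)$ for every $\omega$. Combined with the uniform closeness $F(\omega|c) \le \tilde{F}(\omega|c) \le F(\omega|c) + \nicefrac{\varepsilon^2}{288mH}$ on $c \ge \nicefrac{\varepsilon^2}{144}$ from \Cref{lem:conclude-initialization}, each invariant becomes a bookkeeping check against the $\nicefrac{\varepsilon}{3}$-approximation tolerances, i.e.\ cost gap $\le (\nicefrac{\varepsilon}{3})^2/16 = \nicefrac{\varepsilon^2}{144}$ and Kolmogorov distance $\le (\nicefrac{\varepsilon}{3})^2/(32mH) = \nicefrac{\varepsilon^2}{288mH}$. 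For (I1) I will witness every nonzero $\tilde{a} \in \tilde{A}$ by the singleton mixed action in $I$ of identical cost $c_a = \tilde{c}_{\tilde{a}}$, and the zero-cost action $\tilde{a} = 0$ (whose distribution is $\tilde{F}(\cdot|\nicefrac{\varepsilon^2}{144})$) by the mixed action in $I$ with expected cost exactly $\nicefrac{\varepsilon^2}{144}$ and complementary CDF $F(\cdot|\nicefrac{\varepsilon^2}{144})$, whose existence is guaranteed by FOSD+CDFP. For (I2) the singleton $\tilde{a} = c_a$ in $\tilde{A}$ works for every $c_a \ge \nicefrac{\varepsilon^2}{144}$. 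For (I3), monotonicity of $F(\omega|\cdot)$ combined with the overestimation $\tilde{F} \ge F$ gives $\tilde{F}(\omega|\nicefrac{\varepsilon^2}{144}) \ge F(\omega|c_a)$ for every $\omega$ and every $c_a < \nicefrac{\varepsilon^2}{144}$, so the zero-cost action in $\tilde{A}$ first-order stochastically dominates each such $D_a$.

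\textbf{Inductive step.} One iteration appends to $\tilde{A}$ a single action with cost $0$ and outcome distribution $\hat{D}$ estimated from the polynomially many samples of the robustified contract $p$. Nothing is removed from or altered in $\tilde{A}$, so the witnesses that established (I2) and (I3) in the previous round remain valid, and I only need to verify (I1) for the new action. By the inductive hypothesis (I1) the action $\tilde{a}^*$ incentivized by $\tilde{p}^*$ in $\tilde{I}$ has an $\nicefrac{\varepsilon}{3}$-approximation in $I$, so \Cref{lem:robustification} applies with parameter $\nicefrac{\varepsilon}{3}$ and yields a dichotomy: either the true principal utility of $p$ in $I$ is at least $\OPT_H(\tilde{I}) - \nicefrac{\varepsilon}{3}$, in which case the $\nicefrac{\varepsilon^2}{576mH}$-accurate TV estimate and \Cref{lem:tv-expectation} let the algorithm's $\ge \OPT_H(\tilde{I}) - \nicefrac{\varepsilon}{2}$ check succeed and no action is added; or the best response $a^*(p)$ in $I$ has no $\nicefrac{\varepsilon}{3}$-approximation in $\tilde{I}$, which via the inductive (I2) forces $c_{a^*(p)} < \nicefrac{\varepsilon^2}{144}$. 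In the latter case the singleton mixed action that always plays $a^*(p)$ is the required $\nicefrac{\varepsilon}{3}$-approximation of the newly added action: the cost gap is $|0 - c_{a^*(p)}| < \nicefrac{\varepsilon^2}{144}$, and the TV distance $\TV(\hat{D}, D_{a^*(p)}) \le \nicefrac{\varepsilon^2}{576mH} \le \nicefrac{\varepsilon^2}{288H}$ holds by \Cref{lem:tv-convergence} applied to the samples (using $m \ge 1$).

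\textbf{Main difficulty.} The conceptual content has already been packaged by \Cref{lem:robustification} and \Cref{lem:conclude-initialization}; the remaining obstacle I expect is constant-threading. All the tolerances---Kolmogorov $\nicefrac{\varepsilon^2}{288mH}$ from the subgradient-based CDF learning, TV $\nicefrac{\varepsilon^2}{576mH}$ from the contract-query estimation, the cost cutoff $\nicefrac{\varepsilon^2}{144}$, the algorithm's utility check $\nicefrac{\varepsilon}{2}$, and the robustification slack $\nicefrac{\varepsilon}{3}$---must simultaneously fit inside the $\nicefrac{\varepsilon}{3}$-approximation definition and leave a clean buffer between the empirical and true utilities so that the algorithm's branching agrees with the robustification dichotomy. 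Once these are aligned as in the pseudocode, each invariant follows from the one-line bookkeeping above.
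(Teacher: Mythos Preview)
Your proposal is correct and follows essentially the same inductive argument as the paper: the base case via \Cref{lem:conclude-initialization} together with FOSD for (I3), and the inductive step via the robustification dichotomy (\Cref{lem:robustification}), invariant (I2), and the TV-distance bound on the sampled distribution (your contrapositive phrasing of the utility check is equivalent to the paper's). One minor wording issue: for nonzero $\tilde{a}$ in the base case of (I1), the witness in $I$ should in general be a \emph{mixed} action with expected cost $\tilde{a}$ and complementary CDF $F(\cdot\,|\,\tilde{a})$ (as guaranteed by FOSD+CDFP), not a singleton, since $A$ need not contain a pure action of that exact cost.
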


\begin{proof}
    We will prove it by induction on the iterations.
    After the initialization step, invariants (I1) and (I2) are true by \Cref{lem:conclude-initialization} and that $\nicefrac{\varepsilon^2}{144}\geq \nicefrac{\varepsilon^2}{288mH}$.  
    Note that
    we include an action $\tilde{a} = 0$ with outcome distribution $\tilde{F}(\cdot | \nicefrac{\varepsilon^2}{144})$, which, by \Cref{lem:conclude-initialization}, first-order stochastically dominates $F(\cdot | \nicefrac{\varepsilon^2}{144})$, the outcome distribution of the action in $I$ with cost $\nicefrac{\varepsilon^2}{144}$. By FOSD, the outcome distribution of any action in $I$ with cost less than  $\nicefrac{\varepsilon^2}{144}$ is first-order stochastically dominated by $F(\cdot | \nicefrac{\varepsilon^2}{144})$. 
    Note that invariants (I2) and (I3) continue to hold in 
    the \emph{Iterative Refinement} step as $\tilde{I}$ includes more actions into its action space.
    It remains to prove invariant (I1) in the induction. 

    Suppose the invariant (I1) holds up to some iteration, and consider the next iteration.
    This means that the estimated principal utility under the robustified contract $p$, denoted as $\principalUtilityE(p)$, is strictly less than $\OPT_H(\tilde{I})-\nicefrac{\varepsilon}{2}$. 
    As the outcome distribution is estimated up to a total variation distance of $\nicefrac{\varepsilon^2}{576mH}$, by \Cref{lem:tv-expectation}, the true principal utility under $p$, $\principalUtility(p)$, satisfies
    \[
        \principalUtility(p) \leq \tilde{u}_{\rm{p}}(p) + \frac{\varepsilon^2}{288m} \leq \tilde{u}_{\rm{p}}(p) + \frac{\varepsilon}{6}
        ~.
    \]
    Therefore, the robustified contract $p$ yields strictly less than $\OPT_H(\tilde{I}) - \nicefrac{\varepsilon}{3}$ principal's utility in instance $I$.
    By \Cref{lem:robustification}, it must be the case that the agent's best-response action $a^*(p)$ in instance $I$ does not have an $\nicefrac{\varepsilon}{3}$-approximation in $\tilde{I}$.

    Further, by invariant (I2), we further conclude that the cost $c_{a^*(p)}$ is less than $\nicefrac{\varepsilon^2}{144}$.
    In other words, adding it to $\tilde{I}$ with cost $0$ satisfies the cost requirement of $\nicefrac{\varepsilon}{3}$-approximation in invariant (I1).
    Finally, by \Cref{lem:tv-convergence}, the number of queries to contract $p$ allows us to estimate the outcome distribution of $a^*(p)$ up to a total variation distance of
    $\nicefrac{\varepsilon^2}{576mH}\leq \nicefrac{\varepsilon^2}{288H}$. 
    Hence, we also satisfy the closeness of the outcome distributions in invariant (I1).
\end{proof}

We complete the first part of the analysis by proving that the invariants ensure that the empirical instance's optimal principal's utility is close to optimal, and thus, is a sufficiently good benchmark for achieving near optimality in the original instance.

\begin{lemma}
    \label{lem:contract-query-empirical-approximation}
    Under invariants (I1), (I2), and (I3), we have
    \[
		\OPT_H(\tilde{I}) \geq \OPT_H(I) - \frac{\varepsilon}{3}
        ~.
	\]
\end{lemma}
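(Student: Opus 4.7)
The plan is to exhibit, in the empirical instance $\tilde{I}$, an $H$-bounded contract whose principal's utility is at least $\OPT_H(I) - \nicefrac{\varepsilon}{3}$. Let $p^*$ be an optimal $H$-bounded contract for $I$ and let $a^* = a^*(p^*) \in A$ be the action it incentivizes in $I$, so that $\principalUtility(p^*) = \OPT_H(I)$. I would split on whether $c_{a^*} \ge \nicefrac{\varepsilon^2}{144}$ or $c_{a^*} < \nicefrac{\varepsilon^2}{144}$; this is precisely the threshold separating the regimes handled by invariants (I2) and (I3).

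In the high-cost case, I would apply \Cref{lem:robustification} with the roles of $I$ and $\tilde{I}$ swapped. Inspection of its proof shows that the two IC inequalities \Cref{eqn:ic-of-p,eqn:ic-of-pstar} and the accompanying closeness bounds appear symmetrically between the two instances, so the relabeling preserves validity with the same constants. Let $p$ be the $\nicefrac{\varepsilon}{3}$-robustification of $p^*$ via \Cref{eqn: robustified contract}; since $p_\omega = (1 - \nicefrac{\varepsilon}{6})\, p^*_\omega + \nicefrac{\varepsilon}{6}\, v_\omega$ is a convex combination of values in $[0, H]$, $p$ is itself $H$-bounded. Invariant (I2) supplies the $\nicefrac{\varepsilon}{3}$-approximation of $a^*$ in $\tilde{I}$ required by the (swapped) hypothesis, while invariant (I1) precludes the second case of the lemma's conclusion, since every action in $\tilde{I}$ admits an $\nicefrac{\varepsilon}{3}$-approximation in $I$. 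The first case then yields $\principalUtility(p) \ge \OPT_H(I) - \nicefrac{\varepsilon}{3}$ in $\tilde{I}$, which immediately gives $\OPT_H(\tilde{I}) \ge \OPT_H(I) - \nicefrac{\varepsilon}{3}$.

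In the low-cost case $c_{a^*} < \nicefrac{\varepsilon^2}{144}$, I would show that the null contract $p \equiv 0$ in $\tilde{I}$ already achieves principal's utility at least $\OPT_H(I)$. Under the null contract, every action $\tilde{a} \in \tilde{A}$ gives agent utility $-\tilde{c}_{\tilde{a}} \le 0$, so the best-response has cost zero; tie-breaking in favor of the principal selects a zero-cost action maximizing $\E \,v_\omega$. By (I3) there exists a zero-cost $\tilde{a} \in \tilde{A}$ with $\tilde{D}_{\tilde{a}} \succeq D_{a^*}$, and since values are non-decreasing in $\omega$, first-order stochastic dominance gives $\E_{\tilde{D}_{\tilde{a}}}\, v_\omega \ge \E_{D_{a^*}}\, v_\omega \ge \E_{D_{a^*}}(v_\omega - p^*_\omega) = \OPT_H(I)$, where the last inequality uses $p^*_\omega \ge 0$. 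Hence $\OPT_H(\tilde{I}) \ge \OPT_H(I)$ in this case, which is stronger than needed.

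The only subtle step is invoking \Cref{lem:robustification} with its two instances swapped: this requires checking that its proof nowhere distinguishes which instance is the ``empirical'' one. The IC inequalities, the four closeness comparisons, and the final application of the robustification identity $p - \tilde{p}^* = \tfrac{\varepsilon}{2-\varepsilon}(v - p)$ all treat $I$ and $\tilde{I}$ on equal footing, so the swap is valid and the constants carry over. Everything else reduces to unpacking the invariants.
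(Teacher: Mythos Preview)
Your proof is correct and follows essentially the same approach as the paper: the same case split on $c_{a^*}$, the swapped application of \Cref{lem:robustification} in the high-cost case, and the null-contract argument via (I3) in the low-cost case. Your version is slightly more detailed in two places where the paper is terse---you explicitly invoke (I1) to rule out the second alternative of \Cref{lem:robustification} and you verify that the robustified contract remains $H$-bounded---but the underlying argument is the same.
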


\begin{proof}
	Denote the optimal $H$-bounded contract under the original instance $I$ as $p^*$.
    Suppose that $p^*$ incentivizes action $a^*$ in instance $I$.
    If $c_{a^*} \ge \nicefrac{\varepsilon^2}{144}$, then \Cref{lem:robustification} (with the roles of $I$ and $\tilde{I}$ swapped) implies that there is an $H$-bounded contract for $\tilde{I}$ such that the principal's expected utility is at least
    \[ 
        \OPT_H(I) - \frac{\varepsilon}{3}
        ~.
    \]
    
    If $c_{a^*}<\nicefrac{\varepsilon^2}{144}$, then by invariant (I3) there is an action $\tilde{a}$ with cost $\tilde{c}_{\tilde{a}} = 0$ and an outcome distribution $\tilde{D}_{\tilde{a}}$ that first-order stochastically dominates $D_{a^*}$. 
    Consider a null contract in instance $\tilde{I}$, which has zero payment for any outcome. 
    By the agent's IC, it only chooses an action with cost 0. 
    Thus, since the agent breaks ties in favor of the principal, the principal's expected utility under this null contract is at least
    \[
        \E_{\omega \sim \tilde{D}_{\tilde{a}}}\, v_\omega \ge \E_{\omega \sim D_{a^*}}\, v_\omega \ge \OPT_H(I)
        ~.
    \]

\end{proof}

\paragraph{Computational Tractability.} To show that the \emph{Iteration Refinement} step stops in polynomial rounds, it suffices to show the efficient computability of optimal contracts for the empirical instances, and that the number of iterations is polynomial, which are concluded in the following lemmas.

\begin{lemma}
\label{lem: efficient computing of optimal contract}
    The optimal contract $\tilde{p}^*$ of the empirical instance $\tilde{I}$ can be computed in time polynomial in $m$, $H$, and $\nicefrac{1}{\varepsilon}$. 
\end{lemma}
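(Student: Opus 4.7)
The plan is to reduce computing $\tilde{p}^*$ to solving polynomially many linear programs of polynomial size. Although the empirical instance $\tilde{I}$ has a continuum action space $\{0\}\cup[\varepsilon^2/144,1]$ augmented with finitely many zero-cost actions added during \emph{Iterative Refinement}, the piece-wise linearity of the learned complementary CDFs $\tilde{F}(\omega|\cdot)$ forces the agent's best response to lie in a polynomial-size candidate set.

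First, for any fixed $H$-bounded contract $p$, the agent's expected utility
\[
    \sum_{\omega \in \Omega} p_\omega \bigl(\tilde{F}(\omega|c)-\tilde{F}(\omega+1|c)\bigr) - c
\]
is piece-wise linear in $c$ on $[\varepsilon^2/144,1]$, since each $\tilde{F}(\omega|\cdot)$ is piece-wise linear by construction. On every linear piece the utility is linear in $c$ and hence maximized at an endpoint; when it is constant on a piece, the principal's utility $\sum_\omega (v_\omega-p_\omega)\tilde{f}(\omega|c)$ is also linear on that piece and thus maximized at an endpoint as well, so the favor-the-principal tie-breaking rule permits assuming the agent picks an endpoint. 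The candidate best-response set therefore equals $B\cup \tilde{A}_0$, where $B$ is the union of breakpoints of the functions $\tilde{F}(\omega|\cdot)$ over all $\omega\in\Omega$, and $\tilde{A}_0$ is the set of zero-cost actions in $\tilde{A}$. By \Cref{lem:conclude-initialization}, $|B|$ is polynomial in $m,H,1/\varepsilon$; $|\tilde{A}_0|$ is polynomial because each iteration of the \emph{Iterative Refinement} step adds exactly one zero-cost action, and a polynomial bound on the iteration count is established in a subsequent lemma.

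Second, enumerate each $\tilde{a}\in B\cup \tilde{A}_0$ and solve the LP in the $m$ payment variables that maximizes $\sum_\omega (v_\omega-p_\omega)\tilde{f}(\omega|\tilde{a})$ subject to $0\le p_\omega\le H$ for all $\omega$, together with the incentive constraints requiring that $\tilde{a}$ weakly beats every other candidate $\tilde{a}'\in B\cup \tilde{A}_0$. By the reduction above, these finitely many IC constraints suffice to enforce IC against the entire continuum action space. The LP has $m$ variables and $O(|B|+|\tilde{A}_0|+m)$ constraints and is solvable in polynomial time; returning the best contract among all enumerated candidates yields $\tilde{p}^*$. The main obstacle is justifying the reduction to this finite candidate set: one must handle degenerate linear pieces on which the agent is indifferent, where the argument hinges on the favor-the-principal tie-breaking rule together with the fact that both the agent's and the principal's expected utilities are \emph{simultaneously} linear on each piece, ensuring that restricting to endpoints loses nothing.
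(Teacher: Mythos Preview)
Your proposal is correct and follows essentially the same approach as the paper: exploit the piece-wise linearity of the learned complementary CDFs to reduce the agent's best response to a polynomial-size candidate set (the breakpoints plus the finitely many zero-cost actions), and then solve one linear program per candidate. The paper phrases the reduction as ``every non-breakpoint action is dominated by a breakpoint'' via a convex-combination argument in an increment representation $q_\omega = p_\omega - p_{\omega-1}$, but this is the same piecewise-linearity observation you make; if anything, your treatment is slightly more careful in explicitly including the zero-cost actions added during \emph{Iterative Refinement} and in spelling out why the favor-the-principal tie-breaking preserves the reduction on degenerate pieces.
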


While the empirical instance has a continuum of actions, the complementary CDFs are piece-wise linear functions. Therefore, it suffices to consider actions that are breakpoints of the complementary CDFs, as other actions must be dominated by some of them. The detailed proof is in \Cref{app: efficient computing of optimal contract}.

\begin{lemma}
\label{lem: maximum number of iteration}
    The \emph{iterative refinement} step consists of at most $O(\nicefrac{m^2H}{\varepsilon^2})$ iterations.
\end{lemma}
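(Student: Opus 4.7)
The plan is to bound the number of iterations by a monotone-chain length argument applied to the outcome distributions of the actions added across iterations, exploiting FOSD to place them on a one-dimensional chain in $[0,1]^{m-1}$.

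First, I would set up the iteration-by-iteration picture. Each non-terminating iteration appends a cost-$0$ action $\tilde{a}_{t+1}$ to $\tilde{I}$ whose outcome distribution $\tilde{D}_{t+1}$ is an empirical estimate within total variation (hence Kolmogorov) distance $\nicefrac{\varepsilon^2}{576mH}$ of the true distribution $D_{a_{t+1}}$ of the agent's best-response action $a_{t+1} = a^*(p) \in A$. From the proof of \Cref{lmm: invariants} I would recall that $c_{a_{t+1}} < \nicefrac{\varepsilon^2}{144}$ and that $a_{t+1}$ admits no $\nicefrac{\varepsilon}{3}$-approximation in the current $\tilde{I}_t$. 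Specialising this non-approximation to a pure previously added action $\tilde{a}_i$ (a valid mixed action with cost $0$, trivially within $\nicefrac{\varepsilon^2}{144}$ of $c_{a_{t+1}}$) gives $\Kol(D_{a_{t+1}}, \tilde{D}_i) > \nicefrac{\varepsilon^2}{288mH}$, and the triangle inequality together with the estimation error for $\tilde{D}_i$ then yields a quantitative separation among the true distributions:
\[
    \Kol\bigl(D_{a_i}, D_{a_j}\bigr) \;>\; \frac{\varepsilon^2}{288mH} - \frac{\varepsilon^2}{576mH} \;=\; \frac{\varepsilon^2}{576mH} \qquad \text{for all } i \ne j.
\]

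Next, I would invoke FOSD to place the added actions on a monotone chain. Since any two actions in $I$ with equal cost share an identical outcome distribution under FOSD, and we have just established strictly positive Kolmogorov separation between each pair, the costs $c_{a_1}, c_{a_2}, \dots, c_{a_N}$ must be distinct; after reindexing, $c_{a_1} < c_{a_2} < \cdots < c_{a_N} < \nicefrac{\varepsilon^2}{144}$. Under FOSD this renders $F(\omega | c_{a_t}) = \sum_{\omega' \ge \omega} f(\omega' | a_t)$ coordinate-wise non-decreasing in $t$, so the Kolmogorov distance between consecutive terms collapses, without the absolute value, to $\max_\omega \bigl(F(\omega | c_{a_{t+1}}) - F(\omega | c_{a_t})\bigr)$.

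Finally, I would telescope: summing the consecutive separations and bounding each max by the sum over $\omega \in \{1,\dots,m-1\}$,
\[
    (N-1) \cdot \frac{\varepsilon^2}{576mH} \;\le\; \sum_{t=1}^{N-1} \max_\omega \bigl(F(\omega | c_{a_{t+1}}) - F(\omega | c_{a_t})\bigr) \;\le\; \sum_{\omega=1}^{m-1} \bigl(F(\omega | c_{a_N}) - F(\omega | c_{a_1})\bigr) \;\le\; m-1,
\]
which rearranges to $N = O\bigl(\nicefrac{m^2 H}{\varepsilon^2}\bigr)$. The delicate point---and the main obstacle to watch---is that the estimation-error parameter $\nicefrac{\varepsilon^2}{576mH}$ is exactly half of the $\nicefrac{\varepsilon}{3}$-approximation Kolmogorov threshold $\nicefrac{\varepsilon^2}{288mH}$, so the triangle inequality must be calibrated in precisely this way to leave a strictly positive $\Omega(\nicefrac{\varepsilon^2}{mH})$ separation among the true $D_{a_t}$'s; any coarser estimation would collapse the chain argument.
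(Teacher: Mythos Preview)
Your proposal is correct and follows essentially the same approach as the paper's proof: both establish a pairwise Kolmogorov separation of $\nicefrac{\varepsilon^2}{576mH}$ among the true distributions $D_{a_i}$ via the non-approximation guarantee and the triangle inequality, then sort by cost, invoke FOSD to remove the absolute value in the Kolmogorov distance, bound each $\max_\omega$ by $\sum_\omega$, and telescope against the $m$-dimensional range of the complementary CDF vector. The only cosmetic difference is that you cite the proof of \Cref{lmm: invariants} for the non-approximation fact, whereas the paper re-derives it inline from \Cref{lem:robustification} and the $\nicefrac{\varepsilon}{6}$ utility-estimation error.
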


The lemma follows from the monotonicity of the complementary CDFs and that function values are bounded. 
The detailed proof is in \Cref{app: maximum number of iteration}.

\subsubsection{Discussion}

\subsubsection*{Comparison to Learning in Stackelberg Games}

While contract design can be seen as a special type of Stackelberg game, we stress that the existing results on learning in Stackelberg games are not directly applicable. 
Concretely, \cite{letchford2009learning} and \cite{peng2019learning} studied security games and bi-matrix games, where the follower has a \emph{finite} action set.
Further, the learning algorithm therein can query a principal's strategy and observe the agent's choice of action.
By contrast, under the contract query model, contract design (1) has 
possibly a \emph{continuum} of actions, and (2) the algorithm only observes the realized project outcome but not the agent's action. 

We also note that \cite{cacciamani2023online}, independent and concurrent to this work, modified the approach of \cite{peng2019learning} to address the second difficulty above, and obtained polynomial query complexity when the agent has only a constant number of actions.
Their algorithm learns the feasible region of contracts that incentivize each meta-action that corresponds to one or more actions with similar outcome distributions.
By contrast, our setting allows a continuum of unknown actions.
Our algorithm does not guarantee to find every action that can be incentivized by contracts, but only those relevant in defining an approximately optimal contract.

\subsubsection*{Dependence on FOSD and CDFP}
We use these two assumptions to show that threshold contracts can extract sufficient information about the complementary CDFs of the outcome distributions.
If FOSD does not hold, then threshold contracts may not provide information about the complementary CDF in the region where it is decreasing;
we also rely on FOSD to upper bound the number of iterations in the \emph{Iterative Refinement} stage (\Cref{lem: maximum number of iteration}).
Similarly, if CDFP fails, then threshold contracts may not offer information in the region where the complementary CDF disagrees with its concave closure.
We leave as an interesting open question whether other classes of contracts can extract information on the outcome distributions under weaker assumptions.

\subsubsection{Proof of \texorpdfstring{\Cref{thm: contract query}}{Theorem 3.8}}

\paragraph{Query Complexity.}
The number of queries used in the \emph{Initialization} step is the bottleneck and matches the stated query complexity bound.
The number of queries used in the \emph{Iterative Refinement} step is at most
\[
    \underbrace{\tilde{O} \left( \frac{m^3 H^2}{\varepsilon^4} \right)}_{\substack{\vphantom{\substack{\mbox{1}\\\mbox{2}}} \mbox{\footnotesize number of queries}\\\mbox{\footnotesize per iteration}}}
    \cdot 
    \underbrace{O \left( \frac{m^2 H}{\varepsilon^2} \right)}_{\substack{\vphantom{\substack{\mbox{1}\\\mbox{2}}} \mbox{\footnotesize number of iterations}\\\mbox{\footnotesize by \Cref{lem: maximum number of iteration}}}}
    = ~ \tilde{O} \left( \frac{m^5 H^3}{\varepsilon^6} \right)
    ~.
\]

\paragraph{Computational Complexity.}
By \Cref{lem: efficient computing of optimal contract}, we can compute the optimal contract for the empirical instance in polynomial time.
By \Cref{lem: maximum number of iteration}, the \emph{Iterative Refinement} step runs for at most a polynomial number of iterations.
Finally, the other steps run in polynomial time by definition.
Therefore, the learning algorithm runs in polynomial time.

\paragraph{Approximation Guarantee.}
Specifically, the algorithm will return a contract $p$ under which the principal's estimated utility $\tilde{u}_{\rm{p}}(p)$ is at least $\OPT_H(\tilde{I})-\nicefrac{\varepsilon}{2}$.
Further, since we have estimated the outcome distribution up to total variation distance $\nicefrac{\varepsilon^2}{576mH}$, by \Cref{lem:tv-expectation}, we estimated the principal's expected utility up to an additive error of $\nicefrac{\varepsilon^2}{576m} < \nicefrac{\varepsilon}{6}$. 
Therefore, the principal's expected utility under $p$ is at least
\[
    \OPT_H(\tilde{I})-\frac{\varepsilon}{2} - \frac{\varepsilon}{6} = \OPT_H(\tilde{I}) - \frac{2\varepsilon}{3}
    ~.
\]

Further by \Cref{lem:contract-query-empirical-approximation}, this is at least $\OPT_H(I)- \varepsilon$.

\subsubsection{Proof of \texorpdfstring{\Cref{lmm: approximate convex}}{Lemma 3.10}}
\label{sec:approximate-convex}

\begin{figure}
    \centering
    \begin{subfigure}{.44\textwidth}
        \centering
        \begin{tikzpicture}
        \begin{axis}[
            xmin=0, xmax=0.63, 
            ymin=0, 
            width=.95\textwidth,
            axis lines=left,
            ticks=none,
            line width=0.7pt,
            axis line style={-{Stealth[angle'=45]}},
            samples=500,
            clip=false
        ]
            \addplot [black,mark=none,line width=0.7pt,domain=0:0.54]{x^2};

            \addplot [black,mark=none,line width=0.7pt,domain=0.25:0.55,densely dashed]{2*0.4*(x-0.4)+(0.4)^2};
            \addplot [black,mark=none,line width=0.7pt,domain=0.3:0.6,densely dashed]{2*0.4*(x-0.4)+0.1};

            \addplot [black,mark=none,line width=0.7pt,domain=0.2:0.3]{2*0.2*(x-0.2)};
            \addplot [black,mark=none,line width=0.7pt,domain=0.3:0.4]{2*0.3*(x-0.3)+0.04};
            \addplot [black,mark=none,line width=0.7pt,domain=0.4:0.5]{2*0.4*(x-0.4)+0.1};
            \addplot [black,mark=none,line width=0.7pt,domain=0.5:0.6116]{2*0.5*(x-0.5)+0.18};

            \addplot [black,mark=none,line width=0.5pt,dashed]
                coordinates {
                    (0.2,0) (0.2,0.04)
                };
            \addplot [black,mark=none,line width=0.5pt,dashed]
                coordinates {
                    (0.3,0) (0.3,0.09)
                };
            \addplot [black,mark=none,line width=0.5pt,dashed]
                coordinates {
                    (0.4,0) (0.4,0.16)
                };
            \addplot [black,mark=none,line width=0.5pt,dashed]
                coordinates {
                    (0.5,0) (0.5,0.25)
                };
   
            \fill [black] (0.2, 0.04) circle (2pt); 
            \fill [black] (0.3, 0.09) circle (2pt);
            \fill [black] (0.4, 0.16) circle (2pt);
            \fill [black] (0.5, 0.25) circle (2pt);

            \fill [black] (0.2, 0) circle (2pt); 
            \fill [black] (0.3, 0.04) circle (2pt);
            \fill [black] (0.4, 0.1) circle (2pt);
            \fill [black] (0.5, 0.18) circle (2pt);

            \node at (0.42,0.25) {$G$};
            \node at (0.56,0.157) {$\hat{G}$};
            
        \end{axis}
        \end{tikzpicture}
        \caption{Relationship between $G$ and $\hat{G}$. }
    \end{subfigure}
    \begin{subfigure}{.44\textwidth}
        \centering
        \begin{tikzpicture}
        \begin{axis}[
            xmin=0.15, xmax=0.7, 
            ymin=0, 
            width=.95\textwidth,
            axis lines=left,
            ticks=none,
            line width=0.7pt,
            axis line style={-{Stealth[angle'=45]}},
            samples=500,
            clip=false
        ]

            \addplot [black,mark=none,line width=0.7pt,domain=0.2:0.3]{2*0.2*(x-0.2)};
            \addplot [black,mark=none,line width=0.7pt,domain=0.3:0.4]{2*0.3*(x-0.3)+0.04};
            \addplot [black,mark=none,line width=0.7pt,domain=0.4:0.5]{2*0.4*(x-0.4)+0.1};
            \addplot [black,mark=none,line width=0.7pt,domain=0.5:0.6116]{2*0.5*(x-0.5)+0.18};

            \addplot [black,mark=none,line width=0.7pt,domain=0.28:0.36]{2*0.2*(x-0.28)};
            \addplot [black,mark=none,line width=0.7pt,domain=0.36:0.47]{2*0.3*(x-0.36)+0.032};
            \addplot [black,mark=none,line width=0.7pt,domain=0.47:0.55]{2*0.4*(x-0.47)+0.098};
            \addplot [black,mark=none,line width=0.7pt,domain=0.55:0.6796]{2*0.5*(x-0.55)+0.162};

            \addplot [black,mark=none,line width=0.7pt,domain=0.3:0.6,densely dashed]{2*0.4*(x-0.4)+0.1};
            \addplot [black,mark=none,line width=0.7pt,domain=0.37:0.65,densely dashed]{2*0.4*(x-0.47)+0.098};
            
            \fill [black] (0.2, 0) circle (2pt); 
            \fill [black] (0.3, 0.04) circle (2pt);
            \fill [black] (0.4, 0.1) circle (2pt);
            \fill [black] (0.5, 0.18) circle (2pt);

            \fill [black] (0.28, 0) circle (2pt); 
            \fill [black] (0.36, 0.032) circle (2pt);
            \fill [black] (0.47, 0.098) circle (2pt);
            \fill [black] (0.55, 0.162) circle (2pt);

            \addplot [black,mark=none,line width=0.5pt,dashed]
                coordinates {
                    (0.2,0.2916) (0.2,0) 
                };
            \addplot [black,mark=none,line width=0.5pt,dashed]
                coordinates {
                    (0.28,0.2916) (0.28,0) 
                };
            \addplot [black,mark=none,line width=0.5pt,dashed]
                coordinates {
                    (0.3,0.2916) (0.3,0.04) 
                };
            \addplot [black,mark=none,line width=0.5pt,dashed]
                coordinates {
                    (0.36,0.2916) (0.36,0.032) 
                };
            \addplot [black,mark=none,line width=0.5pt,dashed]
                coordinates {
                    (0.4,0.2916) (0.4,0.1) 
                };
            \addplot [black,mark=none,line width=0.5pt,dashed]
                coordinates {
                    (0.47,0.2916) (0.47,0.098) 
                };
            \addplot [black,mark=none,line width=0.5pt,dashed]
                coordinates {
                    (0.5,0.2916) (0.5,0.18) 
                };
            \addplot [black,mark=none,line width=0.5pt,dashed]
                coordinates {
                    (0.55,0.2916) (0.55,0.162) 
                };

            \draw [decorate,decoration=calligraphic brace,line width=0.8pt] (0.2,0.27) -- (0.28,0.27);
            \draw [decorate,decoration=calligraphic brace,line width=0.8pt] (0.3,0.27) -- (0.36,0.27);
            \draw [decorate,decoration=calligraphic brace,line width=0.8pt] (0.4,0.27) -- (0.47,0.27);
            \draw [decorate,decoration=calligraphic brace,line width=0.8pt] (0.5,0.27) -- (0.55,0.27);

            \node at (0.525,0.25) {$\hat{G}$};
            \node at (0.6,0.14) {$\tilde{G}$};
            
        \end{axis}
        \end{tikzpicture}
        \caption{Relationship between $\hat{G}$ and $\tilde{G}$. }
    \end{subfigure}
    \caption{An illustration on the proof of \Cref{lmm: approximate convex}.}
    \label{fig:approximate-convex}
\end{figure}
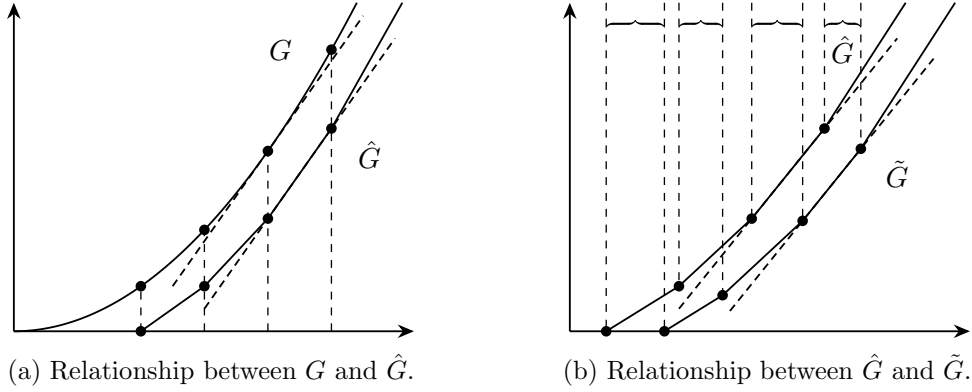
 
For a better understanding, we illustrate the main idea in the proof of \Cref{lmm: approximate convex} in \Cref{fig:approximate-convex}. 

Let $i_{\max} = \nicefrac{4}{\varepsilon} \ln \nicefrac{4}{\varepsilon}$. 
For any $-i_{\max} \leq i\leq  i_{\max}$, we query the subgradient $e^{\nicefrac{i\varepsilon}{4}}$, and let $\tilde{x}_i$ be the output of the subgradient oracle.
Further let $z_i$ be the point in $[\tilde{x}_i - \nicefrac{\varepsilon^2}{16}, \tilde{x}_i]$ such that $e^{\nicefrac{i\varepsilon}{4}}$ is a subgradient of $G$ at $z_i$, which is guaranteed to exist by the definition of the oracle.
However, notice that $z_i$ is unobservable to the algorithm and will only be used in the analysis. 
We artificially define $\tilde{x}_{-i_{\max}-1} = z_{-i_{\max}-1} = 0$ and $\tilde{x}_{i_{\max}+1} = z_{i_{\max}+1} =1$ for notational convenience. 

Next, we define an auxiliary approximate function $\hat{G} : [0, 1] \to [0, 1]$.
We first specify its value at $z_i$ recursively as follows, 
\begin{align*}
    \hat{G}(z_{-i_{\max}-1}) 
    = \hat{G}(z_{-i_{\max}}) = 0
    ~; \quad \hat{G}(z_{i+1})
    = \hat{G}(z_i) + e^{\nicefrac{i\varepsilon}{4}}\cdot (z_{i+1}-z_i) ~, \quad \forall -i_{\max} \leq i\leq i_{\max}
    ~.
\end{align*}
We extend the domain of $\hat{G}$ to $[0, 1]$ by linearly interpolating between $z_i$ and $z_{i+1}$ when $x \in (z_i, z_{i + 1})$.\footnote{For $x>z_{i_{\max}}$, We further extend it to $(z_{i_{\max}},s')$ using the line segment start at $(z_{i_{\max}},\hat{G}(z_{i_{\max}}))$ and has gradient  $e^{i_{\max}\varepsilon/4}$, where $s'$ is the point that the function value reaches $1$. This ensures that $\hat{G}$'s inverse has domain $[0,1]$ and matches that of the complementary CDFs. By \Cref{lmm:learn-convex-auxiliary}, $s'>s$, and thus $\hat{G}$ is well-defined on $[0,s]$. The same argument holds for $\tilde{G}$.  }

Recall that $G$ is non-decreasing and convex.
Hence, the subgradient of $G$ from $z_{-i_{\max}-1}$ to $z_{-i_{\max}}$ is at least $0$.
For any $-i_{\max} \le i \le i_{\max}$, the subgradient of $G$ from $z_i$ to $z_{i+1}$ is at least $e^{\nicefrac{i\varepsilon}{4}}$, a subgradient of $G$ at $z_i$.
Therefore, functions $G$ and $\hat{G}$ satisfy two properties summarized in the next lemma.
\begin{lemma}
    \label{lmm:learn-convex-auxiliary}
    The auxiliary approximate function $\hat{G}$ satisfies
    \begin{enumerate}
        \item For any $x \in [0, 1]$, $G(x) \ge \hat{G}(x)$;
        \item $G(x) - \hat{G}(x)$ is non-decreasing in $[0, 1]$.
    \end{enumerate}
\end{lemma}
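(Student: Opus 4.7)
The plan is to exploit the convexity of $G$ together with the subgradient guarantee: by construction $e^{i\varepsilon/4}$ is a subgradient of $G$ at $z_i$, and each linear piece of $\hat{G}$ on $[z_i, z_{i+1}]$ has slope exactly $e^{i\varepsilon/4}$, so it mirrors a supporting line of $G$ at $z_i$. I will first establish Property~1 by induction on the breakpoints, then derive Property~2 from the monotonicity of subgradients of convex functions.

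For Property~1, I first show by induction on $i$ that $\hat{G}(z_i) \le G(z_i)$. The base case $i = -i_{\max}$ is immediate since $\hat{G}(z_{-i_{\max}}) = 0$ and $G \ge 0$. For the inductive step, the subgradient inequality at $z_i$ gives $G(z_{i+1}) \ge G(z_i) + e^{i\varepsilon/4}(z_{i+1} - z_i)$; combining with the recursion $\hat{G}(z_{i+1}) = \hat{G}(z_i) + e^{i\varepsilon/4}(z_{i+1} - z_i)$ and the inductive hypothesis yields $\hat{G}(z_{i+1}) \le G(z_{i+1})$. To upgrade this bound to arbitrary $x$, for $x \in [z_i, z_{i+1}]$ I apply the subgradient inequality at $z_i$ once more: $G(x) \ge G(z_i) + e^{i\varepsilon/4}(x - z_i) \ge \hat{G}(z_i) + e^{i\varepsilon/4}(x - z_i) = \hat{G}(x)$, where the last equality uses linearity of $\hat{G}$ on the piece.

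For Property~2, the essential fact is that subgradients of a convex function on the real line are non-decreasing. On any piece $[z_i, z_{i+1}]$, $\hat{G}$ has constant slope $e^{i\varepsilon/4}$, while every subgradient of $G$ at a point in $[z_i, z_{i+1}]$ is at least $e^{i\varepsilon/4}$ (a subgradient of $G$ at the left endpoint $z_i$). Consequently $G - \hat{G}$ has a non-negative subgradient throughout each linear piece of $\hat{G}$ and is thus non-decreasing on each such piece. Since both $G$ and $\hat{G}$ are continuous, piecewise monotonicity chains together into global monotonicity on the full domain.

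I do not anticipate a serious obstacle; the lemma is essentially bookkeeping with convex subgradient inequalities. The mildest care goes to the boundary regions --- the flat segment $[0, z_{-i_{\max}}]$ where $\hat{G} \equiv 0$, and the extension past $z_{i_{\max}}$ described in the footnote --- but in both places $\hat{G}$ is again linear with slope bounded above by a subgradient of $G$ at the adjacent breakpoint, so the same induction plus subgradient comparison applies verbatim.
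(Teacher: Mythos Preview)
Your proposal is correct and follows essentially the same approach as the paper: the paper's argument (stated in the paragraph immediately preceding the lemma) is precisely that on each interval $[z_i,z_{i+1}]$ the subgradient of $G$ is at least $e^{i\varepsilon/4}$, the constant slope of $\hat{G}$ there, which yields both properties. Your write-up simply spells out the induction and the piecewise subgradient comparison more explicitly than the paper does.
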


This auxiliary function satisfies the approximation guarantee in the lemma.
However, we cannot compute it because we cannot observe $z_i$. 
Instead, we replace $z_i$ with the upper bound $\tilde{x}_i$ as guaranteed by the oracle.
The final approximate function $\tilde{G}$ is recursively defined at $\tilde{x}_i$ as
\begin{align*}
    \tilde{G}(\tilde{x}_{-i_{\max}-1})
    = \tilde{G}(\tilde{x}_{-i_{\max}}) = 0
    ~; \quad \tilde{G}(\tilde{x}_{i+1})
    = \tilde{G}(\tilde{x}_i) + e^{\nicefrac{i\varepsilon}{4}}\cdot (\tilde{x}_{i+1}-\tilde{x}_i) ~, \quad \forall -i_{\max} \leq i\leq i_{\max}
    ~.
\end{align*}
Furthermore, we extend the domain of $\tilde{G}$ to $[0, 1]$ by conducting linear interpolation. The following properties are natural.
\begin{lemma}
    \label{lmm:learn-convex-approximate}
    The approximate function $\tilde{G}$ satisfies
    \begin{enumerate}
        \item For any $x \in [0, 1]$, $\hat{G}(x) \ge \tilde{G}(x)$;
        \item $\hat{G}(x) - \tilde{G}(x)$ is non-decreasing in $[0, 1]$.
    \end{enumerate}
\end{lemma}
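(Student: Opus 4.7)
The plan is to observe that both $\hat{G}$ and $\tilde{G}$ are piecewise linear, convex, non-decreasing functions that both take the value $0$ at $x = 0$ and have the same sequence of slopes, namely $e^{i\varepsilon/4}$ on their $i$-th piece for $-i_{\max} \le i \le i_{\max}$, with slope $0$ before the $(-i_{\max})$-th breakpoint. The only difference is the location of the breakpoints: the slope of $\hat{G}$ increases from $e^{(i-1)\varepsilon/4}$ to $e^{i\varepsilon/4}$ at $z_i$, while the slope of $\tilde{G}$ does so at $\tilde{x}_i$. Since the oracle guarantees $\tilde{x}_i \geq z_i$, the breakpoints of $\tilde{G}$ are shifted rightward relative to those of $\hat{G}$, and the whole proof reduces to translating this rightward shift into a pointwise comparison of derivatives.

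First I would show that $\hat{G}'(x) \geq \tilde{G}'(x)$ for all $x \in [0,1]$ away from the finitely many breakpoints. For each $x$, let $i_{\hat{G}}(x)$ be the largest index $i$ with $z_i \leq x$ (with the convention that the slope is $0$ if no such index exists); then $\hat{G}'(x) = e^{i_{\hat{G}}(x) \varepsilon / 4}$, and define $i_{\tilde{G}}(x)$ analogously from the $\tilde{x}_i$. Because $\tilde{x}_i \geq z_i$, whenever $\tilde{x}_i \leq x$ we also have $z_i \leq x$, so $i_{\tilde{G}}(x) \leq i_{\hat{G}}(x)$, hence $\tilde{G}'(x) \leq \hat{G}'(x)$. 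Here I need to know that the sequences $z_i$ and $\tilde{x}_i$ are each monotonically non-decreasing in $i$; this follows from the convexity of $G$, since a larger subgradient value $e^{i\varepsilon/4}$ can only be attained at a larger argument.

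Next I would combine this derivative comparison with the fact that both functions vanish at the left endpoint: by construction $\hat{G}(0) = \hat{G}(z_{-i_{\max}-1}) = 0$ and $\tilde{G}(0) = \tilde{G}(\tilde{x}_{-i_{\max}-1}) = 0$. Applying the fundamental theorem of calculus piece by piece yields
\[
\hat{G}(x) - \tilde{G}(x) = \int_{0}^{x} \bigl( \hat{G}'(t) - \tilde{G}'(t) \bigr) \, dt
\]
for every $x \in [0,1]$. The integrand is pointwise non-negative by the previous step, so the difference is non-negative, giving item (1); and as a function of the upper limit $x$ it is non-decreasing, giving item (2). There is no real obstacle here beyond correctly bookkeeping the boundary segments (the initial flat region, and the final segment extending to $\tilde{x}_{i_{\max}+1} = z_{i_{\max}+1} = 1$), where the same derivative inequality trivially holds since both derivatives either take the common value $e^{i_{\max}\varepsilon/4}$ or, on the initial flat piece, $\hat{G}'(t) \geq 0 = \tilde{G}'(t)$.
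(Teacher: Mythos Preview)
Your argument is correct and is exactly the natural verification the paper has in mind (the paper states only that ``the following properties are natural'' and gives no proof). The slope-comparison plus integration route is the intended one.

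One small point: your justification that the $\tilde{x}_i$ are non-decreasing in $i$ does \emph{not} follow from the convexity of $G$. Convexity of $G$ ensures that the true subgradient points $z_i$ are non-decreasing, but the oracle outputs $\tilde{x}_i$ only satisfy $z_i \le \tilde{x}_i \le z_i + \nicefrac{\varepsilon^2}{16}$, so in principle a noisy oracle could return $\tilde{x}_{i+1} < \tilde{x}_i$ when $z_i$ and $z_{i+1}$ are close. This is not a flaw peculiar to your proof: the paper's very definition of $\tilde{G}$ via linear interpolation between the $(\tilde{x}_i, \tilde{G}(\tilde{x}_i))$ already presupposes that the $\tilde{x}_i$ are sorted, so monotonicity is an implicit standing assumption. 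If one wishes to be fully rigorous, one can replace each $\tilde{x}_i$ by the running maximum $\max_{j \le i} \tilde{x}_j$; since the $z_i$ are non-decreasing and each $z_j \ge \tilde{x}_j - \nicefrac{\varepsilon^2}{16}$, this preserves the oracle guarantee $\tilde{x}_i - \nicefrac{\varepsilon^2}{16} \le z_i \le \tilde{x}_i$, after which your argument goes through verbatim.
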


By combining \Cref{lmm:learn-convex-auxiliary,lmm:learn-convex-approximate}, we have $\tilde{G}(x) \leq G(x)$.
In the rest of the argument, it suffices to upper bound $G(x) - \tilde{G}(x)$ for $0 \le x \le z_{i_{\max}}$, because there is a subgradient of $\tilde{G}(x)$ at $z_{i_{\max}} + \nicefrac{\varepsilon^2}{16} \geq x_{i_{\max}}$ that is at least $\nicefrac{4}{\varepsilon}$ by our construction.

We will partition $G(x) - \tilde{G}(x)$ into two parts $G(x)-\hat{G}(x)$ and $\hat{G}(x)-\tilde{G}(x)$, and upper bound them separately. 
By the monotonicity of the differences (\Cref{lmm:learn-convex-auxiliary,lmm:learn-convex-approximate}), it is sufficient to bound $G(z_{i_{\max}})-\hat{G}(z_{i_{\max}})$ and $\hat{G}(z_{i_{\max}})-\tilde{G}(z_{i_{\max}})$. We defer the detailed computation of bounding these differences to \Cref{app: approximate convex}. Intuitively, the differences are sufficiently small as the gradients between $G(x)$ and $\hat{G}(x)$ are small, and the differences between $z_i$ and $\tilde{x}_i$ are small.

  \section{Approximate Optimality of Bounded Contracts} \label{sec:bounded-approximation}

This section studies whether bounded contracts could guarantee approximately optimal principal's utility compared to the best general contract. 
We start with two hardness results showing that bounded contracts could be far from optimal in terms of multiplicative or additive approximation, even 
under
FOSD and CDFP.
As a comparison, 
as
linear contracts are a special type of bounded contract, 
by \cite{castiglioni2021bayesian}, they 
give a mixed approximation 
without FOSD and CDFP.

\begin{theorem}[Hardness of Multiplicative Approximation]
    \label{thm:hardness-multiplicative}
    For any $H \ge 1$ 
    and any $\alpha > 1$, there is an instance satisfying FOSD and CDFP for which:
    \[
        \OPT_H < \frac{1}{\alpha} \cdot \OPT
        ~.
    \]
\end{theorem}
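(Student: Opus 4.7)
The plan is to exhibit, for each pair $(H, \alpha)$, an explicit instance satisfying FOSD and CDFP in which $\OPT_H < \OPT/\alpha$. The idea is to engineer the instance so that the optimal unbounded contract incentivizes a ``target'' action using a payment vector that necessarily exceeds $H$ on some outcome, while no $H$-bounded contract can replicate this---forcing the bounded-optimal to settle for a strictly less profitable action (possibly the null action).

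Concretely, I would work with a small outcome space $\Omega = \{0, 1, \dots, m-1\}$ with monotone values in $[0,1]$, a null action, and a carefully chosen set of non-null actions whose costs and distributions make the complementary CDFs $F(\omega|c)$ monotone and concave in $c$. The key design point is to pick a target action $a^*$ together with a cheaper alternative $a'$ such that the probability gap $\Delta = \pi^{a^*}_{\omega^*} - \pi^{a'}_{\omega^*}$ on a critical outcome $\omega^*$ is small compared to the cost gap $c_{a^*} - c_{a'}$, so that the IC constra
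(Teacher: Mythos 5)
The proposal is cut off mid-sentence, before it specifies any actual numbers, so there is no construction to check; but the sketch that is visible already suggests a conceptual gap that would have to be addressed.

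You set up the argument around a single target action $a^*$ and a single cheaper alternative $a'$, with a small probability gap $\Delta$ on one critical outcome $\omega^*$ relative to the cost gap, so that incentivizing $a^*$ needs a payment $>H$. This is the right intuition for why boundedness hurts, but it cannot by itself deliver an \emph{arbitrary} multiplicative factor $\alpha$. To block every $H$-bounded contract from incentivizing $a^*$ via $a'$, you essentially need $H\cdot\TV(D_{a^*},D_{a'}) < c_{a^*}-c_{a'}$. Since values are normalized to $[0,1]$, the welfare gap satisfies $(V_{a^*}-c_{a^*})-(V_{a'}-c_{a'}) \le \TV(D_{a^*},D_{a'}) - (c_{a^*}-c_{a'}) < (c_{a^*}-c_{a'})\left(\tfrac{1}{H}-1\right) \le 0$ for $H\ge 1$. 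That is, if the alternative $a'$ really blocks $a^*$ under bounded payments, then $a'$ already has higher welfare, so the general optimum would not prefer $a^*$ either. A two-action (plus null) instance therefore cannot produce an unbounded ratio between $\OPT$ and $\OPT_H$; at best it gives a constant gap, which is exactly the situation in the paper's \emph{additive} hardness instance (\Cref{thm:hardness-additive}).

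The paper's proof of \Cref{thm:hardness-multiplicative} gets around this by using a \emph{continuum} of actions whose complementary CDFs are exponential in the action index: $F(1|c_a)=\varepsilon e^a$ and $F(2|c_a)=\tfrac{\varepsilon}{2H}c_a$ with $c_a=\varepsilon(e^a-1-a)$ over $a\in[0,\ln\nicefrac{1}{\varepsilon}]$. This design makes the surplus at the top action $\varepsilon(1+\ln\nicefrac{1}{\varepsilon})$, which the unbounded contract extracts entirely via a huge payment on the rare outcome $2$, while a careful case analysis on $p_1$ shows every $H$-bounded contract gets at most $3\varepsilon$; taking $\varepsilon<e^{-3\alpha}$ yields the factor $\alpha$. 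The key step missing from your sketch is this ``long chain'' (or continuum) of actions, which is what lets the ratio grow without bound; without it, the single-target/single-alternative picture hits the obstruction above. You would also still need to verify FOSD and CDFP for your construction and, more importantly, prove an upper bound on the utility of \emph{all} $H$-bounded contracts, not just exhibit one contract that performs poorly.
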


We present the contract instance and the principal's optimal utilities from general and bounded contracts below and defer the detailed proof to \Cref{app:hardness-multiplicative}.

\begin{proof}[Proof Sketch]
    We consider an instance parameterized by $\varepsilon > 0$, with $m = 3$ outcomes.
    The principal has value $v_0 = 0$ for outcome $0$ and $v_1 = v_2 = 1$ for outcomes $1$ and $2$.
    The agent has a continuum of actions with $0 \leq a \leq \ln \nicefrac{1}{\varepsilon}$, such that 
    $c_a = \varepsilon (e^a - 1 - a)$,    
    and the corresponding outcome distribution is defined by
    \begin{gather*}
        F(1|c_a) = \varepsilon e^a ~, \quad F(2|c_a) = \frac{\varepsilon}{2H} c_a = \frac{\varepsilon^2}{2H}\left(e^a - 1 - a\right)
        ~,
    \end{gather*}
    which are monotone and concave as functions of $c_a$.
    Hence, this instance satisfies FOSD and CDFP.
    Generally, the principal can incentivize the agent with a high payment for outcome $2$, and obtain an 
    expected utility of $\varepsilon (1 + \ln \nicefrac{1}{\varepsilon})$. 
    With $H$-bounded contracts,
    its expected utility is at most $3\varepsilon$.
\end{proof}

By the above proof, considering the relationships between $\varepsilon(1 + \ln \nicefrac{1}{\varepsilon})$ and $3\varepsilon$, we also have two corollaries that show hardness results for two forms of parameterized multiplicative approximation.

\begin{corollary}
\label{coro: lower bound of mixed approximation}
    There is an instance satisfying FOSD and CDFP such that
    \[
        \OPT_H = O \left(
        \frac{\OPT}{\log \nicefrac{1}{\OPT}} \right)
        ~.
    \]
\end{corollary}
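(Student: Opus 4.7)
The plan is to reuse the same construction that proves \Cref{thm:hardness-multiplicative} and make the parameter dependence explicit. Take the instance parameterized by $\varepsilon > 0$ from the proof sketch of \Cref{thm:hardness-multiplicative}. From that proof we already know $\OPT = \varepsilon (1 + \ln \nicefrac{1}{\varepsilon})$ and $\OPT_H \le 3\varepsilon$. The remaining work is purely to compare $\OPT_H$ with $\OPT / \log \nicefrac{1}{\OPT}$ and verify the asymptotic bound.

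First, I would take $\varepsilon$ sufficiently small so that $\OPT = \varepsilon(1+\ln\nicefrac{1}{\varepsilon}) < 1$, which ensures $\log \nicefrac{1}{\OPT}$ is positive and meaningful. Then I would estimate
\[
    \log \tfrac{1}{\OPT} = \log \tfrac{1}{\varepsilon(1+\ln\nicefrac{1}{\varepsilon})} = \log \tfrac{1}{\varepsilon} - \log\bigl(1+\ln\tfrac{1}{\varepsilon}\bigr) = \Theta\bigl(\log \tfrac{1}{\varepsilon}\bigr)
\]
as $\varepsilon \to 0$, since $\log(1+\ln\nicefrac{1}{\varepsilon})$ is lower order than $\log\nicefrac{1}{\varepsilon}$. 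Consequently
\[
    \frac{\OPT}{\log \nicefrac{1}{\OPT}} = \Theta\!\left(\frac{\varepsilon(1+\ln\nicefrac{1}{\varepsilon})}{\log\nicefrac{1}{\varepsilon}}\right) = \Theta(\varepsilon),
\]
which already matches the scale of the upper bound $\OPT_H \le 3\varepsilon$. Thus $\OPT_H = O(\OPT / \log \nicefrac{1}{\OPT})$, and letting $\varepsilon \to 0$ produces instances for which this asymptotic bound is witnessed.

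There is essentially no combinatorial obstacle, because the hard construction and both utility estimates are inherited from \Cref{thm:hardness-multiplicative}. The only thing to be careful about is the base of the logarithm and the constant factors hidden by $O(\cdot)$: since $\log$ and $\ln$ differ only by a constant and we are free to choose the constant inside the $O(\cdot)$, this is cosmetic. So the proof reduces to quoting the two utility values from the prior construction and performing the short asymptotic comparison above.
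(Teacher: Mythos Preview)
Your proposal is correct and follows essentially the same approach as the paper: the paper simply states that the corollary follows from the relationship between $\varepsilon(1+\ln\nicefrac{1}{\varepsilon})$ and $3\varepsilon$ in the construction of \Cref{thm:hardness-multiplicative}, and you have spelled out that asymptotic comparison explicitly. One minor point: the proof of \Cref{thm:hardness-multiplicative} only establishes $\OPT \ge \varepsilon(1+\ln\nicefrac{1}{\varepsilon})$ via a specific contract, not equality, but since $x \mapsto x/\log\nicefrac{1}{x}$ is increasing on $(0,1)$ this is the direction you need and your conclusion is unaffected.
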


\begin{corollary}
\label{coro: lower bound w.r.t. expected value}
     For any $0 < L < 1$, there is an instance satisfying FOSD and CDFP, and the principal's expected values of all actions are at least $L$, such that
     \[
        \OPT_H \le O \left( \frac{\OPT}{\ln \nicefrac{1}{L}} \right)
        ~. 
     \]
\end{corollary}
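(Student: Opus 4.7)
The plan is to reuse verbatim the construction from the proof of Theorem~\ref{thm:hardness-multiplicative}, specializing the free parameter $\varepsilon$ to $L$. Recall the construction: three outcomes with $v_0 = 0$ and $v_1 = v_2 = 1$, and a continuum of actions $0 \le a \le \ln(1/\varepsilon)$ with $c_a = \varepsilon(e^a - 1 - a)$, $F(1|c_a) = \varepsilon e^a$, and $F(2|c_a) = \tfrac{\varepsilon^2}{2H}(e^a - 1 - a)$. The proof of Theorem~\ref{thm:hardness-multiplicative} already establishes that both complementary CDFs are monotone and concave in $c_a$ (hence FOSD and CDFP hold), that $\OPT \ge \varepsilon(1 + \ln(1/\varepsilon))$ via the unbounded contract that pays a large reward on outcome $2$, and that $\OPT_H \le 3\varepsilon$ for any $H$-bounded contract. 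None of these properties depends on the specific value of $\varepsilon$, so they transfer immediately once we set $\varepsilon = L$.

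The first step is then to verify that, after setting $\varepsilon = L$, every action in the construction satisfies the expected-value lower bound demanded by the corollary. Since $v_0 = 0$ and $v_1 = v_2 = 1$, the expected principal value of action $a$ equals
\[
    v_0 f(0|c_a) + v_1 f(1|c_a) + v_2 f(2|c_a) = F(1|c_a) = L \, e^a \ge L,
\]
uniformly over the action space $a \in [0, \ln(1/L)]$. This confirms that the instance meets the hypothesis on expected values.

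The second step is a direct computation combining the two bounds. With $\varepsilon = L$, Theorem~\ref{thm:hardness-multiplicative} yields $\OPT \ge L(1 + \ln(1/L))$ and $\OPT_H \le 3L$. Therefore
\[
    \frac{\OPT_H}{\OPT} \le \frac{3L}{L(1 + \ln(1/L))} = O\!\left(\frac{1}{\ln(1/L)}\right),
\]
which rearranges to the stated bound $\OPT_H \le O(\OPT / \ln(1/L))$.

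Because every ingredient is inherited from Theorem~\ref{thm:hardness-multiplicative}, there is no real obstacle to the proof; the only thing to watch is that the single substitution $\varepsilon = L$ simultaneously enforces the expected-value constraint and preserves the gap between $\OPT$ and $\OPT_H$. In particular, no re-analysis of the agent's best response or of the bounded-contract upper bound is needed, so the argument is essentially a one-line specialization followed by the ratio computation.
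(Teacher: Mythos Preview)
Your proposal is correct and mirrors the paper's own (very terse) argument: the paper simply remarks that the corollary follows from the Theorem~\ref{thm:hardness-multiplicative} instance by ``considering the relationships between $\varepsilon(1 + \ln \nicefrac{1}{\varepsilon})$ and $3\varepsilon$,'' and your specialization $\varepsilon = L$ together with the observation $F(1|c_a) = L e^a \ge L$ is exactly the one-line check needed to make that remark rigorous.
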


\begin{theorem}[Hardness of Additive Approximation]
    \label{thm:hardness-additive}
    For any $H > 0$ and any $\beta < 1/4$, there is an instance satisfying FOSD and CDFP for which:
    \[
        \OPT_H < \OPT - \beta
        ~.
    \]
\end{theorem}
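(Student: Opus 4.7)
The plan is to exhibit, for any fixed $H > 0$ and $\beta < 1/4$, a concrete instance (satisfying FOSD and CDFP) in which every $H$-bounded contract earns the principal zero, while an unbounded contract extracts welfare arbitrarily close to $1/4$. I will use $m = 3$ outcomes with $v = (0, 1, 1)$, together with a single costly action $a^*$ of cost $c^* \in (0,1]$ and outcome distribution $(1 - \eta, \eta/2, \eta/2)$; the welfare at $a^*$ equals $\eta - c^*$. The null action is present as usual.

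First I will compute $\OPT$: under an unbounded contract the principal can set $p_1 = p_2 = 2c^*/\eta$, making the agent's IR inequality $(\eta/2)(p_1+p_2) \ge c^*$ tight and yielding principal utility $\eta - c^*$, i.e., the full welfare. Next I will argue $\OPT_H = 0$ once the parameters are tuned. Since any $H$-bounded contract has $p_1, p_2 \le H$, the expected payment at $a^*$ is at most $\eta H$; as soon as $\eta H < c^*$, no $H$-bounded contract satisfies IR at $a^*$, so the agent defaults to the null action and the principal gets $0$. Setting $\eta = c^* + (1/4 - \gamma)$ for small $\gamma > 0$ makes the welfare equal to $1/4 - \gamma$, and the bounded-infeasibility condition becomes $c^* > H(1/4 - \gamma)/(1-H)$ (assuming $H < 1$). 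Choosing any $c^* \in (H(1/4-\gamma)/(1-H),\, 1]$ and $\gamma < 1/4 - \beta$ then yields $\OPT - \OPT_H \ge 1/4 - \gamma > \beta$, as desired.

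Second, I must ensure FOSD and CDFP hold non-vacuously; this is the step where one typically has to be careful. FOSD is immediate for a two-action instance, so the real work lies with CDFP. To make CDFP active I will augment the action space with all convex combinations $a_\lambda = \lambda \cdot a^* + (1-\lambda) \cdot \mathrm{null}$, of cost $\lambda c^*$ and distribution $(1 - \lambda\eta, \lambda\eta/2, \lambda\eta/2)$, for $\lambda \in [0,1]$. The resulting complementary CDFs $F(1\mid c)$ and $F(2\mid c)$ are then linear in $c$, hence monotone and concave, so FOSD and CDFP both hold. A quick check confirms that these mixture actions give the principal no extra leverage: incentivizing any $a_\lambda$ with $\lambda>0$ requires the same IR inequality $\eta H \ge c^*$ (up to the common factor $\lambda$), which remains infeasible under the chosen parameters, so $\OPT_H = 0$ is preserved.

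The hard part will be the regime of large $H$. The constraints $c^* \le 1$ and $c^*(1-H) > H(1/4 - \gamma)$ together force $H < 4/5$ in the limit $\gamma \to 0$, so this single-action construction does not immediately cover $H$ close to or above $4/5$. To reach every $H > 0$ I expect to need a genuinely multi-action construction in which \emph{IC} constraints (rather than a single IR) force the jointly required payments to exceed $H$; a natural candidate is a continuum of actions with carefully engineered concave complementary CDFs so that the unbounded-optimal payment scales like $1/F'(\omega\mid c)$ and can be made arbitrarily large, while the welfare of the targeted action still approaches $1/4$. Stitching such an instance together while preserving FOSD, CDFP, and the full welfare~$\approx 1/4$ is where I expect the most delicate balancing to occur.
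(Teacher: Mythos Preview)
Your construction is sound for small $H$ but has a genuine gap for $H \ge 1$, which you yourself flag in the last paragraph. The gap is not merely technical: an IR-based obstruction with a single costly action \emph{cannot} work in that regime. Indeed, with $v_1 = v_2 = 1$ your expected value at $a^*$ is exactly $\eta$, so positive welfare forces $\eta > c^*$, while your infeasibility condition $\eta H < c^*$ forces $\eta < c^*/H \le c^*$ whenever $H \ge 1$. (More generally, $V \le \eta$ always, so the same contradiction arises for any choice of values.) Hence the ``hard part'' you anticipate is not just harder---it requires a qualitatively different mechanism.

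The paper's proof avoids this by using two non-null actions and an \emph{IC} obstruction rather than an IR one. With $c_1 = \varepsilon$, $c_2 = 1/4$, $F(1|c_1)=1/2$, $F(1|c_2)=1$, $F(2|c_1)=4\varepsilon^2/H$, $F(2|c_2)=\varepsilon/H$, the principal can reach action~$2$'s welfare $3/4$ via a large payment on outcome~$2$ (of order $H/\varepsilon$), but any $H$-bounded contract that tries to separate action~$2$ from action~$1$ is forced to pay roughly $1/2$ on outcome~$1$, capping the principal's utility near $1/2$. The key device is that the rare outcome~$2$ has probability scaled by $1/H$, so the required separating payment scales like $H/\varepsilon$ and exceeds $H$ for every $H$. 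Your sketch in the final paragraph is pointing in this direction, but the concrete instance---two actions, an IC rather than IR bottleneck, and the $1/H$ scaling of the rare outcome---is precisely what is missing from your proposal.
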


We defer the detailed proof to \Cref{app:hardness-additive}.

\begin{proof}[Proof Sketch]
Consider an instance with $m = 3$ outcomes and $n = 3$ actions.
The principal has values $v_0 = 0$ and $v_1 = v_2 = 1$ for outcomes $1$ and $2$.
For a sufficiently small $\varepsilon > 0$, the agent's information is given in the following table.
The instance satisfies FOSD and CDFP. The principal's optimal expected utility is $\nicefrac{3}{4}$ with general contracts, but is at most $\nicefrac{1}{2}$ with $H$-bounded contracts.

\begin{center}
    \begin{tabular}{ccc}
        \toprule
        \text{\rm Cost} & $F(1|c)$ & $F(2|c)$ \\
        \midrule
        $c_1 = \varepsilon$ & $\nicefrac{1}{2}$ & $\nicefrac{4\varepsilon^2}{H}$ \\
        $c_2 = \nicefrac{1}{4}$ & $1$ & $\nicefrac{\varepsilon}{H}$ \\
        \bottomrule
    \end{tabular}
\end{center}

\end{proof}

We complement the hardness with a positive result, showing that bounded contracts satisfy a notion of mixed approximation.
Specifically, we observe that linear contracts, as defined below, are a special form of bounded contracts. 

\begin{definition}[Linear contract]\label{def:linear-contract}
	A $\rho$-linear contract $p$ 
 satisfies 
	$p_\omega = \rho \cdot v_\omega$ 
	for any outcome $\omega$. 
\end{definition}

Clearly, linear contracts are $1$-bounded. 
Denote the principal's optimal utility under linear contracts as $\lin$. The following result is given by \cite{castiglioni2021bayesian}. 

\begin{theorem}[\cite{castiglioni2021bayesian}]\label{thm: mixed approximation}
    For any $\varepsilon \in (0,1/2)$, we have
    \begin{equation*}
    	\OPT \leq 2 \left(\log \frac{1}{\varepsilon}\cdot \lin + \varepsilon \right)
    	~.
    \end{equation*}	
\end{theorem}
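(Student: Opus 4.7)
The plan is to upper bound $\OPT$ by the optimal welfare and then compare the welfare to $\lin$ through a convex-envelope / integration argument, following the classical linear-contract approximation approach. Let $W(a) := \E_{\omega \sim D_a}[v_\omega]$ denote the expected value of action $a$ and let $\mathrm{WEL} := \max_a (W(a) - c_a)$. For the optimal general contract $p^*$ inducing an action $a^*$, the agent's IR (against the null action) gives $\E_{a^*}[p^*_\omega] \ge c_{a^*}$, so $\OPT = \E_{a^*}[v_\omega - p^*_\omega] \le W(a^*) - c_{a^*} \le \mathrm{WEL}$. It therefore suffices to prove $\mathrm{WEL} \le 2(\log(1/\varepsilon) \cdot \lin + \varepsilon)$.

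Introduce the convex envelope $g(\rho) := \max_a (\rho W(a) - c_a)$ on $\rho \in [0,1]$, which is the agent's optimal utility under the $\rho$-linear contract. As a maximum of affine functions in $\rho$, $g$ is convex and non-decreasing, with $g(0) = 0$ (null action) and $g(1) = \mathrm{WEL}$. If $a(\rho)$ denotes the agent's best response under the $\rho$-linear contract, then by the envelope theorem $W(a(\rho))$ is a (right) subgradient of $g$ at $\rho$. Two upper bounds on this subgradient are immediate: from the value normalization, $W(a(\rho)) \le 1$; and from $(1-\rho)W(a(\rho)) \le \lin$, we get $W(a(\rho)) \le \lin/(1-\rho)$.

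By absolute continuity of the convex function $g$ on the compact interval,
\[
    \mathrm{WEL} = g(1) - g(0) = \int_0^1 W(a(\rho))\,d\rho.
\]
I would split this integral into three pieces and apply the tighter of the two bounds in each. On $[0,\varepsilon]$ and on $[1-\varepsilon,1]$, the trivial bound $W(a(\rho)) \le 1$ contributes at most $\varepsilon$ each. On the middle piece $[\varepsilon,1-\varepsilon]$, dyadically partition it via $1-\rho \in [2^{-k-1}, 2^{-k}]$ for $k = 0, 1, \dots, \lceil \log(1/\varepsilon)\rceil$; on each such subinterval $W(a(\rho)) \le 2^{k+1}\lin$ while the length is $2^{-k-1}$, contributing at most $\lin$ per subinterval. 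Summing gives $(\log(1/\varepsilon) + 1)\lin + 2\varepsilon \le 2\log(1/\varepsilon)\lin + 2\varepsilon$ for $\varepsilon \le 1/2$, which is exactly the stated bound.

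The one real obstacle is the apparent divergence of $\lin/(1-\rho)$ as $\rho \to 1$: the integral $\int_\varepsilon^1 \lin/(1-\rho)\, d\rho$ blows up, so a naive application of the linear-contract bound over all of $[0,1]$ fails. The fix is precisely the terminal slice $[1-\varepsilon,1]$, where we abandon the $\lin/(1-\rho)$ bound and use $W \le 1$ instead, absorbing the resulting loss into an additive $\varepsilon$ term. The dyadic partition is essentially the discrete analogue of this cutoff and gives the $\log(1/\varepsilon)$ factor that multiplies $\lin$; the tightness of this factor is witnessed by the instance constructed in \Cref{thm:hardness-multiplicative}.
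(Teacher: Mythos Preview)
Your proof is correct and is the continuous analogue of the paper's discrete argument. The paper fixes finitely many linear contracts $\rho_i = 1 - 2^{-i}$ for $i = 1,\dots,k=\log(1/\varepsilon)$, uses the agent's IC at $\rho_i$ to obtain $(V_{i+1}-c_{i+1})-(V_i-c_i) \le 2(1-\rho_{i+1})V_{i+1} \le 2\,\lin$, telescopes to $V_k - c_k \le 2k\,\lin$, and finally bounds $\mathrm{WEL}-(V_k-c_k)\le 2\varepsilon$ via IC at $\rho_k$. Your envelope identity $\mathrm{WEL}=\int_0^1 W(a(\rho))\,d\rho$ together with $W(a(\rho))\le \lin/(1-\rho)$ is exactly the continuous form of this telescoping; your dyadic blocks $\rho\in[1-2^{-k},1-2^{-k-1}]$ are precisely the gaps between the paper's $\rho_i$'s, and the terminal cutoff at $1-\varepsilon$ plays the same role as the paper's last contract $\rho_k$. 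The envelope formulation is a bit more conceptual (and you could tighten it further by integrating $\lin/(1-\rho)$ directly instead of dyadically), while the paper's discrete version avoids the absolute-continuity step. One cosmetic point: your initial slice $[0,\varepsilon]$ is unnecessary, since the $k=0$ dyadic block already covers $\rho\in[0,1/2]$ at cost $\le \lin$; dropping it would leave a single additive $\varepsilon$, matching the paper.
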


We prove it in \Cref{app:mixed-approximation} for completeness.
With $\varepsilon = \nicefrac{\OPT}{4}$ and $\nicefrac{L}{4}$, we get two 
parameterized multiplicative approximations, matching the hardness results in \Cref{coro: lower bound of mixed approximation,coro: lower bound w.r.t. expected value}.

\begin{corollary}
    \label{cor:mix-approx}
    For any instance, we have
	\[
        \OPT_{H=1} \ge \lin \ge \Omega \left( \frac{\OPT}{\log \nicefrac{1}{\OPT}} \right)
        ~.
    \]	
\end{corollary}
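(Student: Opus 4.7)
The plan is to break the corollary into its two inequalities and handle each directly from what is already established.

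For the first inequality $\OPT_{H=1} \ge \lin$, I would simply observe that any $\rho$-linear contract with $\rho \in [0,1]$ satisfies $p_\omega = \rho \cdot v_\omega \le v_\omega \le 1$ for every outcome $\omega$, since principal values are normalized to $[0,1]$. Hence every linear contract is $1$-bounded, and the supremum of the principal's utility over linear contracts can be no larger than the supremum over all $1$-bounded contracts. One should also briefly note that $\rho > 1$ need not be considered, since paying more than the full value yields negative principal utility and is dominated by the null contract.

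For the second inequality $\lin \ge \Omega(\OPT/\log(1/\OPT))$, I would invoke Theorem 4.4 with the specific choice $\varepsilon = \OPT/4$. Since $\OPT \le 1$ (as values lie in $[0,1]$), this choice satisfies $\varepsilon \in (0, 1/2)$, so the hypothesis of the theorem is met. Plugging in yields
\[
    \OPT \le 2 \log(4/\OPT) \cdot \lin + \OPT/2,
\]
which rearranges to $\lin \ge \OPT / \bigl(4 \log(4/\OPT)\bigr)$.

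To conclude, I would argue that $\log(4/\OPT) = O(\log(1/\OPT))$ for $\OPT$ bounded away from $1$, and that when $\OPT$ is close to $1$ the bound is trivially $\Omega(1)$ (for instance, by treating any constant $\OPT \ge 1/2$ separately and using $\lin \ge \OPT/4$ from the displayed inequality above). This gives $\lin \ge \Omega(\OPT/\log(1/\OPT))$. There is no real obstacle here; the corollary is essentially a plug-and-chug application of Theorem 4.4, and the only minor subtlety is handling the edge case where $\OPT$ is close to $1$ so that $\log(1/\OPT)$ could be small or zero.
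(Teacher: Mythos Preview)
Your proposal is correct and follows exactly the route the paper indicates: the first inequality is immediate from linear contracts being $1$-bounded, and the second comes from substituting $\varepsilon = \OPT/4$ into the mixed-approximation theorem (note this is Theorem~4.5 in the paper's numbering, not 4.4). Your treatment is in fact more detailed than the paper's, which simply states the choice of $\varepsilon$ and leaves the algebra implicit.
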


This means that bounded contracts would be a constant factor approximation if the optimal  $\OPT$ is lower bounded by some constant.
The impossibility of constant multiplicative approximation holds only when $\OPT$ is negligible.

\begin{corollary}
\label{coro: upper bound w.r.t. expected value}
    Suppose the principal's expected values of all actions are at least $L > 0$.
    We have
    \[
        \OPT_{H=1} \ge \lin \ge \Omega \left( \frac{\OPT}{\log \nicefrac{1}{L}} \right)
    	~.
    \]
\end{corollary}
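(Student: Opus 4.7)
The plan is to establish the corollary in two stages, both of which are essentially bookkeeping on top of \Cref{thm: mixed approximation}.

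First, I would dispatch the inequality $\OPT_{H=1} \ge \lin$ by observing that an optimal $\rho$-linear contract may be assumed to have $\rho \in [0, 1]$: if $\rho > 1$ then the per-outcome principal utility $(1-\rho) v_\omega$ is nonpositive at every outcome, so the null contract weakly dominates. For $\rho \in [0,1]$ the payments satisfy $p_\omega = \rho v_\omega \le v_\omega \le 1$, so such a contract is $1$-bounded, giving $\OPT_{H=1} \ge \lin$ immediately.

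For the main bound $\lin \ge \Omega(\OPT / \log(1/L))$, I would apply \Cref{thm: mixed approximation} with $\varepsilon = L/4$. Since the principal's values lie in $[0,1]$, the hypothesis forces $L \le 1$, so $\varepsilon = L/4 \in (0, 1/2)$ is a legal choice, yielding
\begin{equation*}
    \OPT \;\le\; 2 \log(4/L) \cdot \lin + \tfrac{L}{2}.
\end{equation*}
The remaining task is to absorb the additive $L/2$ into the $\lin$ term. For this I would invoke the hypothesis and look at the $(1/2)$-linear contract: no matter which best-response action $a^*$ the agent selects, the principal's expected utility is $(1/2) \cdot \E_{\omega \sim D_{a^*}} v_\omega \ge L/2$. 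Hence $\lin \ge L/2$, so substituting $L/2 \le \lin$ into the displayed inequality gives $\OPT \le (2\log(4/L) + 1)\cdot \lin$, which is precisely $\lin \ge \Omega(\OPT / \log(1/L))$.

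I do not anticipate a meaningful obstacle. The only step that requires a moment of thought is why the particular choice $\varepsilon = L/4$ works: it is dictated by the need for the additive slack $2\varepsilon = L/2$ from the mixed-approximation bound to be matched, up to a constant, by the trivial lower bound $\lin \ge L/2$ that the hypothesis on $L$ directly supplies through the $(1/2)$-linear contract. Without the hypothesis on $L$, no such absorption is possible, which parallels the situation in \Cref{cor:mix-approx} where the analogous role is played by $\OPT$ itself.
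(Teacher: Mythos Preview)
Your proposal is correct and follows essentially the same approach as the paper, which simply instructs the reader to plug $\varepsilon = L/4$ into \Cref{thm: mixed approximation} without spelling out how to absorb the additive $L/2$. Your use of the $(1/2)$-linear contract to show $\lin \ge L/2$ is a clean way to do this; an equally valid (and slightly simpler) alternative is to note that the null contract already yields principal utility at least $L$ (since every action, including the agent's best response, has expected value at least $L$), so $\OPT \ge L$ and hence $L/2 \le \OPT/2$ can be moved to the left-hand side directly.
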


We remark that \Cref{coro: upper bound w.r.t. expected value} is known:  
\cite{dutting2019simple}
 proves it using a different set of linear contracts.
They also showed that linear contracts are an $n$-multiplicative approximation compared to $\OPT$ where $n$ is the number of actions. 
However, this would be subsumed by \Cref{cor:mix-approx} as long as the optimal $\OPT$ is, at worst, exponentially small, i.e., $\OPT = \Omega(e^{-n})$.

\clearpage 

\appendix

\section{Computation of the Optimal Contracts with Full Information}
\label{app:computation}

This section assumes that the set of actions is finite, and we have full information on the contract design instance, including the set of outcomes, the principal's values for the outcomes, the agent's set of actions, their costs, and the outcome distribution of each action.
Then, we can compute the optimal general contract, and the optimal $H$-bounded contract for any given $H > 0$, in polynomial time.

For any action $a$, the set of contracts that incentivize action $a$ 
is a polytope defined by $p_w \ge 0$ for $\omega \in \Omega$ and the following set of linear constraints:
\[
    \forall a' \ne a ~: \quad \sum_{\omega \in \Omega} f(\omega|a) \cdot p_\omega - c_a \ge \sum_{\omega \in \Omega} f(\omega|a') \cdot p_\omega - c_{a'}
    ~.
\]
To further restrict our attention to $H$-bounded contracts, we only need to consider $m$ additional constraints $p_\omega \le H$ for $\omega \in \Omega$.
Therefore, we can compute the ($H$-bounded) contract that incentivizes action $a$ while maximizing the principal's utility, or equivalently, minimizing the expected payment to the agent, i.e., $\sum_{\omega \in \Omega} f(\omega|a) \cdot p_\omega$, by solving a linear program (LP).
Finally, we compare the principal's utilities for all actions $a \in A$ and choose the maximum one, and the corresponding optimal (bounded) contract.
 
\section{Missing Proofs from Section \ref{sec:learnability}}

\subsection{Proof of \Cref{lem:kol-expectation}}
\label{app:distribution metrics}

    Expanding the expectations, we have
    \begin{align*}
        \E_{\omega \sim D} \, h(\omega) - \E_{\omega \sim \tilde{D}} \, h(\omega)
        &
        = \sum_{\omega \in \Omega} \left( f(\omega) - \tilde{f}(\omega) \right) h(\omega) \\
        &
        = \left( F(0) - \tilde{F}(0) \right) h(0) + \sum_{\omega = 1}^{m-1} \left( F(\omega) - \tilde{F}(\omega) \right) \left( h(\omega) - h(\omega-1) \right) \\
        &
        \le \Kol(D, \tilde{D}) \cdot H + (m-1) \cdot \Kol(D, \tilde{D}) \cdot 2H
        ~.
    \end{align*}

    This is $(2m-1)H \cdot \Kol(D, \tilde{D})$, so the lemma follows.

\subsection{Proof of \Cref{thm: action query}}
\label{app:action query}

    Let $I$ and $\tilde{I}$ denote the original instance and the empirical instance with outcome distributions $\tilde{D}_a$, $a \in A$.
    By \Cref{lem:tv-convergence} together with the union bound, we get that
    \[
        \TV(D_a, \tilde{D}_a) \le \frac{\varepsilon^2}{128H}
    \]
    for all actions $a \in A$ with probability at least $1-\delta$.
    By \Cref{lem:robustification}, the principal's expected utility for contract $p$ in instance $I$ is at least
    \[
        \OPT_H(\tilde{I}) - \frac{\varepsilon}{2}
        ~.
    \]

    Apply \Cref{lem:robustification} again with the roles of $I$ and $\tilde{I}$ swapped. We also have a contract for which the principal's expected utility in instance $\tilde{I}$ is at least $\OPT_H(I)  - \nicefrac{\varepsilon}{2}$. This implies
    \[
        \OPT_H(\tilde{I}) \ge \OPT_H(I) - \frac{\varepsilon}{2}
        ~.
    \]
    
    Combining the above two inequalities proves the theorem.

\subsection{Proof of \Cref{thm:opt-equal-opth}}
\label{app:opt-equal-opth}

    Suppose for contradiction that the optimal contract $p$ must have a payment $p_\omega > \nicefrac{1}{\eta}$ for some outcome $\omega \in \Omega$. 
    Then, the agent's action incentivized by $p$ must have a non-zero probability of realizing outcome $\omega$.
    However, this means that the expected payment to the agent is strictly larger than $1$.
    Hence, the principal's expected utility is negative, which is impossible.

\subsection{Proof of \Cref{lmm: approximate concave}}
\label{app: approximate concave}
Let $G = F^{-1}$ be the inverse function of $F$, which is a convex function.
Further, let $\tilde{G}$ be the approximate function that we learn through \Cref{lmm: approximate convex}, replacing $\varepsilon$ therein with $\nicefrac{\varepsilon^2}{2}$.
Finally, let $\tilde{F}$ be the inverse of $\tilde{G}$.
Since $\tilde{G} \le G$, we have $\tilde{F} \geq F$. 
We will next show that for any $y \in [\varepsilon, 1]$, we have $\tilde{F}(y) \le F(y) + \varepsilon$. 
Equivalently, for any $y \in [\varepsilon, 1]$, let $x$ and $\tilde{x}$ be the points that 
\[
    G(x)=\tilde{G}(\tilde{x})=y
    ~.
\]
We need to bound $\tilde{x} - x$. 

We start by defining some key breakpoints in the argument, regarding when the subgradients of functions $G$ and $\tilde{G}$ are sufficiently large.
\begin{itemize}
    \item Let $\tilde{x}_{\max}$ satisfies that $\nicefrac{2}{\varepsilon^2}$ is a maximum subgradient of $\tilde{G}$ at $\tilde{x}_{\max}$.
    \item By the construction of $\tilde{G}$ in  \Cref{lmm: approximate convex}, there is a $z_{\max} \in [\tilde{x}_{\max} - \nicefrac{\varepsilon^4}{64}, \tilde{x}_{\max}]$ such that $\nicefrac{2}{\varepsilon^2}$ is a subgradient of $G$ at $z_{\max}$.
    \item Let $y_{\max} = \tilde{G}(z_{\max})$.
    \item Define $x_{\max}$ such that $G(x_{\max}) = y_{\max}$.
    \item Finally, let $\tilde{y}_{\max} = \tilde{G}(\tilde{x}_{\max}) \geq \tilde{G}(z_{\max}) = y_{\max}$. 
\end{itemize}

By \Cref{lmm: approximate convex}, we know that for any $\tilde{x} \leq z_{\max}$, 
\[
    \tilde{G}(\tilde{x}) \leq G(\tilde{x}) \leq \tilde{G}(\tilde{x}) + \frac{\varepsilon^2}{2}
    ~.
\]

We consider three possible cases based on the relationship among $\varepsilon$, $y_{\max}$, and $\tilde{y}_{\max}$. 
We first discuss the main case where $\bm{y_{\max} \ge \varepsilon}$. 
Two key scenarios in the discussion below are illustrated in \Cref{fig:learning-concave}. 

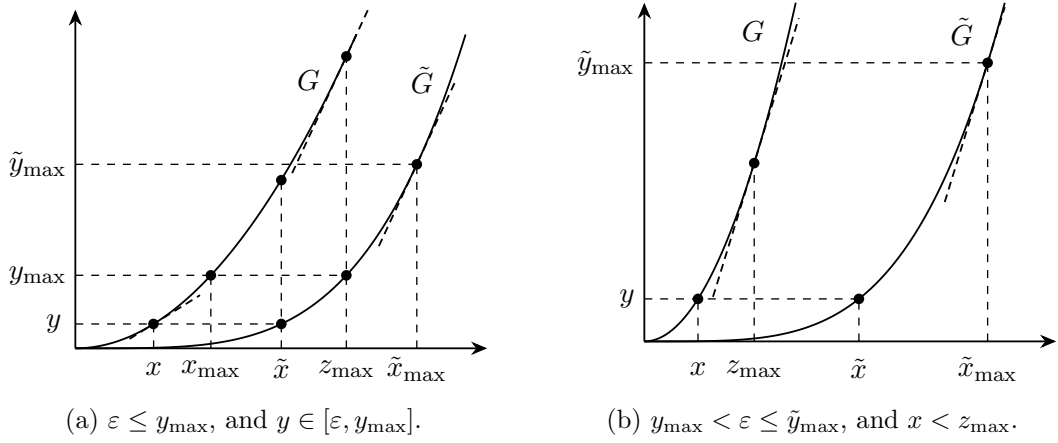
\begin{figure}
    \centering
    \begin{subfigure}{.45\textwidth}
        \centering
        \begin{tikzpicture}
        \begin{axis}[
            xmin=0, xmax=0.76, 
            ymin=0, 
            width=.95\textwidth,
            axis lines=left,
            ticks=none,
            line width=0.7pt,
            axis line style={-{Stealth[angle'=45]}},
            samples=500,
            clip=false
        ]
            \addplot [black,mark=none,line width=0.7pt,domain=0:0.5184]{x^2};
            \addplot [black,mark=none,line width=0.7pt,domain=0:0.72]{x^4}; 
            
            \addplot [black,mark=none,line width=0.7pt,domain=0.56:0.7,densely dashed]{4*(0.63)^3*(x-0.63)+(0.63)^4};
            \addplot [black,mark=none,line width=0.7pt,domain=0.4:0.54,densely dashed]{2*0.5*(x-0.5)+(0.5)^2};
            \addplot [black,mark=none,line width=0.7pt,domain=0.1:0.23,densely dashed]{2*0.144*(x-0.144)+(0.144)^2};
            
            \addplot [black,mark=none,line width=0.5pt,dashed]
                coordinates {
                    (0.63,0) (0.63,0.15752961)
                };
            \addplot [black,mark=none,line width=0.5pt,dashed]
                coordinates {
                    (0,0.15752961) (0.63,0.15752961)
                };
            \addplot [black,mark=none,line width=0.5pt,dashed]
                coordinates {
                    (0.5,0) (0.5,0.25)
                };
            \addplot [black,mark=none,line width=0.5pt,dashed]
                coordinates {
                    (0,0.0625) (0.5,0.0625)
                };
            \addplot [black,mark=none,line width=0.5pt,dashed]
                coordinates {
                    (0.25, 0) (0.25,0.0625) 
                };
            \addplot [black,mark=none,line width=0.5pt,dashed]
                coordinates {
                    (0.38, 0) (0.38,0.1444) 
                };
            \addplot [black,mark=none,line width=0.5pt,dashed]
                coordinates {
                    (0, 0.02085136) (0.38,0.02085136) 
                };
            \addplot [black,mark=none,line width=0.5pt,dashed]
                coordinates {
                    (0.1444, 0) (0.1444,0.02085136) 
                };

            \fill [black] (0.63, 0.15752961) circle (2pt); 
            \fill [black] (0.5, 0.25) circle (2pt);
            \fill [black] (0.5, 0.0625) circle (2pt);
            \fill [black] (0.25, 0.0625) circle (2pt);
            \fill [black] (0.38, 0.02085136) circle (2pt);
            \fill [black] (0.144, 0.02085136) circle (2pt);
            \fill [black] (0.38, 0.144) circle (2pt);

            \node at (0.63,-0.0175) {$\tilde{x}_{\max}$};
            \node at (-0.07,0.15752961) {$\tilde{y}_{\max}$};
            \node at (0.5,-0.0175) {$z_{\max}$};
            \node at (0.25,-0.0175) {$x_{\max}$};
            \node at (-0.07,0.0625) {$y_{\max}$};
            \node at (0.38,-0.0175) {$\tilde{x}$};
            \node at (0.1444,-0.0175) {$x$};
            \node at (-0.04,0.02085136) {$y$};

            \node at (0.43,0.23) {$G$};
            \node at (0.64,0.23) {$\tilde{G}$};
            
        \end{axis}
        \end{tikzpicture}
        \caption{$\varepsilon \leq y_{\max}$, and $y \in [\varepsilon, y_{\max}]$.}
    \end{subfigure}
    \begin{subfigure}{.45\textwidth}
        \centering
        \begin{tikzpicture}
        \begin{axis}[
            xmin=0, xmax=0.48, 
            ymin=0, 
            width=.95\textwidth,
            axis lines=left,
            ticks=none,
            line width=0.7pt,
            axis line style={-{Stealth[angle'=45]}},
            samples=500,
            clip=false
        ]
            \addplot [black,mark=none,line width=0.7pt,domain=0:0.1764]{x^2};
            \addplot [black,mark=none,line width=0.7pt,domain=0:0.42]{x^4}; 
            
            \addplot [black,mark=none,line width=0.7pt,domain=0.35:0.42,densely dashed]{4*(0.4)^3*(x-0.4)+(0.4)^4};
            \addplot [black,mark=none,line width=0.7pt,domain=0.08:0.18,densely dashed]{2*0.128*(x-0.128)+(0.128)^2};
            
            \addplot [black,mark=none,line width=0.5pt,dashed]
                coordinates {
                    (0.4,0) (0.4,0.0256)
                };
            \addplot [black,mark=none,line width=0.5pt,dashed]
                coordinates {
                    (0,0.0256) (0.4,0.0256)
                };
            \addplot [black,mark=none,line width=0.5pt,dashed]
                coordinates {
                    (0.128,0) (0.128,0.016384)
                };
            \addplot [black,mark=none,line width=0.5pt,dashed]
                coordinates {
                    (0.25,0) (0.25,0.00390625)
                };
            \addplot [black,mark=none,line width=0.5pt,dashed]
                coordinates {
                    (0,0.00390625) (0.25,0.00390625)
                };
            \addplot [black,mark=none,line width=0.5pt,dashed]
                coordinates {
                    (0.0625,0) (0.0625,0.00390625)
                };

            \fill [black] (0.4, 0.0256) circle (2pt); 
            \fill [black] (0.128,0.016384) circle (2pt); 
            \fill [black] (0.25,0.00390625) circle (2pt);
            \fill [black] (0.0625,0.00390625) circle (2pt);

            \node at (0.4,-0.0025) {$\tilde{x}_{\max}$};
            \node at (-0.045,0.0256) {$\tilde{y}_{\max}$};
            \node at (0.128,-0.0025) {$z_{\max}$};
            \node at (0.25,-0.0025) {$\tilde{x}$};
            \node at (0.0625,-0.0025) {$x$};
            \node at (-0.02,0.00390625) {$y$};
            
            \node at (0.127,0.0285) {$G$};
            \node at (0.37,0.0285) {$\tilde{G}$};
            
        \end{axis}
        \end{tikzpicture}
        \caption{$y_{\max} < \varepsilon \leq \tilde{y}_{\max}$, and $x < z_{\max}$.}
    \end{subfigure}
    \caption{Two key scenarios in the proof of \Cref{lmm: approximate concave}. (a) The first subcase of the \textbf{Main Case}. (b) \textbf{Boundary Case I}. }
    \label{fig:learning-concave}
\end{figure}

Specifically, we further consider three sub-cases depending on the value of $y$. 
\begin{enumerate}
\item $y \in [\varepsilon, y_{\max}]$.
    In this scenario, we have $\tilde{x} \leq z_{\max}$ because $\tilde{G}(\tilde{x}) = y \le y_{\max} = \tilde{G}(z_{\max})$.
    Hence, by \Cref{lmm: approximate convex}, we get that
    \[
        \tilde{G}(\tilde{x}) \leq G(\tilde{x}) \leq \tilde{G}(\tilde{x}) + \frac{\varepsilon^2}{2}
        ~.
    \]

    Let $g$ be a subgradient of $G$ at $x$. By the convexity of $G$:  
    \[
        g\cdot \left(\tilde{x}-x\right)\leq G(\tilde{x})-G(x) = G(\tilde{x})-\tilde{G}(\tilde{x}) \leq \frac{\varepsilon^2}{2}
        ~.
    \]
    We also have
    \[
        g \cdot \left(x-0\right) \geq G(x)-G(0) \geq \varepsilon
        ~,
    \]
    which implies
    \[
        g \ge \frac{\varepsilon}{x} \ge \varepsilon
        ~.
    \]
    Combining with the previous inequality, we derive that
    \[
        \tilde{x}-x \leq \frac{1}{g} \cdot \frac{\varepsilon^2}{2} \leq \frac{\varepsilon}{2}
        ~.
    \]
    \item $y \in (y_{\max}, \tilde{y}_{\max}]$.
        In this scenario, we have $\tilde{x} \leq \tilde{x}_{\max}$ because $\tilde{G}(\tilde{x}) = y \le \tilde{y}_{\max} = \tilde{G}(\tilde{x}_{\max})$.
        Meanwhile, we have $x > x_{\max}$ because $G(x) = y \ge y_{\max} = G(x_{\max})$.
        Therefore, we have
        \[
            \tilde{x} - x \leq \tilde{x}_{\max} - x_{\max} = \left(\tilde{x}_{\max} - z_{\max}\right) + \left(z_{\max} - x_{\max}\right)
            ~.
        \]

        By the relation between $\tilde{x}_{\max}$ and $z_{\max}$, they differ by at most $\nicefrac{\varepsilon^4}{64}$.
        Recall that $G(x_{\max}) = \tilde{G}(z_{\max}) = y_{\max}$.
        By the first sub-case, we have
        \[
            z_{\max} - x_{\max} \le \frac{\varepsilon}{2}
            ~.
        \]
        Combining everything together, we get that
        \[
            \tilde{x} - x \le \frac{\varepsilon^4}{64} + \frac{\varepsilon}{2} < \frac{3\varepsilon}{4}
            ~.
        \]
    \item $y \in (\tilde{y}_{\max}, 1]$.
        We have
        \[
            \tilde{x}-x = \left( \tilde{x} - \tilde{x}_{\max} \right) + \left(\tilde{x}_{\max} - x\right)
            ~.
        \]

        The first part is at most $\nicefrac{2}{\varepsilon^2}$ because $\nicefrac{2}{\varepsilon^2}$ is a subgradient of $\tilde{G}$ on $\tilde{x}_{\max}$, and thus, $\nicefrac{\varepsilon^2}{2}$ is a supergradient of $\tilde{F}$ on $\tilde{y}_{\max}$.
        The second part is at most $\nicefrac{3\varepsilon}{4}$ by the second sub-case, because $\tilde{G}(\tilde{x}) = \tilde{y}_{\max} < y = G(x)$.
        Putting together, we get that
        \[
            \tilde{x} - x \le \frac{\varepsilon^2}{2} + \frac{3\varepsilon}{4} \le \varepsilon
            ~.
        \]
\end{enumerate}

\subsubsection*{Boundary Case I: $\bm{y_{\max} < \varepsilon \leq \tilde{y}_{\max}}$.}

We have $x > x_{\max}$ and $\tilde{x} > z_{\max}$, because $G(x) = \tilde{G}(\tilde{x}) = y \ge \varepsilon > y_{\max} = G(x_{\max}) = \tilde{G}(z_{\max})$. 
If further $x < z_{\max}$, then by the convexity of function $G$, we have
\[
    \frac{z_{\max} - x}{z_{\max}} \leq \frac{G(z_{\max}) - G(x)}{G(z_{\max})} ~.
\]

Further note that $G(z_{\max}) - y_{\max} = G(z_{\max}) - \tilde{G}(z_{\max}) \leq \nicefrac{\varepsilon^2}{2}$ by the approximation guarantee of functions $G$ and $\tilde{G}$ at $z_{\max}$. 
Combining with $G(x) = y \geq \varepsilon$, the right-hand side is at most
\[
    \frac{y_{\max} + \nicefrac{\varepsilon^2}{2} - \varepsilon}{y_{\max} + \nicefrac{\varepsilon^2}{2}} < \frac{\nicefrac{\varepsilon^2}{2}}{\varepsilon + \nicefrac{\varepsilon^2}{2}} = \frac{\varepsilon}{2 + \varepsilon}
    ~.
\]

Therefore, we obtain that
\[
    x > z_{\max} \left( 1 - \frac{\varepsilon}{2 + \varepsilon} \right) \ge z_{\max} - \frac{\varepsilon}{2}
    ~.
\]

We now consider two sub-cases depending on the value of $y$. 
\begin{enumerate}
\item $y \in [\varepsilon, \tilde{y}_{\max}]$.
    Similar to the second sub-case of the \textbf{Main Case}, we have
\begin{align*}
    \tilde{x} - x \leq \tilde{x}_{\max} - x = \left(\tilde{x}_{\max} - z_{\max}\right) + (z_{\max} - x) \leq \frac{\varepsilon^4}{64} + \frac{\varepsilon}{2} < \frac{3}{4}\varepsilon ~. 
\end{align*}
\item $y \in (\tilde{y}_{\max}, 1]$.  
    Similar to the third sub-case of the \textbf{Main Case}, we have
    \begin{align*}
        \tilde{x}-x = \left(\tilde{x} - \tilde{x}_{\max}\right) + \left(\tilde{x}_{\max} - x\right) 
        \leq \frac{\varepsilon^2}{2} + \frac{3\varepsilon}{4}
        \leq \varepsilon
        ~.
    \end{align*}
\end{enumerate}

\subsubsection*{Boundary Case II: $\bm{\tilde{y}_{\max} < \varepsilon}$.}

We have $x > z_{\max}$  because $G(x) = y \ge \varepsilon > \tilde{y}_{\max} = \tilde{G}(\tilde{x}_{\max})\geq y_{\max} = \tilde{G}(z_{\max})$.
Thus, for an $y \in [\varepsilon, 1]$, we have
\begin{align*}
    \tilde{x}-x \leq \tilde{x} - z_{\max}
    \leq \left(\tilde{x} - \tilde{x}_{\max}\right) + \left(\tilde{x}_{\max} - z_{\max}\right)
    ~.
\end{align*}

The first part on the right at is most $\nicefrac{\varepsilon^2}{2}$ because $\nicefrac{2}{\varepsilon^2}$ is a subgradient of $\tilde{G}$ on $\tilde{x}_{\max}$, and thus, $\nicefrac{\varepsilon^2}{2}$ is a supergradient of $\tilde{F}$ on $\tilde{y}_{\max}$.
The second part is at most $\nicefrac{\varepsilon^2}{64}$ by the relation between $\tilde{x}_{\max}$ and $z_{\max}$.
Hence, we get that
\[
    \tilde{x}-x \leq  \frac{\varepsilon^2}{2} + \frac{\varepsilon^4}{64} 
    \leq \varepsilon
    ~.
\]

\subsection{Proof of \Cref{lem:conclude-initialization}}
\label{app:conclude-initialization}

Let $\delta' > 0$ be an auxiliary error probability, which we will use as the failure probability upper bound of each substep. 
Further, consider two auxiliary approximation parameters $\varepsilon_1, \varepsilon_2 > 0$.

By \Cref{lem:constructing-subgradient-query}, we can implement a single query to an $\varepsilon_1$-subgradient query oracle, with failure probability $\delta'$, using
\[
    O \left( \frac{\log \nicefrac{1}{\delta'}}{\varepsilon_1^2} \right)
\]
contract queries.
Further, let the auxiliary approximation parameters satisfy $\varepsilon_1 = \nicefrac{\varepsilon_2^4}{32}$.
By \Cref{lmm: approximate concave}, for each complementary CDF $F(\omega|c)$, we can learn function $F(\omega|c)$ up to an additive approximation error of $\varepsilon_2$ at any action cost $c \ge \varepsilon_2$, using 
\[
    O\left(\frac{\log \nicefrac{1}{\varepsilon_2}}{\varepsilon_2^2}\right)
\]
queries to the $\varepsilon_1$-subgradient query oracle.
By the relation between $\varepsilon_1$ and $\varepsilon_2$, since we need to learn $m$ such functions, this translates to
\[
    O\left(\frac{m \log \nicefrac{1}{\delta'} \log \nicefrac{1}{\varepsilon_2}}{\varepsilon_2^{10}}\right)
\]
contract queries, and a failure probability $\textrm{poly}(m, \nicefrac{1}{\varepsilon_2}) \cdot \delta'$ by union bound.

At last, notice that our robustification process (Lemma~\ref{lem:robustification}) requires that the Kolmogorov distance between the true and learned outcome distributions to be at most $O(\nicefrac{\varepsilon^2}{mH})$.
By letting $\varepsilon_2$ equal this value, we need
\[
    O\left(\frac{m^{11} H^{10} \log \nicefrac{1}{\delta'} \log \nicefrac{mH}{\varepsilon}}{\varepsilon^{20}}\right)
\]
contract queries, with the failure probability $\textrm{poly}(m, H, \nicefrac{1}{\varepsilon}) \cdot \delta'$.

The lemma now follows by letting $\delta' = \nicefrac{\delta}{\textrm{poly}(m, H, \nicefrac{1}{\varepsilon})}$.

\subsection{Proof of \Cref{lem: efficient computing of optimal contract}}

\label{app: efficient computing of optimal contract}

We mainly work on the increment representation of the values and the contracts, to simplify the notation. Specifically, let $v_{-1}=p_{-1}=0$. Given a contract $p$, we define for each $\omega \in \Omega$, 
\[
    u_\omega = v_\omega- v_{\omega - 1}~, ~~~~\text{and}~~~~
    q_\omega = p_{\omega}-p_{\omega-1}
    ~.
\]
Then we have 
\[
    \sum_{\omega \in \Omega} \tilde{f}(\omega | c)\cdot v_{\omega} = \sum_{\omega \in \Omega}\tilde{F}(\omega | c)\cdot u_{\omega}~, ~~~~\text{and}~~~~ \sum_{\omega \in \Omega} \tilde{f}(\omega | c)\cdot p_{\omega} = \sum_{\omega \in \Omega}\tilde{F}(\omega | c)\cdot q_\omega
    ~. 
\]

For $H$-bounded contracts, we have $u_\omega \in [0,1]$ and
\[
q_0 \in [0,H]~,~~\text{and}~~q_\omega \in [-H,H]~.
\]

Denote the total sum of the number of breakpoints of $\tilde{F}(\omega|c)$ as $k$. By construction, we know that $k = O(\nicefrac{m^3H^2\log\nicefrac{mH}{\varepsilon}}{\varepsilon^4})$.  We claim that every action $c$ that is not a breakpoint must be dominated by an action that is a breakpoint, which we will prove later. Therefore, label the breakpoints by $c_1\leq \cdots\leq c_k$. For each $c_i$, solve the following linear program (LP),
\begin{align*}
\min_{q \in [0,H]\times[-H,H]^{m-1}} \quad & \sum^{m-1}_{\omega=0}\tilde{F}(\omega|c_i)\cdot q_\omega, \\
\text{s.t.} \quad
& \sum^{m-1}_{\omega=0}\tilde{F}(\omega|c_i)\cdot q_\omega - c_i \geq \sum^{m-1}_{\omega=0}\tilde{F}(\omega|c_j)\cdot q_\omega - c_j, ~~\forall 1 \leq j \leq k.
\end{align*}
An optimal contract corresponds to an optimal solution that maximizes the principal utilities among the optimal solutions of all LPs. 

To prove the above claim, now consider an action $c=\alpha \cdot c_{i+1} + (1-\alpha)\cdot c_{i}$ with $0 < \alpha < 1$. Therefore $\tilde{F}(\omega |c) = \alpha \cdot \tilde{F}(\omega |c_{i+1}) + (1-\alpha)\cdot \tilde{F}(\omega |c_i)$, for any $\omega \in \Omega$. 
 For contract $p$, assume that $\sum^{m-1}_{\omega=0} \tilde{F}(\omega |c_{i+1})\cdot q_\omega - c_{i+1} \geq \sum^{m-1}_{\omega=0} \tilde{F}(\omega | c_{i})\cdot q_\omega - c_i$ without loss of generality. Then,
\begin{align*}
    &\mathrel{\phantom{=}} \left(\sum^{m-1}_{\omega=0}\tilde{F}(\omega|c_{i+1})\cdot q_\omega - c_{i+1}\right) - \left(\sum^{m-1}_{\omega=0}\tilde{F}(\omega|c)\cdot q_\omega - c\right)\\
&=~(1-\alpha) \left[\left(\sum^{m-1}_{\omega=0} \tilde{F}(\omega |c_{i+1})\cdot q_\omega - c_{i+1} \right) - \left(\sum^{m-1}_{\omega=0} \tilde{F}(\omega | c_{i})\cdot q_\omega - c_i\right)\right]\geq 0
    ~.
\end{align*}
As a result, $c$ is dominated by $c_{i+1}$. 

\subsection{Proof of \Cref{lem: maximum number of iteration}}
\label{app: maximum number of iteration}

    Suppose for contradiction that the algorithm runs for more than $\nicefrac{576m^2H}{\varepsilon^2}$ iterations.
    Then, more than $\nicefrac{576m^2H}{\varepsilon^2}$ actions $\tilde{a}_i$ have been added to the empirical instance $\tilde{A}$ with cost $0$.
    Further, for each action $\tilde{a}_i \in \tilde{A}$, let $a_i \in A$ be its $\nicefrac{\varepsilon}{3}$-approximation in the original action space, which is the agent's optimal action before we add action $\tilde{a}_i$ to $\tilde{A}$.

    We next argue that these actions $a_i$'s are bounded away from each other, and derive a contradiction by a packing argument.

    Consider the contract $p^i$ that we feed to the oracle before adding $\tilde{a}_i$ to $\tilde{A}$. Since we have estimated the outcome distribution up to total variation distance $\nicefrac{\varepsilon^2}{576mH}$, by \Cref{lem:tv-expectation}, we estimate the principal's expected utility up to an additive error of
    \[
        \frac{\varepsilon^2}{288m} < \frac{\varepsilon}{6}
        ~.
    \]

    Further, by that the estimated principal's expected utility of $p^i$ is at most $\OPT_H(\tilde{I}) - \nicefrac{\varepsilon}{2}$, we get that the actual expected utility of contract $p^i$ is less than
    \[
        \OPT_H(\tilde{I}) - \frac{\varepsilon}{2} + \frac{\varepsilon}{6} = \OPT_H(\tilde{I})  - \frac{\varepsilon}{3}
        ~.
    \]
    
    By the guarantee of the \emph{Robustification} procedure (\Cref{lem:robustification}), this implies that action $a_i$ does not have an $\nicefrac{\varepsilon}{3}$-approximation in the empirical instance $\tilde{I}$ before we add $\tilde{a}_i$.
    In particular, the outcome distribution $D_{a_i}$ and the outcome distribution $\tilde{D}_{\tilde{a}_j}$ of any action $\tilde{a}_j$ that is already in the empirical instance $\tilde{I}$ must have a Kolmogorov distance of at least
    \[
        \frac{\varepsilon^2}{288mH}
        ~.
    \]

    Further combining it with the fact that the Kolmogorov distance between $D_{a_j}$ and $\tilde{D}_{\tilde{a}_j}$ is at most $\nicefrac{\varepsilon^2}{576mH}$, we get that
    \[
        \Kol \left( D_{a_i}, D_{a_j} \right) \ge \frac{\varepsilon^2}{288mH} - \frac{\varepsilon^2}{576mH} = \frac{\varepsilon^2}{576mH}
        ~.
    \]

    Finally, suppose without loss of generality that the subscripts of $a_i$'s are in ascending order of the action costs.
    Then, combining the Kolmogorov distance with FOSD, for any $i$ we have
    \[
        \sum_{\omega \in \Omega} \left( F(\omega | a_{i+1}) - F(\omega | a_{i}) \right) \ge \max_{\omega \in \Omega} ~ \left( F(\omega | a_{i+1}) - F(\omega | a_{i}) \right) = \Kol\left(D_{a_i}, D_{a_{i+1}}\right) \ge \frac{\varepsilon^2}{576mH}
        ~.
    \]

    Since functions $F(\omega|a)$'s are bounded between $0$ and $1$ and there are $m$ outcomes $\omega \in \Omega$, there can be at most $\nicefrac{576m^2H}{\varepsilon^2}$ different actions subject to the above constraints.
    This leads to a contradiction and finishes the proof of the lemma.

\subsection{Detailed Computation of $G(z_{i_{\max}})-\hat{G}(z_{i_{\max}})$ and $\hat{G}(z_{i_{\max}}) - \tilde{G}(z_{i_{\max}})$ in \Cref{lmm: approximate convex}}
\label{app: approximate convex}
For the first part, we expand it as \begin{align*}
    \mathrel{\phantom{=}} G(z_{i_{\max}})-\hat{G}(z_{i_{\max}}) 
    = \left( G(z_{-i_{\max}})-\hat{G}(z_{-i_{\max}}) \right) + \sum_{i=-i_{\max}}^{i_{\max}-1} \left(\left(G(z_{i+1}) - G(z_i)\right) - \left( \hat{G}(z_{i+1}) - \hat{G}(z_{i})\right) \right)
    ~.
\end{align*}

Now, the subgradient of $G$ in $[0, z_{-i_{\max}}]$ is at most $e^{-\nicefrac{i_{\max} \varepsilon}{4}} = \nicefrac{\varepsilon}{4}$.
Hence, the first term on the right-hand side is at most
\[
    G(z_{-i_{\max}})-\hat{G}(z_{-i_{\max}}) = G(z_{-i_{\max}}) \le \frac{\varepsilon}{4} \cdot z_{-i_{\max}} \le \frac{\varepsilon}{4}
    ~.
\]

We next upper bound the summation.
The subgradient of $G$ in $[z_i, z_{i+1}]$ is at most $e^{\nicefrac{(i+1)\varepsilon}{4}}$, while the subgradient of $\hat{G}$ in this range equals $e^{\nicefrac{i\varepsilon}{4}}$.
Hence, they differ by at most an $e^{\nicefrac{\varepsilon}{4}}$ multiplicative factor.
Therefore
\begin{align*}
    \sum_{i=-i_{\max}}^{i_{\max}-1} \left(\left(G(z_{i+1}) - G(z_i)\right) - \left(\hat{G}(z_{i+1}) - \hat{G}(z_{i})\right) \right)
    & \le \left(1 - e^{-\nicefrac{\varepsilon}{4}}\right) \sum_{i=-i_{\max}}^{i_{\max}-1}\left(G(z_{i+1}) - G(z_i)\right) \\
    & \le \left(1 - e^{-\nicefrac{\varepsilon}{4}}\right) G(z_{i_{\max}}) \le 1 - e^{-\nicefrac{\varepsilon}{4}} \le \frac{\varepsilon}{4}
    ~.
\end{align*}

Putting together, we derive that
\[
    \hat{G}(z_{i_{\max}}) - G(z_{i_{\max}}) \le \frac{\varepsilon}{2}
    ~.
\]
We are left to consider the second part, i.e., $\hat{G}(z_{i_{\max}})-\tilde{G}(z_{i_{\max}})$.
By our construction of $\hat{G}$
\begin{align*}
    \hat{G}(z_{i_{\max}})
    &
    = \sum_{i=-i_{\max}}^{i_{\max}-1} \left(z_{i+1} - z_i\right) \cdot e^{\nicefrac{i\varepsilon}{4}} \\
    &
    = z_{i_{\max}} \cdot e^{\nicefrac{(i_{\max}-1)\varepsilon}{4}} - \sum_{i=-i_{\max}+1}^{i_{\max}-1} z_i \cdot \left( e^{\nicefrac{i\varepsilon}{4}} - e^{\nicefrac{(i-1)\varepsilon}{4}} \right) - z_{-i_{\max}}\cdot \frac{\varepsilon}{4}~.
\end{align*}

With a similar calculation, we have
\begin{align*}
    \tilde{G}(z_{i_{\max}})
    &
    = \left(z_{i_{\max}} - \min \left\{ \tilde{x}_{i_{\max} - 1}, z_{i_{\max}} \right\} \right)\cdot e^{\nicefrac{(i_{\max}-1)\varepsilon}{4}} \\
    & \qquad 
    + \sum_{i=-i_{\max}}^{i_{\max}-2} \left( \min \left\{ \tilde{x}_{i+1}, z_{i_{\max}} \right\} - \min \left\{ \tilde{x}_i, z_{i_{\max}} \right\} \right) \cdot e^{\nicefrac{i\varepsilon}{4}} \\
    &= z_{i_{\max}} \cdot e^{\nicefrac{(i_{\max}-1)\varepsilon}{4}} - \sum_{i=-i_{\max}+1}^{i_{\max}-1} \min \left\{ \tilde{x}_i, z_{i_{\max}} \right\} \cdot \left( e^{\nicefrac{i\varepsilon}{4}} - e^{\nicefrac{(i-1)\varepsilon}{4}} \right)\\
    & \qquad - \min \left \{\tilde{x}_{-i_{\max}}, z_{i_{\max}}\right\} \cdot \frac{\varepsilon}{4}~. 
\end{align*}

Hence, their difference is at most
\begin{align*}
    \hat{G}(z_{i_{\max}})-\tilde{G}(z_{i_{\max}}) &=\sum_{i=-i_{\max}+1}^{i_{\max}-1} \left( \min \left\{ \tilde{x}_i, z_{i_{\max}} \right\} - z_i \right) \left( e^{\nicefrac{i\varepsilon}{4}} - e^{\nicefrac{(i-1)\varepsilon}{4}} \right) \\
    & \qquad + \left (\min\left\{ \tilde{x}_{-i_{\max}}, z_{i_{\max}} \right\}-z_{-i_{\max}}\right)\cdot \frac{\varepsilon}{4} \\
    &\leq \frac{\varepsilon^2}{16} \cdot e^{\nicefrac{(i_{\max}-1) \varepsilon}{4}}  + \frac{\varepsilon}{4} 
    \le \frac{\varepsilon^2}{16} \cdot \frac{4}{\varepsilon} + \frac{\varepsilon}{4} 
    = \frac{\varepsilon}{2}
    ~.
\end{align*}
Thus, for any $0 \leq x \leq z_{i_{\max}}$, $G(x)-\tilde{G}(x) \leq G(z_{i_{\max}})-\tilde{G}(z_{i_{\max}}) \leq \varepsilon$. 
This proves the lemma. 
 \section{Missing Proofs from Section \ref{sec:bounded-approximation}}
\label{app:approximation}

\subsection{Proof of \texorpdfstring{\Cref{thm:hardness-multiplicative}}{Theorem 4.1}}
\label{app:hardness-multiplicative}

We first restate the instance below.
It is parameterized by $\varepsilon > 0$, with $m = 3$ outcomes.
The principal has value $v_0 = 0$ for outcome $0$ and $v_1 = v_2 = 1$ for outcomes $1$ and $2$.
The agent has a continuum of actions with $0 \leq a \leq \ln \nicefrac{1}{\varepsilon}$, such that 
\[
    c_a = \varepsilon (e^a - 1 - a) 
    ~,
\]
and the corresponding outcome distribution is defined by
\begin{gather*}
    F(1|c_a) = \varepsilon e^a ~, \quad F(2|c_a) = \frac{\varepsilon}{2H} c_a = \frac{\varepsilon^2}{2H}\left(e^a - 1 - a\right)
    ~.
\end{gather*}

We first bound the optimal principal's utility by general contracts.
Consider a contract $p$ with
\[
    p_0 = p_1 = 0 ~, \quad p_2 = \frac{2H}{\varepsilon}
    ~, 
\]
under which the agent gets zero utility from all actions. Breaking ties in favor of the principal, the agent takes action $a^*(p) = \ln \nicefrac{1}{\varepsilon}$.
The principal's utility is therefore  $\varepsilon (1 + \ln \nicefrac{1}{\varepsilon})$.

We next show that the principal's utility is at most $3 \varepsilon$ if the contract $p$ is $H$-bounded.
First, consider any contract $p$ with $p_1 > 1 - \varepsilon$.
The principal's utility is
\[
    f(1|a) \cdot \left( 1 - p_1 \right) + f(2|a) \cdot \left( 1 - p_2 \right)
    \le 1 \cdot \varepsilon + \frac{\varepsilon}{2H} \cdot 1 < \frac{3}{2}\varepsilon
    ~.
\]

Next, consider any contract with $p_1 \le 1-\varepsilon$.
The agent's best-response action $a^*(p)$ maximizes its utility, i.e., $f(1|a) \left(1-p_1\right) + f(2|a) \left(1-p_2\right) - c_a$, which equals
\[
    \varepsilon e^a \cdot p_1 + \frac{\varepsilon^2}{2H}\left(e^a - 1 - a\right)\cdot (p_2 - p_1) - \varepsilon(e^a - 1 - a)
    ~.
\]

Changing variables with $y = \nicefrac{\varepsilon(p_2 - p_1)}{2H}$, this is equivalent to maximize 
\[(1 - y)(1 + a) - (1 - p_1 - y)e^a ~.\]

Thus, we have 
\[
    a^*(p) = \ln ~ \min ~ \left\{ ~ \frac{1}{\epsilon} ~, \frac{1 - y}{1 - p_1 - y} ~ \right\}
    ~.
\]

For any action $a$, the principal's utility can be written as $F(1|a) \cdot \left(1-p_1\right) + F(2|a) \cdot \left(p_1-p_2\right)$.
The first term is $\varepsilon e^a \cdot (1 - p_1)$.
The second term is at most $\nicefrac{\varepsilon}{2}$ because $F(2|a) < \nicefrac{\varepsilon}{2H}$ and $p_1-p_2 \le H$.
Hence, for $a = a^*(p)$, the principal's utility is at most
\begin{align*}
    \varepsilon e^{a^*(p)} \cdot (1 - p_1) + \frac{\varepsilon}{2}
    \leq 
    \frac{\varepsilon (1 - y)(1 - p_1)}{1 - p_1 - y} + \frac{\varepsilon}{2} 
    \leq 
    2\varepsilon + \frac{\varepsilon}{2} < 3\varepsilon ~,
\end{align*}
where the last inequality follows by $p_1 \leq 1 - \varepsilon$ and $y \leq \nicefrac{\varepsilon}{2}$, which holds because the payments are bounded between $0$ and $H$.
The theorem then follows by choosing $\varepsilon < e^{-3\alpha}$.

\subsection{Proof of \texorpdfstring{\Cref{thm:hardness-additive}}{Theorem 4.4}}
\label{app:hardness-additive}

Recall that the instance has $m = 3$ outcomes and $n = 3$ actions.
The principal has values $v_0 = 0$ and $v_1 = v_2 = 1$ for outcomes $1$ and $2$.
For a sufficiently small $\varepsilon > 0$, the agent's action costs and the corresponding outcome distributions, omitting the null action and null outcome, are given in the following table.
It is easy to verify that the instance satisfies FOSD and CDFP.

\begin{center}
    \begin{tabular}{ccc}
        \toprule
        \text{\rm Cost} & $F(1|c)$ & $F(2|c)$ \\
        \midrule
        $c_1 = \varepsilon$ & $\nicefrac{1}{2}$ & $\nicefrac{4\varepsilon^2}{H}$ \\
        $c_2 = \nicefrac{1}{4}$ & $1$ & $\nicefrac{\varepsilon}{H}$ \\
        \bottomrule
    \end{tabular}
\end{center}

Consider any contract defined by payments $p_0 = 0$ and $p_1, p_2 \ge 0$.
The agent's utilities for the three actions are $0$ for the null action $0$, and 
\[
    \text{action } 1 ~ \to ~ \frac{1}{2} \cdot p_1 + \frac{4\varepsilon^2}{H} \cdot (p_2 - p_1) - \varepsilon
    ~, \quad
    \text{action } 2 ~ \to ~ 1 \cdot p_1 + \frac{\varepsilon}{H} \cdot (p_2 - p_1) - \frac{1}{4}
    ~.
\]

The optimal contract is $p_1 = 0$ and $p_2 = \nicefrac{H}{4\varepsilon}$.
The agent's utilities are $0$ for all three actions according to the above formulas.
Breaking ties in favor of the principal, the agent takes action $2$.
The principal's utility equals $1 - \nicefrac{1}{4} = \nicefrac{3}{4}$, the 
welfare of this action.

It remains to show that for any contract with payments upper bounded by $H$, the principal's utility is strictly less than $\nicefrac{3}{4} - \beta$ for any $\beta < \nicefrac{1}{4}$.
Consider any contract under which the agent chooses action $1$.
The principal's utility is upper bounded by $\nicefrac{1}{2}$, the 
welfare of this action.
Next, consider a contract that incentivizes action $2$.
To drive the agent to choose action $2$ over action $1$, a contract must satisfy:
\begin{equation}
    \label{eqn:additive-approximation-hardness}
    \frac{1}{2} \cdot p_1 + \frac{\varepsilon}{H} (1 - 4\varepsilon)\cdot (p_2 - p_1) \ge \frac{1}{4}(1 - 4\varepsilon)
    ~.
\end{equation}

Hence, the payment is at least:
\begin{align*}
    1 \cdot p_1 + \frac{\varepsilon}{H} \cdot (p_2 - p_1)
    &
    = 2 \cdot \left( \frac{1}{2} \cdot p_1 + \frac{\varepsilon}{H} (1 - 4\varepsilon)\cdot (p_2 - p_1) \right) - \frac{\varepsilon}{H} (1 - 8\varepsilon) \cdot (p_2 - p_1) \\
    &
    \ge\frac{1}{2}(1 - 4\varepsilon) - \varepsilon (1 - 8\varepsilon) 
    ~,
\end{align*}
where the inequality follows by \Cref{eqn:additive-approximation-hardness} and $0 \le p_1, p_2 \le H$.
Therefore, the principal's utility is at most $1 - \nicefrac{1}{2}(1 - 4\varepsilon) + \varepsilon (1 - 8\varepsilon) \to 1/2$ with $\varepsilon \to 0$. This finishes the proof of the theorem.

\subsection{Proof of \texorpdfstring{\Cref{thm: mixed approximation}}{Theorem 4.5}}
\label{app:mixed-approximation}
	
In the following proofs, we will refer to the linear contract defined by ratio $\rho$ as linear contract $\rho$. For $1 \le i \le k$ where $k = \log \nicefrac{1}{\varepsilon}$, consider linear contract $\rho_i = 1-2^{-i}$.
Let $c_i$ denote the cost of the agent's action incentivized by linear contract $\rho_i$, and $V_i$ be the expected principal's value under that contract. 
For any $i = 1,\dots,k$, by the agent's IC property we have
\[
	\rho_i V_{i+1} -  c_{i+1} \leq \rho_i V_i - c_i < V_i - c_i		
	~,
\]

Therefore, we have \[
	(V_{i+1}-c_{i+1}) - (V_i-c_i) < (1-\rho_i) V_{i+1} = 2(1-\rho_{i+1}) V_{i+1}
	~.
\]

With a telescoping sum over $i = 1\dots, k-1$, we obtain that
\begin{equation*}
	V_k-c_k \leq 2\sum^{k}_{i=2} (1-\rho_i) V_{i} + (V_1-c_1) \leq 2\sum^{k}_{i=2} (1-\rho_i) V_{i} + 2 (1-\rho_1) V_1
	~,  
\end{equation*}
where the last inequality holds because $c_1 \ge 0$ and $\rho_1 = \nicefrac{1}{2}$.
Further, $(1-\rho_i) V_i$ is the principal's utility from linear contract $\rho_i$.
Hence, we get that
\[
    V_k - c_k \leq 2\log\frac{1}{\varepsilon} \,\cdot\, \lin
    ~.
\]

Now consider action $a^* \in \argmax_{a\in A} ~ (V_a-c_a)$. Let $V_{a^*}$ denote the expected principal's value for the outcome distribution of action $a^*$. Since $\OPT$ is upper bounded by $\max_{a\in A} (V_a-c_a)$, it remains to bound 
the differences between $V_{a^*}-c_{a^*}$ and $V_k - c_k$. 
By the agent's IC property for linear contract $\rho_k$, we have $\rho_k V_k - c_k \ge \rho_k V_{a^*} - c_{a^*}$.
Rearranging terms gives
\begin{equation*}
	\left(V_{a^*}-c_{a^*}\right) - \left(V_k - c_k\right) \leq \left(\frac{1}{\rho_k}-1 \right)(c_{a^*}-c_k) \leq \frac{\varepsilon}{1-\varepsilon} \cdot 1 \le 2\varepsilon
	~. 
\end{equation*}
The proof is completed by summing up the above two inequalities. 
 
\bibliographystyle{alphaurl}
\newcommand{\etalchar}[1]{$^{#1}$}

\end{document}